\documentclass[11pt]{article}

\usepackage{geometry}
\geometry{verbose,letterpaper,tmargin=22mm,bmargin=22mm,lmargin=22mm,rmargin=22mm}

\usepackage{amssymb}
\usepackage{mathrsfs}
\usepackage{amsmath}
\usepackage[square]{natbib}
\usepackage{enumerate}

% Package to generate and customize Algorithm as per ACM style
\usepackage[ruled]{algorithm2e}

\SetAlFnt{\small}
\SetAlCapFnt{\small}
\SetAlCapNameFnt{\small}
\SetAlCapHSkip{0pt}
\IncMargin{-\parindent}

\newtheorem{theorem}{Theorem}[section]

\newtheorem{example}{Example}[section]
\newtheorem{corollary}[theorem]{Corollary}
\newtheorem{lemma}[theorem]{Lemma}
\newtheorem{proposition}[theorem]{Proposition}
\newtheorem{claim}[theorem]{Claim}
\newtheorem{definition}[theorem]{Definition}
\newtheorem{remark}[theorem]{Remark}

\def\squarebox#1{\hbox to #1{\hfill\vbox to #1{\vfill}}}
\newcommand{\qed}{\hspace*{\fill}
\vbox{\hrule\hbox{\vrule\squarebox{.667em}\vrule}\hrule}\smallskip}

\newenvironment{proof}{\noindent{\bf Proof:~~}}{\(\qed\)}

\newcommand{\we}{competitive equilibrium}

\newcommand{\promising}{high-demand }

\newcommand{\fcwe}{competitive bundling equilibrium}
\newcommand{\fcwa}{competitive bundling equilibria}

\newcommand{\FCWE}{Competitive Bundling Equilibrium}
\newcommand{\pcbe}{partial \fcwe{}}
\newcommand{\Pcbe}{Partial \fcwe{}}
\newcommand{\opt}{\operatorname{OPT}}
\newcommand{\alg}{\operatorname{ALG}}
\newcommand{\poly}{\operatorname{poly}}

\newcommand{\bikh}{non-linear pricing equilibrium}
\newcommand{\bika}{non-linear pricing equilibria}

\newcommand{\BIKH}{Non-Linear Pricing Equilibrium}
\newcommand{\CAPTWO}{\operatorname{CAP2}}

\newcommand{\cI}{\mathcal{I}_M}
\newcommand{\cB}{\mathcal{B}}

\newcommand{\xa}{x_a}
\newcommand{\xb}{x_b}
\newcommand{\xab}{x_{ab}}
\newcommand{\ya}{y_a}
\newcommand{\yb}{y_b}
\newcommand{\yab}{y_{ab}}

\begin{document}

\title{Welfare and Revenue Guarantees for \\Competitive Bundling Equilibrium%
\thanks{We are grateful to Paul Milgrom, David Parkes and Tim Roughgarden for helpful discussions. Part of this work was done while authors 2-4 were visiting Microsoft Research Herzliya, Israel. I. Talgam-Cohen is supported by the Hsieh Family Stanford Interdisciplinary Graduate Fellowship. O. Weinstein is supported by the Simons Fellowship in Theoretical Computer Science.}
}

\author{Shahar Dobzinski\\
Weizmann Institute of Science
\and
Michal Feldman\\
Tel-Aviv University
\and
Inbal Talgam-Cohen\\
Stanford University
\and
Omri Weinstein\\
Princeton University}

%\date{}

\maketitle

\begin{abstract}
We study equilibria of markets with $m$ heterogeneous indivisible goods and $n$ consumers with combinatorial preferences. It is well known that a competitive equilibrium is not guaranteed to exist when valuations are not gross substitutes. Given the widespread use of bundling in real-life markets, we study its role as a stabilizing and coordinating device by considering the notion of \emph{competitive bundling equilibrium}: a competitive equilibrium over the market induced by partitioning the goods for sale into fixed bundles. Compared to other equilibrium concepts involving bundles, this notion has the advantage of simulatneous succinctness ($O(m)$ prices) and market clearance. 

Our first set of results concern welfare guarantees. We show that in markets where consumers care only about the number of goods they receive (known as multi-unit or homogeneous markets), even in the presence of complementarities, there always exists a competitive bundling equilibrium that guarantees a logarithmic fraction of the optimal welfare, and this guarantee is tight. We also establish non-trivial welfare guarantees for general markets, two-consumer markets, and markets where the consumer valuations are additive up to a fixed budget (budget-additive).

Our second set of results concern revenue guarantees. Motivated by the fact that the revenue extracted in a standard competitive equilibrium
may be zero (even with simple unit-demand consumers), we show that for natural subclasses of gross substitutes valuations, there always exists a competitive bundling equilibrium that extracts a logarithmic fraction of the optimal welfare, and this guarantee is tight. The notion of competitive bundling equilibrium can thus be useful even in markets which possess a standard competitive equilibrium.
\end{abstract}

%Author's addresses: S. Dobzinski, Department of Applied Math and Computer Science, Weizmann Institute of Science, Rehovot 76100, Israel; 
%M. Feldman, School of Computer Science, Tel Aviv University, P.O.B. 39040, Tel Aviv, Israel; 
%I. Talgam-Cohen, Computer Science Department, Stanford University, 353 Serra Mall, Stanford, CA 94305, USA; 
%O. Weinstein, Department of Computer Science, Princeton University, 35 Olden Street, Princeton, NJ 08540-5233, USA.

\section{Introduction}
Competitive equilibria play a fundamental role in market theory and design -- they capture the market's steady states, in which each participant maximizes his profit at equilibrium prices, and supply equals demand. In this paper we focus on single-seller combinatorial markets, which consist of a set $M$ of $m$ heterogeneous goods, and a set $N$ of $n$ consumers. Each consumer $i$ has a valuation $v_i:2^M\rightarrow \mathbb R^+$ over bundles of goods. The standard assumptions are that each $v_i$ is normalized ($v_i(\emptyset)=0$) and monotone non-decreasing. In such markets, a competitive equilibrium is simply an allocation of the goods to the consumers, denoted by $(S_1,\ldots, S_n)$, and supporting item prices, denoted by $p_j$ for good $j$, such that:
\begin{enumerate}
\item \textbf{Profit Maximization:} The profit of every consumer $i$ is maximized by his allocation $S_i$; i.e., for every alternative set of goods $T$, $v_i(S_i)-\sum_{j\in S_i}p_j \geq v_i(T)-\sum_{j\in T}{p_j}$.
\item \textbf{Market Clearance:} All items are allocated; i.e., $\bigcup_iS_i=M$.%
\end{enumerate}
Unfortunately, a competitive equilibrium is guaranteed to exist only in limited classes of combinatorial markets, most notably the class in which all valuations are gross substitutes%
\footnote{See Section \ref{sec:two-buyers} for a definition of gross substitutes; see \citep{SY06,COP13} for other special cases in which a competitive equilibrium is guaranteed to exist.} % 
\citep{GS99,Mil04}.

Implicit in the definitions of a market and an equilibrium is the assumption that the goods are exogenously determined and indivisible. Yet in many markets this is not the case, and what is being sold on the market are actually divisible bundles of indivisible items \citep[cf.,][]{LM-10}. For example, there is no inherent reason for beer to be sold in $6$-packs rather than, say, $8$-packs. Bundling is ubiquitous in real-life markets: It is a well-known method for revenue extraction \citep{MV06}; for instance, many airlines set the price of a round-trip to be equal to the price of a one-way air ticket. It is also a common mean for avoiding the exposure problem related to complementarities; for example, in the online market for concert ticket resale \texttt{StubHub.com}, a seller holding two tickets may prohibit their separate resale, so that he may still enjoy the show with a friend in case there is no demand for both tickets. 

\emph{In this paper we study the role of bundling in market stabilization.} Bundling introduces new equilibria into the market, and thus can recover stability in markets that lack a competitive equilibrium. It can also give rise to stable states with better revenue properties. The main challenge is whether ``good'' bundlings exist, i.e., those which result in (approximately) optimal social efficiency and/or revenue extraction.

\subsection*{Equilibrium Concept and Related Work}

There is a rich literature concerning extensions of competitive equilibrium that allow bundling.

\citet{BO02} study competitive equilibria over bundles supported by $2^m$ non-linear and non-anonymous prices per consumer, and use an LP-based approach to characterize them. Auctions that reach such equilibria are studied by \citet{PU00,AM-02}. Non-linear but anonymous supporting prices, as well as auctions that reach them, exist in special cases, notably superadditive valuations \citep{BO02,LP09,SY14}. See also \citep{Par99,parkes02,Par06,Voh11}.

In this paper we focus on perhaps the simplest possible extension, in which supporting prices are linear and anonymous. A \emph{competitive bundling equilibrium} consists of a partition of the goods into bundles, denoted by $\mathcal{B}=(B_1,\ldots, B_{m'})$ and referred to as a \emph{bundling}, in addition to an allocation $(S_1,\ldots, S_n)$ of the bundles to the consumers , and a price $p_{B_j}$ for each bundle $B_j$, such that:
\begin{enumerate}
\item \textbf{Profit Maximization:} For every consumer $i$ and alternative set of bundles $T$, $v_i(S_i)-\sum_{B_j\in S_i}p_{B_j} \geq v_i(T)-\sum_{B_j\in T}p_{B_j}$.
\item \textbf{Market Clearance:} $\bigcup_iS_i=M$.
\end{enumerate}
(For simplicity we use here the notation $v_i(S_i)$ to refer to $v_i(\bigcup_{B_j\in S_i}B_j)$, and $\bigcup_iS_i$ to refer to $\bigcup_i\bigcup_{B_j\in S_i}B_j$.) 

Observe that unlike competitive equilibrium, a competitive bundling equilibrium always exists by bundling all goods together, but the welfare it achieves may reach only a $1/n$-fraction of the optimal unconstrained welfare. Hence, in this paper we are mainly occupied with seeking better competitive bundling equilibria.

Relevant to our work is a recent paper by \citet{FGL}, which shows that if one ignores the market clearance requirement, there always exist a bundling and (anonymous and linear) prices that achieve a striking 2-approximation to the optimal welfare. Their solution concept, termed \emph{combinatorial Walrasian equilibrium}, obeys only the profit maximization requirement, and so is closer to an algorithmic pricing solution than to a classic, supply-equals-demand equilibrium. This prompts them to pose the following question, which we address in this paper by quantifying the effect of the market clearance requirement: ``An important question is whether our results extend to the stronger equilibrium notion [...] with market clearance.'' Despite the dissimilarities, we are able to adapt in this paper some of the technical constructions of Feldman et al.

The market clearance problem is also considered by \citet{FL}, but with respect only to two restricted classes of valuations -- a strict subclass of budget-additive valuations, and the class of superaddivite valuations. For the former we provide a general treatment in Section \ref{sec:budget-additive}. For the latter, we re-derive their result as a corollary of a more general LP-based argument that appears in Section \ref{sec:LP}: \citet{Par99} observes that an LP formulation introduced by \citet{BO02} has an integrality gap of $1$ for superadditive valuations; we show that this implies the existence of a fully-efficient competitive bundling equilibrium.

\subsection*{Disadvantages of Competitive Bundling Equilibrium}

We stress we have no claim that competitive bundling equilibrium is a universally applicable solution concept. In particular, it may be inappropriate in three kinds of markets: 
\begin{enumerate}
\item Markets in which consumers are able and willing to break their bundles and resell the goods separately -- in such markets, \fcwe{} is too weak a notion to establish stability;
\item Markets which can reach a stable state even without clearing -- in such markets, \fcwe{} is too strong a notion to capture all stable outcomes;
\item Markets in which different producers are not able to bundle their products together (reducing the notion of \fcwe{} to the standard one of competitive equilibrium).
\end{enumerate}
Like the seminal model of \citet{BO02}, our model assumes resale restrictions to enforce bundling (similar to their ``crates cannot be opened'' condition), as well as market clearance even in single-seller markets, and the ability to  bundle items arbitrarily. However, as we now explain, both models are relevant in a wide variety of markets despite these restrictions. 

First, bundling is often legally or effectively irrevocable. The market for air tickets mentioned earlier is one example, as are retail markets in which bundles are explicitely marked as ``not for individual sale''. Another example is markets in which the physical packaging plays a role, such as sterilized infant products. As yet another example consider amusement park passes -- Disney sells bundles of several day passes which are activated upon first enterance, when a fingerprint is taken from the visitor; the rest of the passes can then only be used by the same visitor. 

Second, in many markets the requirement of market clearance is crucial for stability. The standard argument is that if the market is not cleared, competing producers have an incentive to undercut prices, leaving the market unstable \citep[for a thorough discussion see][Section 10.B]{MWG95}. Market clearance also plays a role when there is a single seller: While in some cases a monopolist can create a stable outcome in which goods that were already produced are not sold, despite unsaturated demand, in many other cases this would not be possible. Consider for example a governmental seller of a scarce resource such as spectrum or land; an outcome where the government withholds unutilized but demanded resources is unreasonable and unstable. As another example, consider a private, for-profit seller who cannot credibly commit to withholding goods. For instance, a seller of concert tickets may attempt to create demand and extract more revenue by withholding half the tickets, but once the designated tickets are sold for a high price, he has incentive to simply sell more tickets at a lower price (say, at the door). Many consumers, in turn, would refrain from buying the high-priced tickets in the first place. Note that even the extreme method of physically destroying retail goods as a means of credible commitment is often impractical \citep[see consumer backlash over practice of destroying unsold clothing,][]{Dwy10}.

Third, bundles of goods belonging to different producers are not uncommon in practice. For example, a travel website can offer bundles of air tickets, hotel rooms and car rental. An interesting open question is what are the market processes leading to such bundling \citep[cf.,][]{HKN+14}. 

\subsection*{Our Results}

We analyze two different aspects of competitive bundling equilibria. First, the existence of approximately welfare-maximizing equilibria for several classes of well-studied valuations \citep[for a taxonomy see, e.g.,][]{BN07}. Next, we turn our attention to revenue guarantees: can the seller bundle the goods in a way that will guarantee approximately-optimal revenue, compared to the full-welfare benchmark? In our results, the approximation parameter%
\footnote{Interdisciplinary readers will recall that an approximation parameter of $\alpha$ means the \fcwe{} achieves at least $\opt/\alpha$, where $\opt$ is achieved by the objective-maximizing allocation, unrestricted to equilibrium allocations. A smaller $\alpha$ means better approximation.} %
will often depend (often logarithmically) on the length of the market's shorter side%
\footnote{Note that if one side of the market is of constant length we achieve a constant approximation.} %
$\mu$, where
\begin{align}
\mu=\min\{n,m\}.
\end{align}

\subsubsection*{Welfare Maximization}

The positive results described below establish, for different settings, the existence of \emph{a} \fcwe{} with good welfare guarantees; note that by the first welfare theorem applied to the post-bundling market, \emph{all} competitive bundling equilibria with this bundling have these guarantees.

We begin our overview with multi-unit settings, i.e., markets in which all goods are identical, and consumers' values only depend on the number of units they receive. A classic result of \citet{vic-61} shows that if the consumers' valuations exhibit decreasing marginal utilities, then there always exists a competitive equilibrium. We are able to give a complete analysis for competitive bundling equilibria in multi-unit settings, without the assumption of decreasing marginal utilities, thus accommodating the notorious case of markets with complements. We prove (Section \ref{sec:multi-unit}):

\vspace{0.1in}\noindent \textbf{Theorem: } For every multi-unit market there exists a competitive bundling equilibrium that provides an $O(\log \mu)$-approximation to the optimal welfare. On the flip side, if even a single valuation is \emph{subadditive} rather than having decreasing marginal utilities, it may be the case that the approximation ratio of every competitive bundling equilibrium is $\Omega(\log \mu)$.

\vspace{0.1in}\noindent The second half of the above theorem implies that even for slightly more complicated valuations than those considered by Vickrey, it is not only the case that a (standard) competitive equilibrium may not exist, but moreover possibly every competitive bundling equilibrium does not provide a constant factor of the optimal welfare. We also show (Section \ref{apx:randomness}) that even allowing for randomization by introducing lotteries into the market is not guaranteed to improve the approximation factor; this is by an interesting connection to optimal mechanism design.

We then consider combinatorial markets with heterogeneous goods, for which we obtain the following results. We warm up by analyzing the two-consumer case, showing that (Section \ref{sec:two-buyers}):

\vspace{0.1in}\noindent \textbf{Theorem: }For every combinatorial market with $n=2$ subadditive consumers, there exists a competitive bundling equilibrium that realizes an efficient allocation. For $n=2$ general valuations, there exists a competitive bundling equilibrium that provides a $\frac 3 2$-approximation to the optimal welfare, and this factor is tight.

\vspace{0.1in}\noindent We prove the above theorem by reducing the market to a new market with only $m=2$ goods. In such a market, the set of subadditive valuations coincides with the set of gross substitutes valuations, for which a competitive equilibrium is guaranteed to exist. By the first welfare theorem this equilibrium is also efficient. For $n>2$ consumers with general valuations, we obtain a weaker bound (Section \ref{sec-general}):

\vspace{0.1in}\noindent \textbf{Theorem: }For every combinatorial market with $n$ general valuations and $m$ goods, there exists a competitive bundling equilibrium that provides an $\tilde{O}(\sqrt \mu)$-approximation to the optimal welfare.

\vspace{0.1in}\noindent Our $\Omega(\log n)$ lower bound for multi-unit markets clearly holds for general valuations, but we do not know if competitive bundling equilibria can guarantee a better approximation ratio in this setting. We also do not know whether it is possible to obtain better approximation ratios if the valuations are known to be subadditive or even submodular -- these are left as open questions. However, for \emph{budget-additive} valuations -- an interesting subclass of submodular valuations -- we are able to show the following (Section \ref{sec:budget-additive}):

\vspace{0.1in}\noindent \textbf{Theorem: }For every combinatorial market with $n$ budget-additive valuations and $m$ goods, there exists a competitive bundling equilibrium that provides an $O(\log \mu)$-approximation to the optimal welfare.

\vspace{0.1in}\noindent We do not know whether this bound is tight (the $\Omega(\log \mu)$ lower bound does not hold here since it requires a non-submodular valuation). However, we show an instance in which no competitive bundling equilibrium can guarantee an approximation ratio better than $\frac 5 4$. A lower bound of $\frac 8 7$ was previously established by \citet{FL}.

While our main focus is on the basic question of existence of competitive bundling equilibria with good welfare guarantees, an additional interesting problem is to develop algorithms for finding such equilibria. All of our results are constructive, and for most we present efficient algorithms that find an equilibrium with welfare close to the existential guarantees. Yet, for several settings such as combinatorial markets with budget-additive consumers, it is an open question to design such efficient algorithms.

\subsubsection*{Revenue Maximization}

Almost all of our bounds on the welfare are actually bounds on the revenue. For example, the $\tilde{O}(\sqrt \mu)$-approximation for welfare in markets with general valuations actually guarantees a competitive bundling equilibrium which extracts at least $\tilde{\Omega}(1/\sqrt \mu)$ of the optimal revenue. Since better bounds on revenue for the classes of valuations we consider immediately imply better bounds on the welfare, we focus on the class of gross substitutes valuations. When valuations are gross substitutes, a competitive equilibrium always exists, and by the first welfare theorem the equilibrium allocation is efficient. However, it is easy to see there are instances in which the revenue in every competitive equilibrium is $0$ (even for unit-demand valuations). 

A natural approach to revenue extraction is limiting the supply, which has the effect of increasing competition among the consumers \citep{RTY12}. However, recall we are dealing with markets that are unstable unless they clear; bundling should thus be done carefully to ensure market clearance. This is a non-trivial task since the class of gross substitutes is not closed under bundling. We focus on markets in which consumers have \emph{weighted matroid rank} valuations, an important subclass of gross substitutes valuations. We show that (Section \ref{sec:revenue}):

\vspace{0.1in}\noindent \textbf{Theorem: }Consider a combinatorial market with $m$ goods. For the following valuation settings, there exists a competitive bundling equilibrium that extracts as revenue an $\Omega(1/\log \mu)$-fraction of the optimal welfare:
\begin{enumerate}
\item All $n$ valuations are weighted rank functions of uniform matroids.
\item All $n$ valuations are weighted rank functions of a common matroid, but with different weights.
\end{enumerate}

\noindent Notice that the optimal welfare is an obvious upper bound on the possible revenue that one can generate from the market. We can show that both bounds are tight: even if the valuations are unit-demand (a special case of both settings), no competitive bundling equilibrium can guarantee better revenue. An interesting question is to determine whether it is possible to extend our revenue results to the class of all gross substitutes valuations.

\subsection*{Organization}
As a warm-up and to introduce some of our ideas, in Section \ref{sec:two-buyers} we give a complete analysis for two-consumer markets. Section \ref{sec:prep} develops several useful technical tools for welfare analysis. Section \ref{sec:multi-unit} gives a complete analysis of welfare guarantees in multi-unit markets. Section \ref{sec-general} discusses general markets, and Section \ref{sec:budget-additive} analyzes the case of budget-additive valuations. We proceed to address revenue guarantees in Section \ref{sec:revenue}. In Section \ref{sec:LP} we discuss the relationship between competitive bundling equilibrium and the LP-based solution of \citet{BO02}. Finally, in Section \ref{apx:randomness} we discuss whether randomization can help circumvent our impossibility results. Some of the proofs are deferred to Appendices \ref{apx:two-buyers} to \ref{apx:revenue}.

\section{Warm-Up: Welfare Maximization with Two Consumers}
\label{sec:two-buyers}

In this warm-up section we provide an exact analysis of markets with two consumers. We show that if both valuations are subadditive then there exists a \fcwe{} that is fully efficient. On the other hand, if at least one valuation is not subadditive, we can only guarantee an 
approximation ratio of $3/2$. We also show that this last bound is tight. 

Recall that a valuation $v$ is subadditive if for every two sets of items $T,U$ it holds that $v(T)+v(U)\ge v(T\cup U)$. A valuation $v$ is gross substitutes if for
every item-price vectors $\vec{q}\ge\vec{p}$, for every set of items $T$ in the demand set%
\footnote{Recall that given a valuation $v$ and prices $\vec p$, an item set $T$ is in the demand set if it maximizes the payoff, i.e., $v(T)-\sum_{j\in T}p_j= \max_{U\subseteq M}\{v(U)-\sum_{j\in U}p_j\}$.} %
given prices $\vec{p}$, there exists a set $U$ in the demand set given prices $\vec{q}$, such that all items $j\in T$ whose price did not increase ($q_j=p_j$) belong to $U$. 

\begin{lemma}\label{lem:subadditive}
For every combinatorial market with $n$ subadditive consumers, there exists a \fcwe{} that provides an $\frac n 2$-approximation to the optimal welfare.
\end{lemma}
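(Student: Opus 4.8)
The plan is to reduce the $n$-consumer market to one with only two goods, where subadditivity already guarantees a competitive equilibrium. Fix a welfare-maximizing allocation $(O_1,\dots,O_n)$, so that $\opt=\sum_{i=1}^n v_i(O_i)$, and relabel the consumers so that $v_1(O_1)\ge v_2(O_2)\ge\cdots\ge v_n(O_n)$. I would then take the bundling $\cB=(B_1,B_2)$ with $B_1=O_1$ and $B_2=M\setminus O_1$, and pass to the induced two-good market: the consumers are unchanged, the goods are $B_1$ and $B_2$, and consumer $i$'s valuation is $\bar v_i(T)=v_i\!\left(\bigcup_{B\in T}B\right)$ for $T\subseteq\{B_1,B_2\}$.

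The first thing to verify is that coarsening preserves subadditivity: $\bar v_i(T)+\bar v_i(U)=v_i(\bigcup_{B\in T}B)+v_i(\bigcup_{B\in U}B)\ge v_i(\bigcup_{B\in T\cup U}B)=\bar v_i(T\cup U)$. Next, in a market with only two goods a monotone valuation is \gsub{} if and only if it is subadditive (the only nontrivial submodularity inequality there is precisely subadditivity, and on two items submodular and \gsub{} valuations coincide), so each $\bar v_i$ is \gsub{}. Hence, by \citep{GS99,Mil04}, the induced two-good market admits a competitive equilibrium. By definition such an equilibrium is a \fcwe{} of the original market with bundling $\cB$, and by the first welfare theorem applied to the two-good market its allocation maximizes $\sum_i v_i(\cdot)$ over all allocations of $B_1$ and $B_2$.

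It remains to lower-bound that optimum. The allocation that gives $B_1$ to consumer $1$ and $B_2$ to consumer $2$ is feasible in the two-good market, and its welfare is $v_1(B_1)+v_2(B_2)=v_1(O_1)+v_2(M\setminus O_1)\ge v_1(O_1)+v_2(O_2)$, using monotonicity together with $O_2\subseteq M\setminus O_1$. Finally I would invoke the elementary fact that whenever $v_1(O_1)\ge\cdots\ge v_n(O_n)\ge 0$ one has $v_1(O_1)+v_2(O_2)\ge\frac2n\sum_{i=1}^n v_i(O_i)$: rearranged this says $(n-2)\bigl(v_1(O_1)+v_2(O_2)\bigr)\ge 2\sum_{i=3}^n v_i(O_i)$, which holds since $\sum_{i=3}^n v_i(O_i)\le (n-2)v_2(O_2)\le\frac{n-2}{2}\bigl(v_1(O_1)+v_2(O_2)\bigr)$. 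Putting these together, the \fcwe{} attains welfare at least $\frac2n\opt$, i.e., an $\frac n2$-approximation (the case $n=1$ being trivial).

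No single step is especially delicate; the points that need care are the two structural facts — that a bundling coarsening of a subadditive valuation is subadditive, and that on two goods ``subadditive'' and ``\gsub{}'' coincide, so that the existence theorem for \gsub{} markets is applicable — and the elementary averaging inequality used at the end. I expect the mild subtlety to lie in making sure the \emph{entire} $n$-consumer population participates in the two-good market (so that profit maximization, hence market clearance via the \gsub{} equilibrium, holds against every consumer), while still only needing consumers $1$ and $2$ to witness the welfare lower bound.
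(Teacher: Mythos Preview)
Your proposal is correct and follows essentially the same route as the paper: bundle the items into $B_1=O_1$ and $B_2$ the rest, observe that on two goods subadditive coincides with \gsub{}, invoke existence of a competitive equilibrium and the first welfare theorem, and then use the averaging inequality $v_1(O_1)+v_2(O_2)\ge\frac{2}{n}\opt$. The only differences are cosmetic: you spell out the preservation of subadditivity under coarsening and the averaging step, and you take $B_2=M\setminus O_1$ rather than $\bigcup_{i\ge 2}O_i$ (the paper assumes the optimal allocation already covers all items).
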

\begin{proof}
Fix an optimal allocation of all items $(O_1,\ldots , O_n)$. Without loss of generality assume that $v_1(O_1)\geq v_2(O_2) \geq \ldots \geq v_n(O_n)$. Consider the bundling $\mathcal{B}=(B_1,B_2)$ where $B_1=O_1$, $B_2=\bigcup_{i\geq 2}O_i$, which creates a new market instance with two items. It is well known that with two items, the class of subadditive valuations coincides with the class of gross substitutes valuations \cite{KC82}.
Therefore, the new market instance admits a competitive equilibrium, which is a special case of a \fcwe{} and which maximizes the social welfare by the first welfare theorem. The welfare of this \fcwe{} is at least $v_1(O_1)+v_2(O_2)$, 
which in turn is at least $\frac 2 n \cdot \Sigma_iv_i(O_i)$.
\end{proof}

\begin{corollary}
For every combinatorial market with $n=2$ subadditive consumers, there exists a \fcwe{} that obtains the optimal welfare.
\end{corollary}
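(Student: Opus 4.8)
The plan is to read this off directly from Lemma~\ref{lem:subadditive} by setting $n=2$: the lemma then guarantees a \fcwe{} whose welfare is at least $\frac{2}{n}\sum_i v_i(O_i) = \sum_i v_i(O_i) = \opt$, the optimal unconstrained welfare. Since bundling the goods only restricts the set of feasible allocations, no \fcwe{} can ever exceed $\opt$, so this inequality is in fact an equality and the \fcwe{} is fully efficient.

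Alternatively --- and this is really the same argument written out --- I would rerun the construction in the proof of Lemma~\ref{lem:subadditive} directly for two consumers. Fix an optimal allocation $(O_1,O_2)$ and form the bundling $\mathcal{B}=(O_1,O_2)$, obtaining a two-item market in which both valuations are still subadditive, hence \gsub{} by \cite{KC82}; a competitive equilibrium therefore exists and, by the first welfare theorem, maximizes welfare over allocations of the two bundles. The one observation to make is that any allocation of $O_1,O_2$ to the two consumers is also a feasible allocation of $M$ in the original market, so the post-bundling optimum is at most $\opt$, while giving $O_i$ to consumer $i$ attains $v_1(O_1)+v_2(O_2)=\opt$; hence it equals $\opt$ and the equilibrium realizes it.

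I do not expect any obstacle here: the statement is a genuine corollary, and the only content beyond Lemma~\ref{lem:subadditive} is the trivial remark that a bundling cannot increase the attainable welfare, which forces the $\frac{n}{2}$-approximation to be exact when $n=2$.
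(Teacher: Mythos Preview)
Your proposal is correct and matches the paper's treatment exactly: the paper states the corollary immediately after Lemma~\ref{lem:subadditive} with no separate proof, precisely because setting $n=2$ in the lemma's $\frac{n}{2}$-approximation yields full efficiency. Your additional remark that bundling cannot exceed $\opt$ (forcing equality) is the only missing triviality, and you supply it.
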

We would like to show that this result cannot be extended to instances with two consumers and general valuations. To prove this, we require a version of the second welfare theorem that asserts that a competitive equilibrium exists in a market if and only if the following LP (the ``configuration LP'') has an optimal integral solution (see, e.g., \cite{BM97}):

\emph{Maximize:} $\Sigma_{i,S}x_{i,S}v_i(S)$

\emph{Subject to:}
\begin{itemize}
\item For each item $j$: $\Sigma_{i,S|j\in S}x_{i,S}\leq 1$
\item for each bidder $i$: $\Sigma_{S}x_{i,S}\leq 1$
\item for each $i$, $S$: $x_{i,S}\geq 0$
\end{itemize}

We are now ready to prove that:

\begin{proposition}\label{prop_two_consumers_lb}
There exists an instance with two consumers and two items such that every \fcwe{} obtains at most $\frac 2 3$ of the optimal welfare.
\end{proposition}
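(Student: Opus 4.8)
The plan is to write down one explicit instance on the two goods $\{a,b\}$ and to rule out, in turn, each of the (only) two bundlings available: the coarse bundling $\mathcal B=(\{a,b\})$ and the fine bundling $\mathcal B=(\{a\},\{b\})$, the latter being exactly the original market.

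For the coarse bundling, market clearance forces the single bundle $\{a,b\}$ to be held by some consumer $i$, and profit maximization forces $v_i(\{a,b\})\ge p_{\{a,b\}}\ge v_{i'}(\{a,b\})$ for every other consumer $i'$. Hence the holder has a weakly largest value for the pair, and \emph{every} \fcwe{} supported by the coarse bundling realizes welfare exactly $\max_i v_i(\{a,b\})$. So the first requirement on the instance is $\max_i v_i(\{a,b\})\le \tfrac23\opt$.

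For the fine bundling, a \fcwe{} is literally a competitive equilibrium, so by the characterization stated above it exists if and only if the configuration LP attains its optimum at an integral point. I would therefore exhibit a single \emph{fractional} feasible solution of the configuration LP whose objective value is strictly larger than $\opt$; since every integral feasible solution has value at most $\opt$ (it corresponds to an allocation), this certifies that the LP optimum is non-integral, hence that the original market has no competitive equilibrium and no \fcwe{} uses the fine bundling. Combining the two cases, every \fcwe{} of the instance is a coarse-bundling one, so it realizes at most $\max_i v_i(\{a,b\})\le\tfrac23\opt$.

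What remains --- and this is where I expect the real work --- is to produce numbers meeting both requirements simultaneously. Since $\max_i v_i(\{a,b\})<\opt$, the optimum cannot come from placing both goods with one consumer, so $\opt$ must be realized by splitting $a$ and $b$ between the two consumers, while each consumer's value for the pair stays at most $\tfrac23\opt$. At the same time the profile has to carry enough complementarity that the fine market's configuration LP has a strict integrality gap. These pull against each other --- complementarity is exactly what inflates some $v_i(\{a,b\})$ toward $\opt$ --- so the instance should couple a mildly super-additive valuation with a substitutes-type (unit-demand-like) valuation, tuned so that fractionally mixing a ``grab both goods'' configuration for the first consumer with the second consumer fractionally absorbing the remaining item-capacity strictly beats $\opt$, yet no integral vector of item prices supports the efficient split. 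Once the profile is fixed, only two short computations are left: the value of the displayed fractional LP solution, and the comparison of $\max_i v_i(\{a,b\})$ with $\opt$.
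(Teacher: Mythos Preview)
Your plan is correct and matches the paper's proof essentially line for line: rule out the fine bundling via an integrality gap in the configuration LP, then observe that the coarse bundling yields exactly $\max_i v_i(\{a,b\})$, and choose numbers so that this is $\tfrac23\opt$. You have also correctly anticipated the shape of the instance (one mildly super-additive consumer paired with a unit-demand consumer). The only thing left unspecified is the concrete profile, and this turns out to be much shorter than you suggest: take $v_1(a)=v_1(b)=1$, $v_1(\{a,b\})=2+\epsilon$, and let consumer~2 be unit-demand with value $2$ for either item. Then $\opt=3$ (split the items), $\max_i v_i(\{a,b\})=2+\epsilon$, and the fractional LP point $x_{1,\{a,b\}}=x_{2,\{a\}}=x_{2,\{b\}}=\tfrac12$ has value $3+\epsilon/2>\opt$, certifying no competitive equilibrium.
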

\begin{proof}
Consider an instance with $m=2$ items and $n=2$ consumers, where consumer $1$ has value $1$ for any single item, and value $2+\epsilon$ if he gets both items, where $\epsilon>0$; consumer $2$ is unit-demand, with value $2$ for any item he gets. The optimal welfare equals~$3$. 

We will show that there does not exist a \we{} in this market. Thus, in any \fcwe{} the two items must be bundled together. The best solution in this case is to give the bundle to consumer $1$, achieving welfare $2+\epsilon$. The approximation ratio of this \fcwe{} approaches $3/2$ as $\epsilon$ approaches $0$.

Although it is not hard to directly prove that there is no \we{} in this market, by the version of the second welfare theorem stated above  it is sufficient to show a fractional solution to the \emph{configuration LP} that exceeds 3, which is the value of the optimal integral solution. Consider the fractional solution which assigns half of the grand bundle to consumer $1$, and assigns half of each individual item to consumer $2$. This value of this fractional solution is $3+\epsilon/2$, which implies that a competitive equilibrium does not exist. This finishes the proof
\end{proof}
The impossibility result implied by Proposition \ref{prop_two_consumers_lb} trivially extends to more consumers and items (by adding consumers with zero valuations and items with zero marginal contribution), and is tight as the next proposition asserts.

\begin{proposition}
\label{prop_two_consumers_ub}
For every combinatorial market with $n=2$ consumers and general valuations, there exists a \fcwe{} that provides a $\frac 3 2$-approximation to the optimal welfare.
\end{proposition}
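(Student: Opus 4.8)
The plan is to exhibit, for any two-consumer market, a short list of explicit competitive bundling equilibria and show at least one of them attains welfare $\tfrac23\opt$. Fix an optimal allocation $(O_1,O_2)$ of $M$ and assume without loss of generality $v_1(O_1)\ge v_2(O_2)$; write $\opt=v_1(O_1)+v_2(O_2)$, so $v_1(O_1)\ge\tfrac12\opt$, and note $v_1(M),v_2(M)\le\opt$ since $(O_1,O_2)$ is optimal. The candidates are: (a) the grand bundling $(M)$, with $M$ given to whichever consumer values it more and priced between the two values --- a valid \fcwe{} of welfare $\max\{v_1(M),v_2(M)\}$; and (b) the bundling $\mathcal B=(O_1,O_2)$ together with the ``diagonal'' allocation $O_1\mapsto 1,\ O_2\mapsto 2$, which \emph{if supportable by bundle prices} is a \fcwe{} of welfare $\opt$ (any allocation of the bundled market is an allocation of the original, hence has welfare at most $\opt$, and this one attains it).

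The first step is to pin down exactly when (b) is supportable. Since $\mathcal B$ turns the instance into a two-good market with data $v_i(O_1),v_i(O_2),v_i(O_1\cup O_2)$, profit maximization of the diagonal allocation is a system of six linear inequalities in the two bundle prices $p_{O_1},p_{O_2}$. Eliminating the prices --- a two-variable linear feasibility check, namely intersecting the admissible price box with the strip on $p_{O_1}-p_{O_2}$ forced by the cross-deviation constraints --- the system is feasible if and only if
\begin{align}
v_2(M)+v_1(O_2) &\le v_1(O_1)+2\,v_2(O_2),\label{eq:cond1}\\
v_1(M)+v_2(O_1) &\le 2\,v_1(O_1)+v_2(O_2);\label{eq:cond2}
\end{align}
every other feasibility requirement reduces to $v_i(M)\le\opt$ or to optimality of $(O_1,O_2)$ and is automatic. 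Carrying out this elimination carefully is the one genuinely computational part, but it is routine.

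Granting \eqref{eq:cond1}--\eqref{eq:cond2}, suppose for contradiction that no \fcwe{} attains $\tfrac23\opt$. Then candidate (a) forces $v_1(M),v_2(M)<\tfrac23\opt$, hence by monotonicity $v_1(O_1),v_1(O_2),v_2(O_1),v_2(O_2)<\tfrac23\opt$ as well; in particular $v_1(O_1)<\tfrac23\opt$, so $v_2(O_2)=\opt-v_1(O_1)>\tfrac13\opt$. Now each left-hand side of \eqref{eq:cond1}--\eqref{eq:cond2} is a sum of two terms below $\tfrac23\opt$, hence below $\tfrac43\opt$, while $v_1(O_1)+2v_2(O_2)=\opt+v_2(O_2)>\tfrac43\opt$ and $2v_1(O_1)+v_2(O_2)=v_1(O_1)+\opt\ge\tfrac32\opt$. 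So both inequalities hold, candidate (b) is supportable, and it is a \fcwe{} of welfare $\opt\ge\tfrac23\opt$ --- contradiction. Hence some \fcwe{} attains at least $\tfrac23\opt$, the desired $\tfrac32$-approximation; tightness is already given by Proposition~\ref{prop_two_consumers_lb}.

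I expect the only real obstacle to be bookkeeping: getting \eqref{eq:cond1}--\eqref{eq:cond2} exactly right, and verifying that optimality of $(O_1,O_2)$ together with monotonicity renders all the other price constraints vacuous. A slightly different route to the same point is to argue supportability via the second welfare theorem and the configuration LP, as in the proof of Proposition~\ref{prop_two_consumers_lb}: non-supportability of the diagonal allocation corresponds to a fractional allocation of the two-good market beating $\opt$, and for two goods the natural such allocation assigns half the grand bundle to one consumer and half of each singleton to the other, which exceeds $\opt$ precisely when \eqref{eq:cond1} or \eqref{eq:cond2} fails. Either way one lands on the same elementary inequalities.
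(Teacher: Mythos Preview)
Your argument is correct. The derivation of \eqref{eq:cond1}--\eqref{eq:cond2} checks out: after reducing to the two-good market $(O_1,O_2)$, the six profit-maximization inequalities for the diagonal allocation reduce, via the box constraints on $p_{O_1},p_{O_2}$ and the strip on $p_{O_1}-p_{O_2}$, to exactly those two conditions plus constraints already implied by optimality of $(O_1,O_2)$ and monotonicity (in particular $v_i(M)\le\opt$ and the swap inequality). Your contradiction step is also clean.

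The paper follows the same high-level plan --- bundle into $(O_1,O_2)$, and argue that either the grand bundle already gives $\tfrac23\opt$ or the diagonal allocation is supported by item prices --- but establishes supportability differently. Rather than eliminating prices to obtain \eqref{eq:cond1}--\eqref{eq:cond2}, the paper first observes that in a two-item market one may assume one consumer is subadditive and the other superadditive (both subadditive implies gross substitutes hence a \we{}; both superadditive implies the grand bundle is optimal), and then in each of the two resulting cases writes down explicit prices and verifies all deviations by hand. Your route is more uniform and avoids the sub/superadditive dichotomy altogether, at the cost of not exhibiting the supporting prices explicitly; the paper's route is longer but constructive.
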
 
The proof of Proposition \ref{prop_two_consumers_ub} involves a careful case-by-case analysis and is deferred it to Appendix \ref{apx:two-buyers}.

\section{General Welfare Maximization: Preparations}
\label{sec:prep}

In the next sections we study welfare-maximizing \fcwa{} in different combinatorial markets. The goal of this section is to prepare our main ``working horses'' for this task, namely Lemma \ref{lem:fcwe-from-partial} and Lemma \ref{lem:promising}. The proofs of both lemmas use the following result from \cite{FGL}:

\begin{theorem}[essentially \cite{FGL}]\label{lemma-FGL}
In a combinatorial market with general, possibly non-monotone valuations, let $\mathcal B$ be a bundling with bundle prices $\vec{p}$. Then there exist a bundling $\mathcal B'$ over bundles in $\mathcal B$, prices $\vec{p'}$ over $\mathcal B'$ and an allocation $(S_1,\ldots, S_n)$ over $\mathcal B'$ such that:
\begin{enumerate}
\item\label{item-FGL-prices-up} For every $i$, let $T_i\subseteq \mathcal B$ be the set of bundles that $S_i$ is combined from, then $p'_i\geq \Sigma_{j\in T_i}p_j$.
\item\label{item-FGL-unallocated} If some $B\in \mathcal B'$ is not allocated then $B\in \mathcal B$ and $p'_B=p_B$.
\item\label{item-FGL-profit-maximization} For each consumer $i$ we have that $S_i$ is in $i$'s demand set given prices $p'$, i.e., $v_i(S_i)-\Sigma_{B\in S_i}p'_B \geq v_i(T)-\Sigma_{B\in T}p'_B$ for all $T\subseteq\mathcal B'$.
\end{enumerate}
Furthermore, the bundling, prices and allocation can be found using $\poly(m,n)$ demand queries given $\mathcal B$ and $\vec{p}$.%
\footnote{For given prices, a demand query for a certain valuation returns a member of its demand set under these prices.}
\end{theorem}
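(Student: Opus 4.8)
\medskip
\noindent\emph{Proof plan.} The plan is to replay, over the bundles of $\mathcal B$ treated as indivisible goods, the iterative ``fixed-price'' construction that \citet{FGL} use to build a combinatorial Walrasian equilibrium, and to read off the three properties from its invariants. Set up an auxiliary market whose goods are $B_1,\dots,B_{m'}$ with item prices $p_1,\dots,p_{m'}$. The construction maintains a coarsening $\mathcal C$ of $\mathcal B$ (each member a union of one or more bundles of $\mathcal B$), a price vector $\vec{q}$ on the members of $\mathcal C$, and a tentative assignment of members of $\mathcal C$ to consumers; it is initialized with $\mathcal C=\mathcal B$, $\vec{q}=\vec{p}$ and no bundle assigned. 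Two invariants are kept: (a)~the $\vec{q}$-price of every member of $\mathcal C$ is at least the sum of the $\vec{p}$-prices of its constituents; and (b)~every unassigned member of $\mathcal C$ is a single bundle $B\in\mathcal B$ whose price is still exactly $p_B$.

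\medskip
\noindent The main loop: while some consumer $i$ holds a bundle outside $i$'s demand set with respect to $(\mathcal C,\vec{q})$, pick such an $i$ together with a demand bundle $D\subseteq\mathcal C$ for $i$; merge the members of $D$ into a single bundle $B'$ (replacing them in $\mathcal C$), assign $B'$ to $i$, release any consumer who held a constituent of $B'$, and set $q_{B'}$ to the sum of the former prices of the members of $D$ plus the nonnegative increment prescribed in \citet{FGL}; whenever a merge would leave a previously merged bundle unassigned with no consumer wanting it, revert that bundle to its constituent bundles of $\mathcal B$ at their original prices. Since every increment is applied to a just-created, just-assigned bundle, and reverting restores original prices, invariants (a) and (b) are preserved. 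When the loop halts, output $\mathcal B'=\mathcal C$, $\vec{p'}=\vec{q}$ and the current assignment $(S_1,\dots,S_n)$.

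\medskip
\noindent Item~\ref{item-FGL-profit-maximization} is precisely the negation of the loop guard. Item~\ref{item-FGL-prices-up} is invariant~(a) applied to $S_i=\bigcup_{j\in T_i}B_j$, giving $p'_i=q_{S_i}\ge\sum_{j\in T_i}p_j$ (with equality in the degenerate cases, e.g.\ $T_i=\emptyset$ or $S_i$ a never-merged single bundle). Item~\ref{item-FGL-unallocated} is immediate from invariant~(b). The ``Furthermore'' clause reduces to showing that the loop runs for only $\poly(m,n)$ iterations, since each iteration costs one demand query per consumer plus polynomial bookkeeping.

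\medskip
\noindent The one substantive point --- and the place the argument can break down --- is termination: a priori the consumers could keep evicting, reverting, and re-merging indefinitely. This is exactly the difficulty handled in \citet{FGL}: one chooses the price increments so that an appropriate potential, essentially the total $\vec{q}$-price of the currently assigned bundles, strictly increases at each merge while staying bounded (no consumer ever demands a bundle priced above $\max_i v_i(M)$, so prices may be capped there), with the number of increments bounded polynomially by their analysis; this also yields the stated demand-query count. I would transplant that quantitative argument; the work on our side is to verify that it is not spoiled by the revert step and that the increments always fall on bundles the construction keeps assigned, so that invariants (a)--(b), and hence Items~\ref{item-FGL-prices-up} and~\ref{item-FGL-unallocated}, survive intact.
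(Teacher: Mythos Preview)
The paper does not prove this theorem: it is stated as a result ``essentially \cite{FGL}'' and used as a black box to derive Lemmas~\ref{lem:fcwe-from-partial} and~\ref{lem:promising}. So there is no paper proof to compare against; your sketch is effectively a reconstruction of the \cite{FGL} argument, and at the level of a proof plan it is faithful to that construction --- treat $\mathcal B$ as the item set, run their iterative merge-and-price procedure, and read off the three properties from the two invariants you state.

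Two small cautions. First, your description of the potential as ``total $\vec{q}$-price of the currently assigned bundles'' is not quite the one \cite{FGL} use, and as you implicitly note, that quantity need not strictly increase at a step where the acting consumer abandons a previously merged bundle whose price exceeds the increment; the actual termination argument in \cite{FGL} is more delicate (it couples prices with consumers' utilities and controls the number of times any single consumer can be evicted). Since you already flag termination as the substantive point and propose to import their analysis verbatim, this is fine for a plan, but do not expect the naive potential to work. Second, your ``revert'' step should be stated as: whenever a merged super-bundle ceases to be held by any consumer (in particular, the acting consumer's own previous bundle when it is not part of the new demand), dissolve it back into its $\mathcal B$-constituents at their original $\vec p$-prices. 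With that phrasing invariants (a) and (b) are indeed preserved, and Items~\ref{item-FGL-prices-up}--\ref{item-FGL-profit-maximization} follow exactly as you say.
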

Note that the bundling, prices and allocation guaranteed by Theorem \ref{lemma-FGL} do not in general form a \fcwe{}, since market clearance is not guaranteed, i.e., it does not necessarily hold that $\cup_i S_i=M$.

Our first lemma shows how to ``lift'' a \emph{partial} \fcwe{} (which is easier to construct) to a proper \fcwe{}. 

\begin{definition}
An allocation $S$ over a bundling $\mathcal B$ and a vector of bundle prices $\vec{p}$ form a \emph{\pcbe{}} with respect to some set of consumers $N'$, if they constitute a \fcwe{} with respect to $N'$.
\end{definition}
In other words, the allocation, bundling and prices of a partial \fcwe{} constitute a \fcwe{} assuming that the valuation of every consumer not in $N'$ is identically $0$.

\begin{lemma}
\label{lem:fcwe-from-partial}
Consider an allocation $(S_1,\ldots, S_n)$ over a bundling $\mathcal B$ and bundle prices $\vec p$, which form a \pcbe{} with respect to $N'\subseteq N$. Then there exists a bundling $\mathcal B'$ over bundles in $\mathcal B$, prices $\vec{p'}$ over $\mathcal B'$ and an allocation $(S'_1,\ldots, S'_n)$ over $\mathcal B'$, which form a \fcwe{} with welfare at least as high as the revenue of the \pcbe{}:
$$
\sum_{i\in N} v_i(S'_i)\geq \sum_{B\in\mathcal{B}} p_B.
$$
Furthermore, this \fcwe{} can be found using $\poly(m,n)$ demand queries given the \pcbe{}.
\end{lemma}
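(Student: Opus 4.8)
The plan is to feed the given \pcbe{} into the construction of Theorem~\ref{lemma-FGL}, applied to the \emph{true} market (all $n$ consumers with their real valuations $v_i$) with input bundling $\mathcal B$ and prices $\vec p$, and then to repair market clearance by hand. First I would record the easy bookkeeping. Writing $\mathrm{rev}:=\sum_{B\in\mathcal B}p_B$: since the \pcbe{} clears the market, every bundle of $\mathcal B$ is allocated, so $\mathrm{rev}$ is precisely the revenue of the \pcbe{}. We may assume all bundle prices are non-negative, and then a consumer outside $N'$, having assumed value $0$ and being profit-maximizing, holds only zero-priced bundles. Applying Theorem~\ref{lemma-FGL} yields a coarsening $\mathcal B'$ of $\mathcal B$, prices $\vec p'$ and an allocation $(S'_1,\dots,S'_n)$; profit maximization is exactly item~\ref{item-FGL-profit-maximization}. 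Let $U\subseteq\mathcal B'$ be the bundles left unassigned; item~\ref{item-FGL-unallocated} says $U\subseteq\mathcal B$ with $p'_B=p_B$ for $B\in U$, and item~\ref{item-FGL-prices-up} gives $\sum_i p'(S'_i)\ge\sum_{B\in\mathcal B\setminus U}p_B$, whence $\sum_i v_i(S'_i)\ge\sum_i p'(S'_i)\ge\sum_{B\in\mathcal B\setminus U}p_B$ since each $S'_i$ beats the empty bundle.

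The real work is clearing the leftover bundles $U$ without breaking profit maximization, and the key will be to arrange that $U$ contains only zero-priced bundles. Before invoking Theorem~\ref{lemma-FGL} I would ``clean'' the \pcbe{}: merge together the zero-priced bundles inside each consumer's allotment (folding them into one of that consumer's positively-priced bundles when available), and also merge the zero-priced allotments of all consumers outside $N'$ into a single bundle. Merging bundles held by one consumer alters neither that consumer's items nor its payment and only removes deviation options, so the \pcbe{} property survives; the cross-consumer merge is harmless because those consumers have value $0$. Then I would run the procedure behind Theorem~\ref{lemma-FGL} with the consumers of $N'$ processed first: when $N'$-consumer $i$'s turn comes, its cleaned \pcbe{} bundle is still a best response at prices $\vec p$ among the not-yet-taken bundles (it was a global best response at $\vec p$, and the bundles removed by earlier $N'$-consumers have only become more expensive), so we may let $i$ recover it. Consequently the only bundles the procedure can abandon are those held by $N\setminus N'$ consumers in the cleaned \pcbe{}, all of which are zero-priced. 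Finally, I would hand each $B\in U$ to an arbitrary consumer $i$: since $v_i$ is monotone and $p'_B=0$, $i$'s payoff is unchanged and no alternative improves, so $S'_i\cup\{B\}$ stays in $i$'s demand set. Iterating over $U$ gives a genuine \fcwe{} with welfare at least $\sum_{B\in\mathcal B\setminus U}p_B=\sum_{B\in\mathcal B}p_B=\mathrm{rev}$, and all steps (the merges, the FGL procedure, the hand-outs) use $\poly(m,n)$ demand queries.

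I expect the main obstacle to be the step guaranteeing that the FGL procedure leaves only zero-priced bundles unassigned — equivalently, that the positively-priced \pcbe{} bundles, which necessarily belong to $N'$-consumers, are never orphaned (in particular not by the reclaiming/re-bundling that can occur once the $N\setminus N'$ consumers are processed). A positively-priced abandoned bundle cannot be reinserted by the free-monotonicity argument and would cost us the $\sum_{B\in U}p_B$ slack in the welfare bound, so the processing order, the cleaning of the \pcbe{}, and a compatible tie-breaking rule for the demand queries all have to be chosen with care. Everything else reduces to bookkeeping built on the three guarantees of Theorem~\ref{lemma-FGL}.
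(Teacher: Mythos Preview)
Your proposal has the gap you yourself flag: the claim that the FGL procedure leaves only zero-priced bundles unassigned. To make that stick you are forced to open the black box of Theorem~\ref{lemma-FGL}, control the processing order, impose a tie-breaking rule, and argue that later re-bundling (when the $N\setminus N'$ consumers are processed with their \emph{true} valuations, which are not zero) never dislodges a positively-priced bundle from an $N'$-consumer. None of this follows from the three stated properties of Theorem~\ref{lemma-FGL}; it would require reproving a customized version of that theorem, which you have not done. So as written the argument is incomplete exactly where you suspected.

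The paper avoids this entirely by a perturbation trick that lets Theorem~\ref{lemma-FGL} stay a black box. For each $i\in N'$ with $S_i\ne\emptyset$, define $v_i^\epsilon$ to agree with $v_i$ except that $v_i^\epsilon(S_i)=v_i(S_i)+\epsilon$ (this is why Theorem~\ref{lemma-FGL} is stated for possibly non-monotone valuations). The input $(\mathcal B,\vec p)$ is still a \pcbe{} for the perturbed valuations, and now $S_i$ is the \emph{unique} profit-maximizer for each such $i$. Apply Theorem~\ref{lemma-FGL} to the perturbed market: if some $B\in\mathcal B$ were unallocated, its price is unchanged (property~\ref{item-FGL-unallocated}) while all other prices can only have risen (properties~\ref{item-FGL-prices-up} and~\ref{item-FGL-unallocated}), so the $N'$-consumer who held $B$ in the \pcbe{} still strictly prefers $B$ to whatever she received, contradicting property~\ref{item-FGL-profit-maximization}. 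Hence the market clears for every $\epsilon>0$. Sending $\epsilon\to 0$ and passing to a subsequence (finitely many allocations, bounded prices) gives an allocation and limit prices that form a \fcwe{} for the original $v_i$'s with the stated welfare bound. The perturbation thus replaces all of your order/tie-breaking engineering with a one-line uniqueness argument, at the price of a short compactness step.
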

The proof of Lemma \ref{lem:fcwe-from-partial} appears in Appendix \ref{apx:prep}. It makes use of the fact that Theorem \ref{lemma-FGL} does not even require the valuations to be monotone.

\vspace{0.1in}We also use a different method to prove the existence of a \fcwe{} with good properties, as follows. 

\begin{definition}
A \emph{\promising priced bundling} consists of a bundling $\mathcal{B}$ and bundle prices $\vec{p}$, such that for every bundle $B\in\mathcal{B}$, there is a set $N_B\subseteq N$ of at least $|\mathcal{B}|$ consumers for which $B$ is profitable: 
$$
\forall i\in N_B\;:\;v_i(B)-p_B > 0.
$$ 
\end{definition}

\begin{lemma}
\label{lem:promising}
Consider a \promising priced bundling $\mathcal{B}$ with bundle prices $\vec p$. Then there exists a bundling $\mathcal B'$ over bundles in $\mathcal B$, prices $\vec{p'}$ over $\mathcal B'$ and an allocation $(S'_1,\ldots, S'_n)$ over $\mathcal B'$, which form a \fcwe{} with welfare at least as high as the aggregate price of the \promising priced bundling:
$$
\sum_{i\in N} v_i(S'_i)\geq \sum_{B\in\mathcal{B}}p_B.
$$
Furthermore, this \fcwe{} can be found using $\poly(m,n)$ demand queries given $\mathcal{B}$ and $\vec{p}$.
\end{lemma}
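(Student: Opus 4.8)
The plan is to avoid the partial-equilibrium route of Lemma~\ref{lem:fcwe-from-partial} and instead apply Theorem~\ref{lemma-FGL} directly to the given \promising priced bundling $(\mathcal B,\vec p)$. This yields a bundling $\mathcal B'$ over bundles of $\mathcal B$, prices $\vec{p'}$ and an allocation $(S'_1,\ldots,S'_n)$ satisfying properties \ref{item-FGL-prices-up}--\ref{item-FGL-profit-maximization}. By property \ref{item-FGL-profit-maximization} this triple already satisfies profit maximization, so it is a \fcwe{} \emph{as soon as the market clears}. Hence the whole argument reduces to showing that, under the high-demand hypothesis, no bundle of $\mathcal B'$ is left unallocated; once that is done, the welfare bound is a routine consequence of properties \ref{item-FGL-prices-up} and \ref{item-FGL-profit-maximization}.

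The crux, and the only place the hypothesis enters, is a short counting argument for market clearance. I would argue by contradiction: suppose some $B\in\mathcal B'$ is unallocated. By property \ref{item-FGL-unallocated}, $B$ is one of the original bundles and still carries its original price, $p'_B=p_B$. By the definition of a \promising priced bundling there are at least $|\mathcal B|$ consumers $i$ with $v_i(B)-p_B>0$; applying property \ref{item-FGL-profit-maximization} to each of them with the alternative set $T=\{B\}\subseteq\mathcal B'$ gives $v_i(S'_i)-\sum_{C\in S'_i}p'_C\ge v_i(B)-p_B>0=v_i(\emptyset)$, so $S'_i\neq\emptyset$. Thus at least $|\mathcal B|$ consumers receive a nonempty bundle, and since an allocation hands out the bundles of $\mathcal B'$ to consumers disjointly, at least $|\mathcal B|$ distinct bundles of $\mathcal B'$ are allocated. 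But $\mathcal B'$ only coarsens $\mathcal B$, so $|\mathcal B'|\le|\mathcal B|$, and the unallocated bundle $B$ is not among the allocated ones, so the number of allocated bundles is at most $|\mathcal B'|-1\le|\mathcal B|-1$ -- a contradiction. (Note the threshold ``$|\mathcal B|$'' in the definition of a \promising priced bundling is exactly what is needed here.)

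With market clearance in hand, the welfare bound follows in two steps. First, property \ref{item-FGL-profit-maximization} with $T=\emptyset$ gives $v_i(S'_i)\ge\sum_{C\in S'_i}p'_C$ for every $i$; summing over $i$ and using that every bundle of $\mathcal B'$ is now allocated yields $\sum_{i\in N}v_i(S'_i)\ge\sum_{C\in\mathcal B'}p'_C$. Second, by property \ref{item-FGL-prices-up} the price of each allocated bundle is at least the total original price of the bundles of $\mathcal B$ it was combined from; since the market clears, these constituent sets form a partition of $\mathcal B$ (every original bundle lies in exactly one bundle of $\mathcal B'$, which is allocated), so $\sum_{C\in\mathcal B'}p'_C\ge\sum_{B\in\mathcal B}p_B$. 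Combining the two gives $\sum_{i\in N}v_i(S'_i)\ge\sum_{B\in\mathcal B}p_B$, and the bound on the number of demand queries is inherited verbatim from Theorem~\ref{lemma-FGL}.

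There is essentially no analytic content; the ``hard part'' is just recognizing that Theorem~\ref{lemma-FGL} applied as-is already produces a clearing allocation in this high-demand regime, so that the lifting machinery of Lemma~\ref{lem:fcwe-from-partial} is unnecessary. The only things to be careful about are the bookkeeping in the counting step -- that the sets $S'_i$ are pairwise disjoint, that passing from $\mathcal B$ to $\mathcal B'$ can only decrease the number of bundles, and that after clearance the constituent sets of the allocated bundles genuinely partition $\mathcal B$ -- together with the trivial observation that $v_i(\emptyset)=0$ is what turns ``$S'_i$ has positive payoff'' into ``$S'_i\neq\emptyset$''.
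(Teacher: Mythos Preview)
Your proposal is correct and follows essentially the same approach as the paper: apply Theorem~\ref{lemma-FGL} directly to $(\mathcal B,\vec p)$ and use a counting argument (at least $|\mathcal B|$ consumers must receive a nonempty bundle, but only $|\mathcal B'|-1\le|\mathcal B|-1$ bundles remain besides $B$) to force market clearance. You spell out the welfare bound and the bookkeeping ($|\mathcal B'|\le|\mathcal B|$, disjointness of the $S'_i$) more explicitly than the paper does, but the argument is the same.
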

The proof of Lemma \ref{lem:promising} appears in Appendix \ref{apx:prep}.

\vspace{0.1in}Finally, the following simple lemma will be useful for our analysis. 

\begin{lemma}
\label{lem:logarithmic}
For every allocation $S=(S_1,\dots, S_n)$ of items $M$, there exists a value $v$ and an allocation $S'=(S'_1,\dots, S'_n)$ of $M'\subseteq M$ such that for every $S'_i\ne\emptyset$, $v_i(S'_i)\in [v,2v)$, and the welfare of $S'$ is a logarithmic approximation to the welfare of $S$:
$$
\sum_{i\in N}{v_i(S'_i)} \ge \frac {1} {\alpha} \sum_{i\in N}{v_i(S_i)},
$$
where $\alpha=2(\log\mu+2)$. Furthermore, $v$ and $S'$ can be found given $S$ in $\poly(m,n)$ time using value queries.
\end{lemma}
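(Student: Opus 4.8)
The plan is a bucketing-and-pigeonhole argument, with care taken to discard a negligible ``light tail'' of low-value consumers so that only $O(\log\mu)$ value-buckets remain to be considered.

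First I would query $v_i(S_i)$ for every consumer $i$ (that is $n$ value queries). If all these values are $0$ the statement is trivial: take any $v>0$ and the empty allocation. Otherwise set $W=\sum_i v_i(S_i)>0$ and $V_{\max}=\max_i v_i(S_i)$, and let $k^*$ be the integer with $V_{\max}\in[2^{k^*},2^{k^*+1})$. Group the nonempty bundles geometrically: for each integer $k$ let $N_k=\{\,i: v_i(S_i)\in[2^k,2^{k+1})\,\}$ and $W_k=\sum_{i\in N_k} v_i(S_i)$, so $\sum_k W_k = W$. The crucial structural fact is that at most $\mu=\min\{n,m\}$ bundles are nonempty (each nonempty $S_i$ consumes one of the $m$ items, and there are $n$ consumers), so the union of any collection of buckets holds at most $\mu$ consumers.

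Next I would truncate. Consumers whose value is below $2^{k^*-t+1}$ number at most $\mu$ and each contributes less than $2^{k^*-t+1}$, so together they contribute less than $\mu\cdot 2^{k^*-t+1}$; since $W\ge V_{\max}\ge 2^{k^*}$, choosing $t$ of order $\log\mu$ (one checks $t=\log\mu+2$ suffices, rounding up when $\mu$ is not a power of two) makes this total at most $W/2$. Hence the $t$ ``heavy'' buckets $N_{k^*},N_{k^*-1},\dots,N_{k^*-t+1}$ carry value at least $W/2$ in total, and by pigeonhole one of them, say $N_k$, carries value at least $\frac{W}{2t}\ge \frac{W}{2(\log\mu+2)}=\frac{W}{\alpha}$. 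I then set $v=2^k$, put $S'_i=S_i$ for $i\in N_k$ and $S'_i=\emptyset$ otherwise, and let $M'=\bigcup_{i\in N_k}S_i\subseteq M$. By construction every nonempty $S'_i$ has $v_i(S'_i)\in[v,2v)$, and the welfare of $S'$ equals $W_k\ge W/\alpha$. Forming the buckets, locating the best bucket, and reading off $v$ and $S'$ all take $\poly(m,n)$ time given the queried values.

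There is no serious obstacle here; the only step that needs a moment's thought is the choice of truncation level $t$ — i.e., showing that throwing away everything below the top $\approx\log\mu$ buckets costs at most a factor of two. This is precisely where the bound $\mu$ on the number of nonempty bundles is used (rather than an a priori unbounded number of buckets reaching down to $-\infty$), and it is what pins down the constant $\alpha=2(\log\mu+2)$.
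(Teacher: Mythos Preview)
Your approach is essentially the paper's: discard consumers whose value is too small (using that there are at most $\mu$ nonempty bundles), then pigeonhole over the remaining $O(\log\mu)$ geometric buckets. The paper truncates at the explicit threshold $W/(2\mu)$ and starts its bins there, so the surviving values lie in $[W/(2\mu),W]$ and span at most $\lfloor\log\mu\rfloor+2\le \log\mu+2$ bins; your variant anchors the bins at powers of two and counts down $t$ buckets from $V_{\max}$.

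One small slip to fix: after you round $t$ up to $\lceil\log\mu\rceil+2$, the inequality $\frac{W}{2t}\ge \frac{W}{2(\log\mu+2)}$ goes the wrong way when $\mu$ is not a power of two (you get $\alpha=2(\lceil\log\mu\rceil+2)$, not the stated $\alpha=2(\log\mu+2)$). The easy repair is to set the truncation threshold to $W/(2\mu)$ as the paper does and let the bins start there, so the bin count is bounded by $\lfloor\log\mu\rfloor+2\le\log\mu+2$ rather than $\lceil\log\mu\rceil+2$.
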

The proof of Lemma \ref{lem:logarithmic} appears in Appendix \ref{apx:prep}.

\section{Welfare Maximization in Multi-Unit Markets}
\label{sec:multi-unit}

In this section we focus on multi-unit markets, and show existence of a \fcwe{} with a logarithmic approximation ratio to the optimal social welfare, denoted by $\opt$. Our positive result in Section \ref{sub:multi-pos} holds for any multi-item valuations, i.e., units may be a mixture of substitutes and complements with respect to each other. We then show in Section \ref{sub:multi-tight} that this result is tight, in the sense that there are multi-unit settings for which no \fcwe{} exists with welfare better than a logarithmic fraction of $\opt$. In fact, this holds even if the valuations are all subadditive.

\subsection{A Logarithmic Approximation}
\label{sub:multi-pos}

We now state and prove our main positive result for this section.

\begin{theorem}
\label{thm:multi-positive}
For every multi-unit market with $n$ consumers and $m$ items there exists a \fcwe{} that provides an $O(\log \mu)$-approximation to the optimal welfare., where $\mu=\min\{m,n\}$.
\end{theorem}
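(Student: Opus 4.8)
The plan is to exhibit a \promising priced bundling whose aggregate price is an $\Omega(1/\log\mu)$-fraction of $\opt$, and then invoke Lemma~\ref{lem:promising} to convert it into a genuine \fcwe{} with welfare at least that aggregate price. First I would fix an optimal allocation $(O_1,\dots,O_n)$ and apply Lemma~\ref{lem:logarithmic} to it: this gives a value level $v$ and a sub-allocation $S'=(S'_1,\dots,S'_n)$ of some $M'\subseteq M$ in which every nonempty bundle satisfies $v_i(S'_i)\in[v,2v)$, and $\sum_i v_i(S'_i)\ge \opt/\alpha$ with $\alpha=2(\log\mu+2)$. Let $N'$ be the set of consumers with $S'_i\ne\emptyset$ and let $k=|N'|$; then $kv \le \sum_i v_i(S'_i) < 2kv$, and $k\le\mu$ since $k\le n$ and (as the $S'_i$ are nonempty and disjoint subsets of $M$) $k\le m$.

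The key point exploited here is that this is a \emph{multi-unit} market: a consumer's value depends only on the number of units received, so the sizes $s_i:=|S'_i|$ are all that matters. Sort $N'$ so that $s_1\le s_2\le\cdots\le s_k$. I claim the ``uniform'' bundling into $k$ equal-size bundles of roughly $s_k$ units each — more precisely, take the largest bundle size $s=s_k$ among the participating consumers, form as many bundles of size $s$ as fit into $M'$ (there are at least $\lceil (\sum_i s_i)/s\rceil \ge k/2$-ish of them, since $\sum_i s_i \ge k s_k /2$ fails in general, so instead one should be a little more careful — see below), price each at roughly $v$, and dump any leftover units into one extra bundle priced at $0$. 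I would then check the \promising condition: each size-$s$ bundle is worth at least $v$ (by monotonicity, since $v_i$ of $s_k$ units is $\ge v_i(S'_i)\ge v$ for the consumer $i$ who had the largest bundle, but we need this for \emph{at least $|\mathcal B|$ consumers}) — this is the delicate part.

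The main obstacle is arranging that each bundle in the \promising priced bundling is profitable for at least $|\mathcal B|$ consumers simultaneously, while keeping the aggregate price $\Omega(\opt/\log\mu)$. The clean fix is: because all participating consumers $i\in N'$ have $v_i(S'_i)\ge v$ and $|S'_i|\le s_k$, by monotonicity \emph{every} consumer in $N'$ values a bundle of $s_k$ units at $\ge v$; so if I bundle $M'$ (or a large sub-part of it) into $\lfloor |M'|/s_k\rfloor$ bundles each of size $s_k$, each such bundle is profitable at price $v-\delta$ for all $k$ consumers in $N'$, hence for $\ge k \ge$ (number of bundles) consumers provided the number of bundles is at most $k$. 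Since $\sum_{i\in N'}s_i \le k s_k$, we have $|M'| = \sum_i s_i \le k s_k$, so $\lfloor|M'|/s_k\rfloor\le k$, giving exactly the \promising condition with $N_B=N'$ for every bundle $B$. The aggregate price is then $\lfloor|M'|/s_k\rfloor\cdot(v-\delta)$. It remains to lower-bound this: $\lfloor|M'|/s_k\rfloor\ge \frac{\sum_i s_i}{s_k}-1$, and since the consumer who owned the largest bundle has $v_i(s_k\text{ units}) < 2v$ (as $v_i(S'_i)\in[v,2v)$ and $S'_i$ has $s_k$ units for that consumer), we get $s_k$ units' worth... actually the cleanest bound: $\sum_i s_i \ge k$ (each nonempty $s_i\ge1$) is too weak; instead use that the welfare $\sum_i v_i(S'_i) < 2kv$, and each of the $\ge \sum_i s_i / s_k - 1$ bundles contributes $\approx v$, so aggregate price $\gtrsim (\sum_i s_i/s_k)\cdot v$. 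If $\sum_i s_i/s_k$ is small this is weak — so the final step is to observe we may instead choose, among all scales, the \emph{best} one, or simply note that if $s_k$ is large then few consumers is fine because then $v$ itself is large; formally, aggregate price $\ge (\lfloor|M'|/s_k\rfloor)v \ge v \ge \frac{1}{k}\sum_i v_i(S'_i) \ge \frac{1}{\mu}\sum_i v_i(S'_i)$, and combined with $\sum_i v_i(S'_i)\ge\opt/\alpha$ this already beats the target whenever $k$ is constant; for the general bound one takes the \promising bundling to be $k$ singleton-scale bundles so that aggregate price $\ge kv/2 \ge \frac14\sum_i v_i(S'_i)\ge \opt/(4\alpha)=\Omega(\opt/\log\mu)$, and I expect the real content of the proof to be verifying the \promising property (at least $|\mathcal B|$ profitable consumers per bundle) for whichever explicit uniform bundling achieves this, then quoting Lemma~\ref{lem:promising}.
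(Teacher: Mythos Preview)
Your framework is exactly the paper's: apply Lemma~\ref{lem:logarithmic} to an optimal allocation, obtain a value level $v$ and $n'$ nonempty parts each worth a value in $[v,2v)$, then build a \promising priced bundling with aggregate price $\Theta(n'v)$ and invoke Lemma~\ref{lem:promising}. The gap is the choice of bundling, which you never pin down.

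Your first attempt (bundles of size $s_k=\max_i|S'_i|$) correctly yields profitability for all $n'$ consumers, but the number of bundles is only $\lfloor|M'|/s_k\rfloor$, which can be $1$; the aggregate price is then $\approx v$, not $\Theta(n'v)$. Your fallback (``$k$ singleton-scale bundles'') is undefined; if it means $n'$ equal-size bundles of $M'$, each has size equal to the \emph{average} $|S'_i|$, and you cannot conclude that at least $n'$ consumers find every bundle profitable (only those whose own part was no larger than average do, and there may be as few as one such consumer). So the \promising property is not established, and the claimed aggregate price $\ge kv/2$ is unsupported.

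The missing idea is a median trick. Sort so that $|O'_1|\ge\cdots\ge|O'_{n'}|$, set $k=\lfloor n'/2\rfloor$, and partition \emph{all} of $M$ into $k$ equal-size bundles. Since $m\ge\sum_{i\le k}|O'_i|\ge k|O'_k|$, each bundle has size at least $|O'_k|\ge|O'_{k+1}|$; by monotonicity each bundle is worth at least $v$ to every consumer $k+1,\dots,n'$, and there are $n'-k\ge k$ of those, so the bundling is \promising at price $v-\epsilon$. The aggregate price is $k(v-\epsilon)=\Theta(n'v)\ge\Theta\big(\sum_i v_i(O'_i)\big)\ge\opt/\Theta(\log\mu)$. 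This is the one concrete step your plan is missing.
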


\begin{proof}
Consider a welfare-optimal allocation $(O_1,\ldots, O_n)$ of the items. We show there exists a \promising priced bundling whose aggregate price is an $O(\log \mu)$-approximation to $\opt$; the proof is then complete by applying Lemma \ref{lem:promising}.

We begin by applying Lemma \ref{lem:logarithmic} to show that there exist a value $v$ and an allocation $(O'_1,\ldots, O'_n)$ of an item subset $M'\subseteq M$, such that (i) for every consumer $i$ with non-empty allocation, $v_i(O'_i)\in[v,2v)$; (ii) a logarithmic fraction of the welfare is preserved: 
$$
\sum{v_i(O'_i)} \ge \opt/\Theta(\log\mu).
$$ 
Without loss of generality assume $|O'_1| \geq \dots \geq |O'_n|$, and let $n'$ be the largest index such that $|O'_{n'}|>0$. If $n'=1$, the grand bundle can be allocated to consumer $1$ for price $v_1(M)$ and we are done, so assume from now on $n'>1$. We now use allocation $O'$ to show the existence of the \promising priced bundling. 

Let $\mathcal{B}$ be a partition of all items into $k=\lfloor n'/2\rfloor$ bundles of equal size (if $k$ does not divide $m$, place leftover items in one of the bundles arbitrarily). Note that each bundle $B\in\mathcal{B}$ has size at least $|O'_{k+1}|$, by the following argument:
$$
m \ge \sum_{i\le k}{|O'_i|}\ge k|O'_k| \implies \lfloor m/k \rfloor \ge |O'_k| \ge |O'_{k+1}|.
$$ 
Set the price of every bundle to be $p_B=v-\epsilon$. All $k$ bundles are profitable for consumers $k+1,\dots,n'$, and since there are at least $k$ such consumers we have a \promising priced bundling. It remains to show that the aggregate price $k(v-\epsilon)$ is a logarithmic fraction of $\opt$:
$$
k(v-\epsilon) \ge 2v(2k+1)/10 \ge 2vn'/10 > \frac 1 {10} \sum{v_i(O'_i)} \geq \opt / \Theta(\log\mu),
$$ 
where the first inequality is by choosing $\epsilon$ sufficiently small and the third inequality uses that $v_i(O'_i)<2v$.
\end{proof}

\begin{remark}
The above proof provides an algorithm for finding the guaranteed \fcwe{}, which runs in $\poly(m,n)$ time and makes that many demand queries. We now show how to transform it into an algorithm with $\poly(\log(m), n)$ running time, which uses only value queries and loses only an extra factor of $2$ in the approximation ratio (recall that even in a multi-unit setting, demand queries are known to be strictly stronger than value queries). This running time can be considered polynomial for the multi-unit setting.

The idea is to preprocess, by bundling the items into equal-sized bundles of size $m/n^2$ (ignoring leftovers). \citet{DN10} show that the optimal allocation of such bundles achieves a $2$-approximation to the optimal unconstrained welfare. All that is left to show is that demand queries for this new setting can be simulated by $\poly(\log(m),n)$ value queries. Since the total number of units in any bundle is now a multiple of $m/n^2$, one can use dynamic programming to simulate a demand query, using only $\poly(\log(m),n)$ value queries.
\end{remark}

\subsection{A Tight Impossibility Result}
\label{sub:multi-tight}

The next theorem shows that it is impossible to obtain an approximation ratio better than logarithmic by a \fcwe{}, even if all valuations are subadditive.

\begin{theorem}
\label{thm:multi-tight}
There exists a multi-unit market where $n=m$ and valuations are subadditive, such that every \fcwe{} has welfare that is a $1/\Omega(\log m)$-fraction of the optimal social welfare.
\end{theorem}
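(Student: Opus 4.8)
The plan is to construct an explicit multi-unit market with $n=m$ in which the welfare of every competitive bundling equilibrium is forced down to an $O(1/\log m)$ fraction of the optimum, matching the positive result of Theorem~\ref{thm:multi-positive}. The guiding intuition is the classical ``equal-revenue'' phenomenon: arrange the consumers so that the optimal allocation spreads items across $\Theta(\log m)$ different ``scales'' of bundle size, each scale contributing roughly $\opt/\log m$ to the welfare, while any single bundling $\mathcal B$ of the items into fixed crates can only be ``well-matched'' to one such scale at a time, and every other scale is either priced out of the market (violating profit maximization or clearance) or contributes negligibly.

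Concretely, I would index consumers by $t=0,1,\dots,\log m$ and design the $t$-th consumer's subadditive valuation $v_t$ so that it has (near-)constant marginal value up to roughly $2^t$ units and then essentially flattens out — e.g.\ $v_t(q)=\min\{q,\,2^t\}$ up to a normalization, possibly scaled so that each consumer's ``ideal'' bundle of size $2^t$ yields value $\approx 1$. Each such $v_t$ is subadditive (a capped-linear, hence concave, function of the quantity is subadditive). In the optimal allocation we hand consumer $t$ a block of $\approx 2^t$ units; summing a geometric series of block sizes uses $\Theta(m)$ units and yields welfare $\Theta(\log m)$. I would then argue: fix any bundling $\mathcal B$ into crates, and let $b$ be (say) the median or typical crate size. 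A consumer whose ideal quantity $2^t$ is much smaller than $b$ gets at most one crate's worth of value but ``pays'' for a crate whose marginal units are worthless to her, so in equilibrium she is either allocated nothing or a crate priced so low it contributes little; the crates not absorbed by the few consumers whose scale matches $b$ must still be cleared, and the only consumers willing to take oversized crates are the large-$t$ ones, of which there are too few (only $\log m - \Theta(1)$ of them) to clear a $\Theta(m/b)$-sized supply of crates unless $b$ itself is $\Theta(m/\log m)$ — in which case the small-scale consumers contribute only a $1/\log m$ share. Formalizing this dichotomy over all possible values of $b$, and over non-uniform bundlings where crate sizes vary, is where the real work lies.

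The main obstacle will be handling \emph{arbitrary} bundlings, not just uniform ones: a clever adversary could use a few tiny crates together with a few huge crates, or a geometric spread of crate sizes mimicking the optimal allocation. To rule this out I would use market clearance as the crucial lever — every crate must be sold, and a crate of size $s$ can only be sold at a price at most $\max_t v_t(s)\le v_{\lceil\log s\rceil}(\text{its own size})\approx 1$ (by the capping), while a consumer holding several crates whose total size exceeds her cap $2^t$ derives no extra value from the excess, so the revenue — and hence, by the first welfare theorem applied to the post-bundling market, the welfare — is at most (number of crates actually generating positive value) $\times O(1)$. I would then show any clearing allocation can profitably support at most $O(1)$ crates ``at each scale,'' and since there are only $O(\log m)$ scales below $m$ and the consumers are too few to productively consume many same-scale crates, the total welfare is $O(\log m)\cdot O(1) = O(\log m)$, which is an $O(1/\log m)$ fraction of... wait, that would be the optimum itself; the point is rather that the equilibrium welfare is $O(1)$ while $\opt=\Theta(\log m)$, because a clearing bundling with $\Theta(m)$ items in crates of a fixed typical size can profitably serve only the $O(1)$ consumers at that scale (the rest being priced out by the no-excess-value property combined with the constraint that they cannot afford/do not want multiple crates). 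Pinning down this last counting step — that the equilibrium is effectively restricted to a single scale — against all bundlings and all supporting prices is the crux, and I expect it to require a careful case analysis split on whether the bundling has many small crates, many large crates, or a balanced spread, arguing in each case that the consumers whose caps align with the bulk of the supply are too scarce to extract more than $O(1)$ units of welfare while still clearing the market.

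Finally, I would double-check the boundary bookkeeping: that $v_t$ as defined is genuinely normalized, monotone and subadditive; that the block sizes $2^0,2^1,\dots$ really sum to at most $m$ (adjusting constants so the geometric sum is $\le m$) so the optimal allocation is feasible with welfare $\Theta(\log m)$; and that the $\epsilon$-type perturbations (if any are needed to break ties in the demand correspondence, as in the proof of Theorem~\ref{thm:multi-positive}) do not interfere with the lower-bound argument. The conclusion is that $\alg=O(1)$ while $\opt=\Theta(\log m)$, hence every \fcwe{} achieves only a $1/\Omega(\log m)$ fraction of the optimal social welfare, as claimed.
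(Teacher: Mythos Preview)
Your construction has a fatal gap: the capped-linear valuations $v_t(q)=\min\{q,2^t\}$ (however rescaled) are \emph{concave}, i.e.\ they have decreasing marginal utilities. But the paper explicitly recalls Vickrey's result that multi-unit markets with decreasing marginals always admit a (standard, item-priced) competitive equilibrium. That equilibrium is in particular a competitive bundling equilibrium, and by the first welfare theorem it achieves the full $\opt=\Theta(\log m)$. So your market has a \fcwe{} with optimal welfare, and no lower bound follows. The theorem is sharp precisely at the boundary between ``decreasing marginals'' and ``merely subadditive''; your instance sits on the wrong side of that boundary. (There is also the minor issue that you only have $\Theta(\log m)$ consumers rather than $n=m$, but that is fixable; the concavity problem is not.)

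The paper's construction is quite different and much cleaner than a scale-by-scale case analysis. It takes $n=m$; consumers $2,\dots,n$ are unit-demand with value $1/i$ (so $\opt\approx\ln m$ by allocating one unit each), and the single consumer $1$ has the subadditive but \emph{non}-concave valuation $v_1(S)=1+\epsilon$ for $\emptyset\ne S\subsetneq M$ and $v_1(M)=2+2\epsilon$. The jump at the grand bundle is the engine of the argument: in any \fcwe{}, the unit-demand consumers force all allocated bundles to carry the \emph{same} price $p\le 1/i'$ (where $i'$ is the smallest index receiving a bundle), and market clearance forces $|\mathcal B|\le i'$, so the total price of all bundles is at most $1$. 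Then consumer~1, who currently has value at most $1+\epsilon$, would strictly prefer to buy \emph{all} bundles (value $2+2\epsilon$, cost $\le 1$), contradicting profit maximization unless consumer~1 already holds everything. Hence every \fcwe{} has welfare $2+2\epsilon=O(1)$. No case analysis over bundle sizes is needed; the single non-concave consumer swallows the market in one step.
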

\begin{proof}
We construct the market as follows. Set $n=m$. For every $i\in\{2,\dots,m\}$, consumer $i$ has a unit-demand valuation with value $1/i$ (i.e., $\forall S\ne\emptyset : v_i(S)= 1/i$). Consumer $1$'s has the following valuation: for some small $\epsilon>0$, $v_1(S)=1+\epsilon$ for every $S$ of size $|S|<m$, and $v_1(M)=2+2\epsilon$. Notice that all valuations are subadditive.

The optimal allocation in this market allocates each of the consumers a single unit, achieving welfare of $\sum_i{1/i}\approx\ln m$. We will show that in every \fcwe{}, consumer $1$ is allocated all units, and thus the welfare is only $2+2\epsilon$.

For the rest of the proof we fix some \fcwe{} with allocation $S$ over bundling $\mathcal B$ with prices $\vec{p}$. Let $i'$ be the smallest index of a consumer who is allocated a bundle from $\mathcal{B}$.

\begin{claim}
\label{cla:equal-prices}
Every non-empty bundle $B\in\mathcal{B}$ has the same price $p_B=p$, and $p\leq 1/ i'$.
\end{claim}

\begin{proof}
Denote the bundle from $\cB$ allocated to consumer $i'$ by $B_{i'}$. We first observe that consumer $i'$ cannot be charged more than $p\leq 1/ i'$ for $B_{i'}$, otherwise his profit would be negative. Next, we observe that the price of any other non-empty bundle $B\in\mathcal{B}$ that is allocated to some consumer $i\neq i'$ must equal $p$: Since both $i$ and $i'$ are unit demand we have that $v_i(B)=v_i(B_{i'})$ and $v_{i'}(B)=v_{i'}(B_{i'})$. Thus if the price of one of these bundles is lower than the other, both $i$ and $i'$ will prefer to be allocated that bundle, and we are not at a \fcwe{}.
\end{proof}

Now, suppose towards contradiction that $i'>1$. By definition, at most $i'$ consumers are allocated in the \fcwe{}, and so by market clearance $|\mathcal{B}|\le i'$. By Claim \ref{cla:equal-prices}, the price of every $B\in\mathcal{B}$ is at most $1/ i'$. The total price for all bundles in $\mathcal{B}$ is therefore at most $1$.

Now consider consumer $1$, whose profit is at most his value $v_1(S_1)$. By the assumption that consumer $1$ is not allocated all units, his value is at most $ 1+\epsilon$. Observe that $v_1(\bigcup_{B\in\mathcal{B}}{B}) - \sum_{B\in\mathcal{B}}{p_B} \ge 2+2\epsilon-1=1+2\epsilon$, in other words, consumer $1$ strictly increases his profit by buying all bundles in $\mathcal{B}$. This is in contradiction to the profit maximization property of the \fcwe{}. This completes the proof of Theorem \ref{thm:multi-tight}.
\end{proof}

\section{Welfare Maximization in General Markets}\label{sec-general}

In this section we consider general markets with no assumptions on the valuations (except the standard ones of monotonicity and normalization). In Section \ref{sub:general-exist} we show that for every combinatorial market, there always exists a \fcwe{} whose allocation achieves a $\tilde O(\sqrt \mu)$-approximation to the optimal welfare. In Section \ref{sub:general-efficient} we address computational aspects.

\subsection{Existential Result}
\label{sub:general-exist}

The key lemma of this section is the following.

\begin{lemma}
\label{thm:general_efficient_convert}
Let $v$ be a value and let $S=(S_1,S_2,\ldots,S_r)$ be an allocation of a subset of the items to the first $r$ consumers, such that $\forall i\in[r]$ it holds that $S_i\ne\emptyset$ and $v_i(S_i)\in [v,2v)$. Then there is a \fcwe{} that achieves an $O(\sqrt {r})$-approximation to the welfare of $S$. 
\end{lemma}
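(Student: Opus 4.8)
The plan is to produce, from the allocation $S=(S_1,\dots,S_r)$, a \promising priced bundling whose aggregate price is an $\Omega(v\sqrt r)$-fraction of the welfare of $S$, and then invoke Lemma~\ref{lem:promising}. Note first that since each $v_i(S_i)<2v$, the welfare of $S$ is at most $2vr$, so it suffices to exhibit a priced bundling whose aggregate price is $\Omega(v\sqrt r)$, with each bundle profitable for at least $|\mathcal B|$ consumers. Set $k=\lceil\sqrt r\rceil$ (roughly). The idea is a ``rectangle'' construction: either many consumers have small bundles $S_i$, or few consumers have large bundles $S_i$, and in each case we can pack the items into $\Theta(\sqrt r)$ equal-ish bundles each of which is demanded (at price slightly below $v$) by at least $\Theta(\sqrt r)$ of the consumers.

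More concretely, I would first reduce to the case where all the $S_i$ are \emph{disjoint} (they are if $S$ came from a feasible allocation; if not, restrict attention to a sub-allocation that is, losing only a constant factor, or simply assume feasibility as stated). Order the consumers so that $|S_1|\ge|S_2|\ge\dots\ge|S_r|$. Consider the threshold index: if $|S_k|$ is ``large'' (say at least $m/k$ worth of items sit in the top $k$ bundles), then the union $S_1\cup\dots\cup S_k$ contains at least $\approx m$... the cleaner dichotomy is this. Look at consumer $k$ (where $k=\lfloor\sqrt r\rfloor$). Either (a) $v_i(S_i)\ge v$ is witnessed by a ``small'' set for at least $r/2$ of the consumers — i.e., there are $\ge r/2$ indices $i$ with $|S_i|\le 2m/\sqrt r$ — in which case I bundle $M$ into $\lfloor\sqrt r/2\rfloor$ equal-sized bundles, each of size $\ge m/\sqrt r\ge |S_i|$ for all these small-$S_i$ consumers, so monotonicity gives $v_i(B)\ge v_i(S_i)\ge v$, price every bundle at $v-\epsilon$, and since $r/2\ge\lfloor\sqrt r/2\rfloor$ we have a \promising priced bundling with aggregate price $\Omega(v\sqrt r)$; or (b) there are $\ge r/2$ indices with $|S_i|>2m/\sqrt r$, but the $S_i$ are disjoint and contained in $M$, so there can be at most $\sqrt r/2$ of them — contradiction once $r/2>\sqrt r/2$, i.e.\ $r>1$. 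Wait: that kills case (b) entirely, so actually case (a) always holds (for $r\ge 2$), which makes the $\sqrt r$ loss look too weak. Let me instead keep the genuine two-sided argument: the $S_i$ being disjoint forces $\sum|S_i|\le m$, so at most $\sqrt r$ indices have $|S_i|>m/\sqrt r$; hence at least $r-\sqrt r\ge r/2$ indices have $|S_i|\le m/\sqrt r$. So the ``small'' case is the only case and the construction above goes through — the $\sqrt r$ (rather than constant) loss comes purely from needing the bundle size $m/|\mathcal B|$ to dominate $|S_i|$ while simultaneously needing $|\mathcal B|$ many profitable consumers, and $|\mathcal B|\cdot(m/|\mathcal B|)=m$ forces $|\mathcal B|\le\sqrt{r}\cdot(\text{stuff})$ only in the worst case; I would present the bound as $|\mathcal B|=\Theta(\sqrt r)$ chosen to balance ``$\#$bundles'' against ``$\#$consumers whose $S_i$ fits in a bundle of size $m/|\mathcal B|$.''

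So the clean version: let $t=|\mathcal B|$ be a parameter to optimize. Partition $M$ into $t$ equal bundles of size $\ge\lfloor m/t\rfloor$. A consumer $i$ has $v_i(B)\ge v_i(S_i)\ge v$ for every bundle $B$ whenever $|S_i|\le\lfloor m/t\rfloor$; call these the \emph{$t$-good} consumers. Since the $S_i$ are disjoint, $\#\{i:|S_i|>\lfloor m/t\rfloor\}< t$, so there are more than $r-t$ many $t$-good consumers. For a \promising priced bundling we need $r-t\ge t$, i.e.\ $t\le r/2$; then pricing each bundle at $v-\epsilon$ yields aggregate price $t(v-\epsilon)$. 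Taking $t=\lfloor r/2\rfloor$ would give aggregate price $\Omega(vr)$ — but that's wrong, so the disjointness bound must actually be quantitatively weaker than I wrote. The correct counting: $\#\{i:|S_i|>\lfloor m/t\rfloor\}\le m/\lfloor m/t\rfloor\approx t$ — yes that IS at most $\approx t$. Hmm. The genuine obstacle, then, and the step I expect to be the crux, is that Lemma~\ref{lem:promising}'s hypothesis demands each bundle be \emph{strictly profitable} ($v_i(B)-p_B>0$) for $|\mathcal B|$ consumers, but after Lemma~\ref{lem:promising}/Theorem~\ref{lemma-FGL} re-bundles and raises prices, profitability can be lost, so we cannot price as high as $v-\epsilon$ per original bundle and still guarantee $t$ profitable consumers after the merge. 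I therefore expect the real argument trades off the number of bundles $t$ against the per-bundle price we can sustain through the re-bundling, and the $\sqrt r$ emerges from setting $t\approx\sqrt r$ so that the $r/t\approx\sqrt r$ bundles each remain affordable. The main work — and the part I'd need to grind carefully — is tracking how prices and profitability propagate through the application of Lemma~\ref{lem:promising} and verifying the resulting $\sqrt r$ balance, rather than the (routine) equal-size-bundling and counting arguments sketched above. I would then combine with $\text{welfare}(S)\le 2vr$ to conclude the $O(\sqrt r)$ approximation.
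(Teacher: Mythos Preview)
Your proposal has a genuine gap: it conflates the general (heterogeneous) setting of this lemma with the multi-unit setting of Section~\ref{sec:multi-unit}. When you write ``partition $M$ into $t$ equal bundles of size $\ge\lfloor m/t\rfloor$'' and then assert that ``a consumer $i$ has $v_i(B)\ge v_i(S_i)\ge v$ for every bundle $B$ whenever $|S_i|\le\lfloor m/t\rfloor$'', you are using monotonicity incorrectly. Monotonicity gives $v_i(B)\ge v_i(S_i)$ only when $S_i\subseteq B$, not when $|B|\ge|S_i|$. In a heterogeneous market an equal-sized bundle $B$ may be disjoint from $S_i$ and have $v_i(B)=0$. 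This is exactly why your counting seemed to give a too-good (constant) bound and sent you looking for the $\sqrt r$ loss in the re-bundling of Lemma~\ref{lem:promising}; there is no such loss there.

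The missing idea is to build the bundles not from arbitrary chunks of $M$ but from the allocated parts $S_1,\dots,S_r$ themselves. The paper groups $\lceil\sqrt r\rceil$ of the $S_i$'s into each new bundle (adding any unallocated items to one of them), so that $|\mathcal B|\le\sqrt r$ and, crucially, each $B\in\mathcal B$ \emph{contains} $S_i$ for at least $\sqrt r$ consumers; genuine monotonicity then gives $v_i(B)\ge v_i(S_i)\ge v$ for those $\sqrt r$ consumers. Pricing each bundle at $v$ (or $v-\epsilon$) yields a \promising priced bundling with aggregate price $\ge rv/(2\sqrt r+2)$, and since the welfare of $S$ is at most $2vr$ the $O(\sqrt r)$ ratio follows directly, with no further tracking needed through Lemma~\ref{lem:promising}. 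The $\sqrt r$ arises from the balance ``number of bundles'' $\times$ ``parts per bundle'' $=r$, not from anything involving $m$ or item sizes.
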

Before proving this lemma we state and prove its corollary. Recall $\mu=\min\{m,n\}$.
\begin{corollary}
\label{thm:general-sqrt}
For every combinatorial market there exists a \fcwe{} with welfare that is a $\tilde{O}(\sqrt{\mu})$-approximation to the optimal welfare.  
\end{corollary}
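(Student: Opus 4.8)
The plan is to derive Corollary~\ref{thm:general-sqrt} from Lemma~\ref{thm:general_efficient_convert} together with the bucketing tool of Lemma~\ref{lem:logarithmic}. First I would start from a welfare-optimal allocation $(O_1,\ldots,O_n)$ of the items, so that $\sum_i v_i(O_i)=\opt$. Applying Lemma~\ref{lem:logarithmic} to this allocation yields a value $v$ and an allocation $(O'_1,\ldots,O'_n)$ of some subset $M'\subseteq M$ such that every nonempty $O'_i$ satisfies $v_i(O'_i)\in[v,2v)$, while $\sum_i v_i(O'_i)\ge \opt/\alpha$ with $\alpha=2(\log\mu+2)=\Theta(\log\mu)$. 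After reindexing, let $r$ be the number of consumers with a nonempty bundle in $O'$; note $r\le n$, and also $r\le m$ since each such consumer gets at least one item of $M'$, so $r\le\mu$.

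Next I would feed this bucketed allocation into Lemma~\ref{thm:general_efficient_convert}: since $(O'_1,\ldots,O'_r)$ is an allocation of a subset of the items to the first $r$ consumers with each $O'_i\ne\emptyset$ and $v_i(O'_i)\in[v,2v)$, the lemma produces a \fcwe{} whose welfare is an $O(\sqrt r)$-approximation to the welfare of $O'$, i.e.\ at least $\frac{1}{O(\sqrt r)}\sum_i v_i(O'_i)$. Chaining the two bounds gives welfare at least
$$
\frac{1}{O(\sqrt r)}\cdot\frac{\opt}{\alpha} \;=\; \frac{\opt}{O(\sqrt r\,\log\mu)} \;\ge\; \frac{\opt}{O(\sqrt\mu\,\log\mu)} \;=\; \frac{\opt}{\tilde O(\sqrt\mu)},
$$
using $r\le\mu$ and absorbing the logarithmic factor into the $\tilde O$ notation. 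This establishes the claimed $\tilde O(\sqrt\mu)$-approximation.

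There is essentially no obstacle in this derivation — it is a routine composition of two already-established lemmas — so the only thing to be careful about is that the quantity $r$ (the number of nonempty buckets) is genuinely bounded by $\mu=\min\{m,n\}$ rather than just by $n$; the bound $r\le m$ follows because $O'$ allocates distinct items from $M'\subseteq M$ and each nonempty part consumes at least one of them. All the real work is hidden in Lemma~\ref{thm:general_efficient_convert}, whose proof (balancing a regime that bundles items coarsely so that few large bundles suffice against a regime that keeps many small bundles to preserve competition, with the crossover at $\sqrt r$) is where the $\sqrt{\cdot}$ comes from; the corollary itself is just the packaging of that lemma with the logarithmic bucketing preprocessing.
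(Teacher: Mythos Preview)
Your argument is correct and follows essentially the same route as the paper's own proof: apply Lemma~\ref{lem:logarithmic} to a welfare-optimal allocation to obtain a bucketed sub-allocation with $r\le\mu$ nonempty parts, then invoke Lemma~\ref{thm:general_efficient_convert} on that sub-allocation and chain the $O(\sqrt{r})$ and $O(\log\mu)$ losses. Your explicit justification that $r\le m$ (because the nonempty parts are disjoint and each contains an item) is a small detail the paper leaves implicit, but otherwise the two proofs are the same.
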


\begin{proof}
Apply Lemma \ref{lem:logarithmic} to the welfare-optimal allocation $(O_1,\dots,O_n)$ to get a value $v$ and an allocation $S=(S_1,\dots,S_n)$ of items $M'\subseteq M$. Without loss of generality, assume that exactly the first $r$ allocated parts in $S$ are non-empty, and notice that $r$ must be $\le\mu$. We have that (i) for every consumer $i\in[r]$, $v_i(S_i)\in[v,2v)$; (ii) a logarithmic fraction of the welfare is preserved: 
\begin{align}
\sum_{i\in [r]}{v_i(S_i)} \ge \frac{1}{O(\log \mu)}\sum_{i}v_i(O_i).\label{eq:log-frac}
\end{align}
By applying Lemma \ref{thm:general_efficient_convert} to the value $v$ and allocation $S$, we get an $O(\sqrt{r})=O(\sqrt{\mu})$-approximation to the welfare of $S$. Combining this with Equation (\ref{eq:log-frac}) completes the proof. 
\end{proof}

We now prove the key lemma.

\begin{proof}[of Lemma \ref{thm:general_efficient_convert}]
We show how to construct a \promising priced bundling whose aggregate sum of prices is an $O(\sqrt{r})$-approximation to the welfare of $S$. The proof is then established by invoking Lemma \ref{lem:promising}. 

Begin by using the allocated parts $S_1,\dots,S_r$ to create new bundles, each but one containing $\lceil\sqrt{r}\rceil$ parts $S_i$, and the last one containing between $\lceil\sqrt{r}\rceil$ and $2\lceil\sqrt{r}\rceil$) of the parts. Add any items in $M$ that are unallocated in $S$ to an arbitrary bundle. Let $\mathcal{B}$ denote the resulting partition, and set a price $p_B=v$ for every $B\in \mathcal{B}$.

We now show that the pair $(\mathcal{B},\vec{p})$ is a \promising priced bundling. By construction, $|\mathcal{B}| \leq \sqrt{r}$, as the total number of bundles in $S$ was $r$, and every $B\in \mathcal{B}$ contains at least $\sqrt{r}$ such bundles. Furthermore, every bundle $B$ is profitable for at least $\sqrt{r}$ consumers (those who were originally allocated the parts of $S$ it contains, and value them at $\ge v$). Now, recall that
\begin{align}
\label{eq_s_v_ub}
\sum_{i\in [r]}{v_i(S_i)} \leq 2vr.
\end{align}
Since no bundle $B\in\mathcal{B}$ contains more than $2\lceil\sqrt{r}\rceil\le 2\sqrt{r}+2$ parts $S_i$, then $|\mathcal{B}|\ge r/(2\sqrt{r}+2)$, and so we have that
\begin{align*}
\sum_{B\in \mathcal{B}} p_B = |\mathcal{B}|v \geq \frac{rv}{2\sqrt{r} + 2} \geq 
\frac{1}{4\sqrt{r}+4} \sum_{i\in [r]}{v_i(S_i)},
\end{align*}
where the last inequality is by \eqref{eq_s_v_ub}. This completes the proof.
\end{proof}

\subsection{Efficiently Finding an Equilibrium}
\label{sub:general-efficient}
Corollary \ref{thm:general-sqrt} guarantees that in every market there exists a \fcwe{} that provides a $\tilde O(\sqrt m)$ approximation to the optimal welfare. It is not hard to see that, \emph{if one is given a welfare-optimal allocation}, it also indicates how to construct such a \fcwe{} in $\poly(m,n)$ time using demand queries. This is essentially due to the constructive nature and computational guarantees of Lemmas \ref{lem:promising} and \ref{lem:logarithmic}, which are the main tools applied in the existential proof. Yet finding a welfare-optimal allocation -- or even one that provides an approximation ratio better than $O(\sqrt{m})$ -- is impossible in general without exponential communication \citep[and references within]{BN07}. The purpose of this section is to show that a \fcwe{} with somewhat weaker welfare guarantees can be constructed ``from scratch'', in a computationally efficient way using $\poly(m,n)$ demand queries.%
\footnote{The results of the previous section also imply that given an $\alpha$-approximation algorithm for the allocation problem over a special class of valuations, it is computationally efficient (using demand queries) to find a \fcwe{} that provides a $\tilde O(\alpha \sqrt \mu)$-approximation. In particular, using the $2$-approximation algorithm of \citet{Fei09}, it is tractable to find a \fcwe{} that provides a $\tilde O(\sqrt \mu)$-approximation when all valuations are known to be subadditive.}

The main lemma of this section is as follows.

\begin{lemma}
\label{cl_sqrt_kn_greedy_approx} 
For every combinatorial market and bundle size $k\in[m]$, there is a computationally efficient algorithm that finds a \fcwe{} using $\poly(m,n)$ demand queries, such that for every allocation $S$ to consumers $N_k\subseteq N$ where $|S_i|\in[k,2k)$ for every $i\in N_k$, the welfare of the \fcwe{} is a $\tilde{O}(\sqrt{km})$-approximation to the welfare of $S$. 
\end{lemma}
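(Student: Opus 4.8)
The plan is to mimic the existential argument behind Corollary \ref{thm:general-sqrt} and Lemma \ref{thm:general_efficient_convert}, but replacing the call to a welfare-optimal allocation (which is intractable) with a greedy procedure that builds a \promising priced bundling directly from demand queries. Fix the bundle size $k$. First I would pre-bundle: partition the $m$ items into $m'=\lceil m/k\rceil$ fixed bundles each of size (roughly) $k$, creating a derived market whose "items" are these $m'$ bundles. The point of this step is that every set $S_i$ with $|S_i|\in[k,2k)$ in the original market is, up to a constant factor in value (using only monotonicity), captured by $O(1)$ of the pre-bundles, so an allocation of pre-bundles that is good in the derived market is good against any such $S$; this is where the $\sqrt{km}$ shows up, since later we will form bundles of $\lceil\sqrt{m'}\rceil=\tilde O(\sqrt{m/k})$ pre-bundles and $\sqrt{m/k}\cdot k=\sqrt{km}$.

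Next, in the derived market I would run a greedy/guessing loop over a geometric grid of candidate "value levels" $v\in\{1,2,4,\dots\}$ (polynomially many relevant levels suffice after truncating negligibly small values, exactly as in Lemma \ref{lem:logarithmic}). For a fixed $v$: set a uniform price $v$ on each pre-bundle and issue a demand query to every consumer at those prices; a consumer "accepts" level $v$ if her demanded set is nonempty, i.e. she gets strictly positive profit, which by uniform pricing means she values some single pre-bundle at $>v$. Let $t$ be the number of accepting consumers. Now greedily group the pre-bundles into super-bundles of $\lceil\sqrt{m'}\rceil$ pre-bundles each (leftovers absorbed into the last super-bundle), forming a bundling $\mathcal{B}_v$ with $|\mathcal{B}_v|\le\sqrt{m'}$ super-bundles, and price each super-bundle at $v$. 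Each super-bundle contains at least $\sqrt{m'}\ge$ one pre-bundle valued $>v$ by each of the $t$ accepting consumers, so it is profitable for all $t$ of them; hence whenever $t\ge|\mathcal{B}_v|$ (in particular whenever $t\ge\sqrt{m'}$) the pair $(\mathcal{B}_v,v)$ is a \promising priced bundling, and by Lemma \ref{lem:promising} it lifts to a \fcwe{} with welfare $\ge|\mathcal{B}_v|\cdot v\ge (m'/(2\sqrt{m'}))\cdot v=\Omega(\sqrt{m'}\,v)$. Output the \fcwe{} coming from the level $v$ maximizing this quantity (over the grid), together with the trivial fallback "sell the grand bundle to whoever values it most."

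For the approximation guarantee, fix any allocation $S$ to $N_k$ with $|S_i|\in[k,2k)$. Apply Lemma \ref{lem:logarithmic} to $S$ to extract a level $v^\star$ and a sub-allocation on which every nonempty part is worth $[v^\star,2v^\star)$ and which retains a $1/O(\log\mu)$ fraction of $\sum_i v_i(S_i)$; say $r$ consumers have nonempty parts there, so the retained welfare is $\le 2v^\star r$. Each such part has size $<2k$, hence meets $O(1)$ pre-bundles, at least one of which it values $\ge v^\star/O(1)$ by monotonicity; so at the grid level $v\approx v^\star/O(1)$ at least $r$ consumers accept, i.e. $t\ge r$. If $r\ge\sqrt{m'}$ then the \promising bundling kicks in and the produced \fcwe{} has welfare $\Omega(\sqrt{m'}\,v^\star)$, which is an $O(r/\sqrt{m'})=O(\sqrt{r})=O(\sqrt{\mu})$-fraction of $2v^\star r$ — and since $r\le m'\le m/k$ we actually only need $O(\sqrt{m'})=\tilde O(\sqrt{m/k})$, so multiplying by the $O(k)$ lost in pre-bundling and the $O(\log\mu)$ from Lemma \ref{lem:logarithmic} gives $\tilde O(\sqrt{km})$. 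If instead $r<\sqrt{m'}$, then the retained welfare $\le 2v^\star r$ is dominated, up to the factor $r=\tilde O(\sqrt{m/k})$, by a single part's value $<2v^\star$, and the grand-bundle fallback (giving the grand bundle to the consumer with the highest value for it, at that value) already achieves an $\tilde O(\sqrt{m/k})\le\tilde O(\sqrt{km})$ approximation. Both cases are subsumed by taking the best of the grid outputs and the fallback.

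The main obstacle I expect is bookkeeping the constants through the three lossy steps — the $O(k)$ from pre-bundling (which must be justified purely from monotonicity, not submodularity, since valuations are arbitrary), the $O(\log\mu)$ from Lemma \ref{lem:logarithmic}, and the $O(\sqrt{m'})$ from the super-bundling — and, more delicately, arguing that the demand-query "acceptance" test correctly detects, at the right grid level, the $r$ consumers guaranteed by Lemma \ref{lem:logarithmic}: a consumer who values some pre-bundle at $\ge v^\star/O(1)$ but also has large value for the grand bundle might demand a large set rather than a singleton, but since all pre-bundles are uniformly priced at $v$ a nonempty demand still certifies positive profit, which is all that is needed for the \promising condition, so this is a non-issue once phrased correctly. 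Everything else — the geometric grid having $\poly(m,n)$ relevant levels, and each iteration costing $n$ demand queries plus one invocation of the $\poly(m,n)$-query procedure of Lemma \ref{lem:promising} — is routine.
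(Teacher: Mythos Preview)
Your proposal has a genuine gap at the pre-bundling step. You fix an \emph{arbitrary} partition of $M$ into $m'=\lceil m/k\rceil$ pre-bundles of size $\approx k$, and then assert that ``every set $S_i$ with $|S_i|\in[k,2k)$ is, up to a constant factor in value (using only monotonicity), captured by $O(1)$ of the pre-bundles.'' This is false for general valuations. A set of $<2k$ items can intersect up to $2k$ distinct pre-bundles, not $O(1)$; and monotonicity alone gives you nothing about how $v_i(S_i)$ decomposes across those intersections. A single-minded bidder who values exactly $S_i$ at $v^\star$ and every strict subset at $0$ can have her $k$ items scattered across $k$ different pre-bundles, so that \emph{every} union of $<k$ pre-bundles has value $0$ to her. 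The later sentence ``each such part has size $<2k$, hence meets $O(1)$ pre-bundles'' repeats the same confusion between the \emph{size} of $S_i$ and the \emph{number of parts of the fixed partition} it touches.

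There is a second, independent error in the \promising step. From ``consumer $i$ has nonempty demand at uniform price $v$'' you conclude ``$i$ values some single pre-bundle at $>v$,'' and then that ``each super-bundle is profitable for all $t$ accepting consumers.'' Neither follows: a nonempty demand only certifies that some set $U$ of pre-bundles has $v_i(U)>v|U|$, not that any singleton does; and even if it did, that singleton lies in exactly one super-bundle, not all of them. So you have not established that every super-bundle is profitable for $\ge |\mathcal B_v|$ consumers, which is precisely what Lemma \ref{lem:promising} needs. The paper avoids both issues by running a greedy that \emph{adaptively} selects, for each consumer in turn, her best size-$[k,2k)$ bundle from the remaining items (computable with $\poly(m,n)$ demand queries by \citet{BN05}); this greedy loses a factor of $2k$ against the target $W$, and then the super-bundles are formed from the greedy allocation itself (so each super-bundle contains the parts of $\sqrt{r}$ specific consumers and is profitable for them), after which Lemmas \ref{lem:logarithmic} and \ref{thm:general_efficient_convert} apply verbatim.
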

Before proving this lemma we state and prove its corollary.
\begin{corollary}
\label{thm:general_efficient}
There is a computationally efficient algorithm that for every combinatorial market uses $\poly(m,n)$ demand queries to find a \fcwe{} with welfare that is a $\tilde{O}(m^{2/3})$-approximation to the optimal welfare.
\end{corollary}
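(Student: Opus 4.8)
The plan is to combine the from-scratch guarantee of Lemma \ref{cl_sqrt_kn_greedy_approx} with the (essentially trivial) grand-bundle equilibrium, using a size threshold $K=\lceil m^{1/3}\rceil$ to trade off the two. The algorithm produces $O(\log m)$ candidate \fcwa{}: for every power of two $k$ with $k\le m$ it invokes the algorithm of Lemma \ref{cl_sqrt_kn_greedy_approx} with parameter $k$. It produces one more candidate, the \emph{grand-bundle equilibrium}, in which the single bundle $M$ is allocated to $i^\star\in\arg\max_i v_i(M)$ at price $v_{i^\star}(M)$; one checks directly that this is a \fcwe{} (consumer $i^\star$ gets payoff $0$ and every other consumer $i$ would get payoff $v_i(M)-v_{i^\star}(M)\le 0$ from the bundle), and $i^\star$ together with the price is found with $n$ demand queries at zero prices. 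Finally the algorithm outputs the candidate of largest welfare, evaluating each candidate with at most $n$ further demand queries (a value query $v_i(T)$ is simulated by one demand query that places a prohibitive price on $M\setminus T$). Altogether this is $\poly(m,n)$ demand queries.

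For the analysis, fix a welfare-optimal allocation $(O_1,\dots,O_n)$ with $\sum_i v_i(O_i)=\opt$; this allocation is used only in the analysis. Partition the consumers with $O_i\ne\emptyset$ by bundle size: $N_{\mathrm{small}}=\{i:1\le|O_i|<K\}$ and $N_{\mathrm{large}}=\{i:|O_i|\ge K\}$. Since the $O_i$ are disjoint subsets of an $m$-element set, $|N_{\mathrm{large}}|\le m/K$. At least one of $\sum_{i\in N_{\mathrm{small}}}v_i(O_i)$, $\sum_{i\in N_{\mathrm{large}}}v_i(O_i)$ is $\ge\opt/2$.

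If the large side carries at least $\opt/2$, the grand-bundle candidate already suffices: its welfare is $\max_i v_i(M)\ge\max_{i\in N_{\mathrm{large}}}v_i(O_i)\ge\frac{1}{|N_{\mathrm{large}}|}\sum_{i\in N_{\mathrm{large}}}v_i(O_i)\ge\frac{K}{m}\cdot\frac{\opt}{2}\ge\frac{\opt}{2m^{2/3}}$. If instead the small side carries at least $\opt/2$, bucket its consumers according to $\lfloor\log|O_i|\rfloor$ into $O(\log m)$ groups; some group $N_k$, with size parameter $k=2^t<K$, satisfies $|O_i|\in[k,2k)$ for all $i\in N_k$ and $\sum_{i\in N_k}v_i(O_i)\ge\frac{\opt}{2}\cdot\frac{1}{O(\log m)}$. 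The algorithm ran Lemma \ref{cl_sqrt_kn_greedy_approx} with exactly this $k$, so one of its outputs is a \fcwe{} of welfare at least $\frac{1}{\tilde O(\sqrt{km})}\sum_{i\in N_k}v_i(O_i)\ge\frac{\opt}{\tilde O(\sqrt{Km})}=\frac{\opt}{\tilde O(m^{2/3})}$, where we used $k\le K=\Theta(m^{1/3})$ and absorbed the $O(\log m)$ bucket loss into $\tilde O(\cdot)$. In either case the best candidate—which the algorithm outputs—achieves an $\tilde O(m^{2/3})$-approximation to $\opt$, proving the corollary. (For the finitely many $m$ below a fixed constant the claim is immediate, since the grand bundle alone has welfare $\ge\opt/\mu\ge\opt/O(1)$.)

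The main conceptual step I expect to be the crux is the case split itself: the guarantee $\tilde O(\sqrt{km})$ of Lemma \ref{cl_sqrt_kn_greedy_approx} degrades precisely as the relevant optimal bundles grow, yet that is exactly the regime in which the near-trivial grand-bundle equilibrium becomes competitive, because only $m/K$ disjoint bundles of size $\ge K$ can coexist. Balancing the two resulting bounds—$\sqrt{Km}$ from the small side against $m/K$ from the large side—forces $K=\Theta(m^{1/3})$ and yields the exponent $2/3$; the remaining work (iterating over $O(\log m)$ size buckets, checking the grand bundle is a \fcwe{}, and bookkeeping the logarithmic factors) is routine.
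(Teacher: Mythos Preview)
Your proof is correct and follows essentially the same route as the paper's: both split on a size threshold $K=\Theta(m^{1/3})$, handle the large-bundle regime with the grand-bundle \fcwe{}, and handle the small-bundle regime by invoking Lemma~\ref{cl_sqrt_kn_greedy_approx} at the relevant bucket size. The only notable difference is that the paper enumerates all integers $k\in\{1,\dots,\lceil m^{1/3}\rceil\}$ whereas you enumerate only the $O(\log m)$ powers of two (and correspondingly do the bucketing in the analysis with dyadic intervals); this is a harmless and slightly cleaner variant, not a different argument.
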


\begin{proof}
We present an algorithm for constructing a high-welfare \fcwe{}. The algorithm performs an exhaustive search -- for every $k \in \{1,2,\ldots , \lceil m^{1/3}\rceil\}$, it applies Lemma \ref{cl_sqrt_kn_greedy_approx} to find a \fcwe{} for this $k$ (the choice of enumeration range becomes clear in the analysis below). It then compares the maximum welfare found by the exhaustive search to the welfare of the \fcwe{} in which the grand bundle is allocated to consumer $i^*$ with the highest value for it, charging $i^*$'s value as payment. The algorithm returns the \fcwe{} with the overall maximum welfare.

We now analyze the above algorithm and show it achieves a $\tilde{O}(m^{2/3})$-approximation; it is not hard to see the algorithm is computationally efficient and uses $\poly(m,n)$ demand queries. Consider a welfare-optimal allocation $O=(O_1,\ldots, O_n)$. A variation%
\footnote{The fact we consider a bundle size rather than value is the difference from Lemma \ref{lem:logarithmic}.} %
of Lemma \ref{lem:logarithmic} shows that there exists a bundle size $k^*\in[m]$, such that the set of consumers who are allocated roughly $k^*$ items $N_{k^*} := \{i \; : \; |O_i|\in [k^*,2k^*) \}$ accounts for a logarithmic fraction of the optimal welfare:
\begin{align*}
\sum_{i\in N_{k^*}} v_i(O_i) \ge \opt/O(\log m).
\end{align*}
Denote by $O'$ the allocation to consumer set $N_{k^*}$ (observe that $O, O'$ and $k^*$ are not assumed to be known by the algorithm, and are only used in its analysis). There are two possible cases: 

\begin{enumerate}
\item If $k^*\leq m^{1/3}$, running the algorithm guaranteed by Lemma \ref{cl_sqrt_kn_greedy_approx} will obtain a \fcwe{}, whose welfare is at least a $\tilde{O}(\sqrt{k^*\cdot m}) \leq \tilde{O}((m^{4/3})^{1/2}) = \tilde{O}(m^{2/3})$- approximation to the welfare of $O'$, and thus also to the welfare of $\opt$.
\item If $k^* > m^{1/3}$, consider the \fcwe{} resulting from bundling all items into a single grand bundle and allocating it to the consumer $i^*$ with the highest value for it. Since at most $m/k^*$ consumers can be allocated a bundle of size $\ge k^*$, the value of $i^*$ must be a $m/k^* < m/m^{1/3} = O(m^{2/3})$
-approximation to $\opt$.
\end{enumerate}
This completes the proof of the corollary.
\end{proof}

We now prove this section's key lemma.

\begin{proof}[of Lemma \ref{cl_sqrt_kn_greedy_approx}]
Given $k$, we define a greedy procedure $\mathcal{A}_k$ as follows. The procedure maintains a set $N'$ of consumers not yet allocated, and a set $M'$ of items not yet allocated. Initially, $N' \longleftarrow N$ and $M'\longleftarrow M$. Until $|M'| < k$, $\mathcal{A}_k$ repeats the following:
\begin{itemize}
\item Let 
$$
(i^*,A^*) = {\arg\max}_{i\in N',A\subseteq M' : |A|\in[k,2k)} \{v_i(A)\}
$$ 
be a pair of unallocated consumer and bundle of size $\in[k,2k)$ such that $v_{i^*}(A^*)$ is maximum. This is known to be efficiently computable with polynomially many demand queries \citep{BN05}.
\item Set $N' \longleftarrow N'\setminus \{i^*\}$, $M'\longleftarrow M'\setminus S^*$.
\end{itemize}

Without loss of generality let $A=(A_1, \ldots, A_r)$ be the final allocation reached by  $\mathcal{A}_k$ (where not all items are necessarily allocated). Observe that since $\mathcal{A}_k$ allocates a bundle of size at least $k$ in each iteration, it must be the case that $r \leq  m/k$. The proof of the following inequality is included below for completeness:
\begin{align}
\label{eq_k_approx_greedy}
\sum_{i\in[r]}{v_i(A_i)} \ge \frac{W}{2k},
\end{align}
where $W$ is the maximum welfare of an allocation $S$ such that $|S_i|\in[k,2k)$ for every non-empty bundle $S_i$.

We now apply Lemma \ref{lem:logarithmic} to $A$, resulting in a value $v$ and allocation $S'$ involving at most $r$ consumers, to which in turn we can apply Lemma \ref{thm:general_efficient_convert}. It is not hard to see that the application of these lemmas requires polynomial time and a polynomial number of demand queries.%
\footnote{Value queries, when used, can be simulated by demand queries \citep{BN07}.} %
The final output of Lemma \ref{thm:general_efficient_convert} is a \fcwe{} with welfare that is a $\tilde{O}(\sqrt{m/k})$-approximation to the welfare of $A$, and a $\tilde{O}(2k\sqrt{m/k})=\tilde{O}(\sqrt{km})$-approximation to $W$.

It is left to prove Inequality (\ref{eq_k_approx_greedy}). Consider the $t$th step of procedure $\mathcal{A}_k$. Let $i_1,\ldots, i_t$ be the consumers chosen in steps 1 to $t$ and let $M_t$ be the items allocated. Let $\alg_t$ be the aggregate welfare, that is, $\alg_t=\sum_{\ell\leq t}v_{i_\ell}(A_{i_\ell})$; set $\alg_0=0$. Let $W_t$ be the welfare of the allocation $S$ restricted to yet unallocated items $M\setminus M_t$ (note that $W_t$ is decreasing in $t$); set $W_0=W$. Observe that $\alg_{t}-\alg_{t-1}=v_{i_t}(A_{i_t})$. We claim that 
\begin{align}
\label{eq:step_t}
W_{t-1}-W_{t}\leq 2k(\alg_{t}-\alg_{t-1}).
\end{align}
Assuming this claim holds, starting with $t=1$ and summing over all steps of the procedure we get that $W\leq 2k\sum_{i\in[r]}{v_i(A_i)}$, as required. 

We now prove Inequality (\ref{eq:step_t}) for $t=1$; the proof for the following steps is similar and thus omitted. Consider consumer $i_1$ receiving $A_{i_1}$ in the first step of procedure $\mathcal{A}_k$. If $i_1$ was allocated $A_{i_1}$ in $S$ then we are done. Otherwise, we gain $v_{i_1}(A_{i_1})$, and lose at most the aggregate value of the $\le |A_{i_1}|\le 2k$ other consumers whose bundles in $S$ intersect with $A_{i_1}$. Since  $v_{i_1}(A_{i_1})$ is at least as large as the value of each of these consumers (else, procedure $\mathcal{A}_k$ would have chosen one of them as the first consumer to allocate to), Inequality (\ref{eq:step_t}) follows. This also completes the proof of Inequality (\ref{eq_k_approx_greedy}) and of Lemma \ref{cl_sqrt_kn_greedy_approx}.
\end{proof}

\section{Welfare Maximization with Budget-Additive Valuations}
\label{sec:budget-additive}

In Section \ref{sec-general} we showed that every combinatorial market admits a \fcwe{} that provides an approximation ratio of $\tilde O(\sqrt m)$ to the social welfare. A natural next step is to understand whether we can get better approximation ratios for specific subclasses. We make progress towards this goal by showing that if the valuations are all budget additive%
\footnote{A valuation $v$ is \emph{budget additive} if for every bundle $S$ we have that $v(S) = \min \{\sum_{j\in S} v(j), b\}$.} %
then we can get a logarithmic approximation. The best lower bound we currently know shows that no market with budget additive valuations can achieve an approximation ratio better than $\frac 5 4$.

\subsection{A Logarithmic Approximation}

\begin{theorem}
\label{thm:ba-positive}
In every combinatorial market with budget-additive valuations, there is a \fcwe{} with welfare that is an $O(\log m)$-approximation to the optimal welfare.
\end{theorem}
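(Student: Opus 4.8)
The plan is to reduce to the framework already built in Section \ref{sec:prep}, namely Lemma \ref{lem:promising} and Lemma \ref{lem:logarithmic}, by constructing a \promising priced bundling whose aggregate price is an $\Omega(1/\log m)$-fraction of $\opt$. Fix a welfare-optimal allocation $(O_1,\ldots,O_n)$. Because each $v_i$ is budget-additive, $v_i(O_i)=\min\{\sum_{j\in O_i}v_i(j),b_i\}$, so we may distinguish each consumer as ``saturated'' (value equals the budget $b_i$) or ``unsaturated'' (value equals $\sum_{j\in O_i}v_i(j)<b_i$); by splitting the optimal welfare in two we lose only a factor of $2$ and may assume all allocated consumers are of the same type.

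\emph{The unsaturated case.} Here $v_i$ behaves additively on $O_i$, so $v_i(O_i)=\sum_{j\in O_i}v_i(j)$. First apply Lemma \ref{lem:logarithmic} to $(O_1,\ldots,O_n)$ to obtain a value $v$ and an allocation $S=(S_1,\ldots,S_n)$ of $M'\subseteq M$ with $v_i(S_i)\in[v,2v)$ for all nonempty parts and welfare at least $\opt/\Theta(\log m)$. Now I would split each $S_i$ into singletons; since $v_i(S_i)<2v$ and $v_i$ is additive on $S_i$, the average singleton value is at least $v/(2|S_i|)$, but more usefully we can group singletons of $S_i$ into sub-bundles each worth $\Theta(v')$ for a suitable common threshold $v'$ — reapplying the bucketing idea of Lemma \ref{lem:logarithmic} at the singleton level — losing another logarithmic factor, and then bundle these across consumers. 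The point is that once every relevant sub-bundle has roughly equal value $v'$ and the parts can be freely recombined (additivity on the relevant sets), we can pack them into $\ge|\mathcal B|$ bundles each profitable for $\ge|\mathcal B|$ consumers exactly as in the proof of Theorem \ref{thm:multi-positive}; then Lemma \ref{lem:promising} finishes. The key quantitative claim will be that we can always form a bundling $\mathcal B$ with price $v'-\epsilon$ per bundle where $|\mathcal B|\cdot v' = \Omega(\sum_i v_i(S_i))$ while keeping each bundle desirable to enough consumers, which follows from a counting argument balancing the number of bundles against their value, just as $k(v-\epsilon)\geq \Omega(v n')$ was derived in Section \ref{sec:multi-unit}.

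\emph{The saturated case.} Here $v_i(O_i)=b_i$ regardless of which (sufficiently large) bundle consumer $i$ receives, so consumers behave almost like ``budget-demand'' buyers. Order the saturated consumers by decreasing budget, $b_1\geq b_2\geq\cdots\geq b_t$, where $\sum_{i\le t}b_i \ge \opt/2$. Apply Lemma \ref{lem:logarithmic}-style bucketing to the $b_i$'s: there is a threshold $\beta$ and a set of $t'$ consumers each with $b_i\in[\beta,2\beta)$ whose budgets sum to $\Omega(\opt/\log m)$. For these, I would bundle \emph{all} items into $k=\lfloor t'/2\rfloor$ equal-sized bundles and price each at $\beta-\epsilon$; the only thing to check is that each such bundle, being of size $\ge m/t'$ and hence at least as large as the $(k{+}1)$st saturated consumer's optimal bundle, still yields value $\ge\beta-\epsilon$ to the bottom half of these consumers (a saturated consumer reaches its budget on any bundle containing enough value — here one must verify the bundle is large/valuable enough, using that $O_i$ was saturated and the equal-sized bundles are not too small). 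That gives a \promising priced bundling with aggregate price $k(\beta-\epsilon)=\Omega(t'\beta)=\Omega(\opt/\log m)$, and Lemma \ref{lem:promising} again concludes.

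\emph{Main obstacle.} The delicate point is the saturated case: unlike unit-demand or additive consumers, a saturated budget-additive consumer only hits her budget on bundles carrying enough of \emph{her own} value, so making an equal-sized bundle $B$ simultaneously profitable for many consumers requires that $B$ contain, for each such consumer, items she values enough — this is not automatic from $|B|$ being large. I expect the fix is to argue that among the saturated consumers in the bucket $[\beta,2\beta)$, the pigeonhole over the equal-sized bundles ensures each bundle inherits a constant fraction of some consumer's optimal bundle, and to absorb the resulting constant loss into the $O(\log m)$. Handling this carefully, and correctly combining it with the unsaturated case (taking the better of the two \fcwa{}, each already an $O(\log m)$ approximation to its half of $\opt$), yields the theorem.
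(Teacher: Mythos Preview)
Your proposal has a genuine gap: the \promising priced bundling route (Lemma~\ref{lem:promising}) cannot by itself deliver an $O(\log m)$ guarantee for heterogeneous budget-additive valuations, and the issue is not a detail to ``absorb into the constant'' but a structural barrier. Concretely, take $m=n$ items and let consumer $i$ be additive with $v_i(j)=1$ if $j=i$ and $0$ otherwise, with budgets large enough that everyone is unsaturated. Then $\opt=n$, and any bundle $B$ priced positively is profitable only for the $|B|$ consumers whose items lie in $B$. Hence a \promising priced bundling must satisfy $|B|\ge|\mathcal B|$ for every $B\in\mathcal B$, forcing $|\mathcal B|\le\sqrt m$, and the aggregate price is at most $\sqrt m$ --- a $\Theta(\sqrt m)$ gap, not $O(\log m)$. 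This kills your unsaturated case directly: ``packing exactly as in Theorem~\ref{thm:multi-positive}'' requires that each equal-sized bundle be valuable to many consumers, which holds for identical items but fails here. Your saturated case has the same problem (which you correctly flag as the ``main obstacle''), and the pigeonhole patch you sketch guarantees each bundle is profitable for \emph{some} consumer, not for $|\mathcal B|$ consumers.

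The paper's proof avoids this by using Lemma~\ref{lem:fcwe-from-partial} (partial \fcwe{}) rather than Lemma~\ref{lem:promising}. A partial \fcwe{} does not require each bundle to be attractive to many consumers; it requires only that each participating consumer weakly prefers her own bundle at the given prices. The key step is to start not from the optimal allocation but from the output $A$ of the greedy algorithm on valuations with \emph{doubled} budgets. This buys the crucial fact that among the unsaturated consumers $\bar E$, every item $j\in A_i$ satisfies $v_i(j)\ge v_{i'}(j)$ for all $i'\in\bar E$ --- so pricing each bundle $A_i$ at $v_i(A_i)$ makes $i$ indifferent and every other $i'\in\bar E$ weakly unwilling to swap, yielding a partial \fcwe{} with revenue $\Omega(\opt)$. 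For the saturated set $E$, one buckets by budget (as you suggest) but then simply keeps each consumer's greedy bundle and prices uniformly at the bucket threshold $b$: since every $i\in S_b$ has $v_i(A_i)=b_i\ge b$ and all prices are equal, again each consumer is content with her own bundle. In both cases Lemma~\ref{lem:fcwe-from-partial} converts the partial equilibrium into a full \fcwe{} with the desired welfare. The missing idea in your attempt is precisely this greedy-based comparison property, which is what replaces the ``identical items'' structure you were relying on.
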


\begin{proof}
We construct an allocation as follows. For every valuation $v_i$ with budget $b_i$, define another valuation $v'_i$ where $v'_i(S)=\min \{\sum_{j\in S}v_i(j), 2b_i\}$. Recall the greedy algorithm of \cite{LLN06} for submodular valuations: consider the items one by one in an arbitrary order, and allocate each item to a consumer that maximizes the marginal value for it given the items he received until now. Let $A=(A_1,\ldots, A_n)$ be the allocation that the greedy algorithm produces when running on valuations $v'_1,\ldots, v'_n$. Let $\bar{E}:= \{i \mid v_i(A_i) = b_i \}$ be the set of consumers who have exhausted their budgets in the allocation $A$, and let $\bar{E}:= \{i \mid v_i(A_i) < b_i \}$ be the set of consumers who haven't exhausted their budgets.

\begin{claim}\label{cl_greedy_approx}
The allocation $A$ satisfies the following properties:
\begin{enumerate}
\item The welfare of $A$ is a $4$-approximation to the optimal welfare: $\sum_{i\in N} v_i(A_i) \geq \opt/4$.
\item For every consumer $i\in \bar{E}$ and item $j\in A_i$, it holds that $v_i(j)\ge\max_{i'\in \bar{E}}\{v_{i'}(j)\}$.
\end{enumerate}
\end{claim}
\begin{proof}[of Claim \ref{cl_greedy_approx}] 
For the first part, recall that the greedy algorithm provides a $2$-approximation to the optimal welfare with respect to the valuations $v'_i$ \citep{LLN06}. Since for each bidder $i$ the difference between $v_i$ and $v'_i$ is only the budget, we have that for every bundle $S$, $v_i(S)\leq v'_i(S)\leq 2v_i(S)$. Two direct consequences of this inequality are:
\begin{itemize}
\item The welfare of the optimal allocation with respect to the valuations $\{v_i\}$ is at most the welfare of the optimal allocation with respect to the valuations $\{v'_i\}$.
\item For every $i$, $v_i(A_i)\geq v'_i(A_i)/2$.
\end{itemize}
Together these imply the first part of the claim. The second part is a direct consequence of the definition of the greedy algorithm.
\end{proof}

We now distinguish between two cases:
\begin{enumerate}
\item \emph{Most of the welfare in allocation $A$ comes from consumers in $E$:}
$$
\sum_{i \in E} v_i(A_i) \geq \frac{1}{2}\sum_{i\in N} v_{i}(A_{i}).
$$
Assume that for all $i\in E$, $b_i \geq \sum_{i'\in E}v_{i'}(A_{i'}) / 2m$ -- this is without loss of generality as such consumers contribute at least a quarter of the total welfare of $A$ (c.f., proof of Lemma \ref{lem:logarithmic}). Divide the consumers in $E$ into bins according to their budgets, such that consumer $i$ is in bin $b$ if $b_i\in[b,2b)$. Notice there are at most $\log m+2$ bins. Denote the set of players in bin $b$ by $S_b$. By the pigeonhole principle, there must be one bin $b$ such that
\begin{align*}
\sum_{i\in S_b}b_{i} \geq \frac {\sum_{i \in E}b_i} {\log m+2} = \frac {\sum_{i\in E} v_{i}(A_{i})} {\log m +2}
\geq \frac {\sum_{i\in N} v_{i}(A_{i})} {4(\log m +2)},
\end{align*}
where the equality is by definition of $E$. On the other hand, by definition of $S_b$ we have $\sum_{i\in S_b}b_{i} \leq |S_b|\cdot 2b$, since $b\leq b_i < 2b$. Combining both inequalities
we get
\begin{align}\label{eq_rev_Sb}
&|S_b|\cdot b \geq \frac {\sum_{i\in N} v_i(A_i)} {8(\log m +2)} \geq \frac{\opt}{32(\log m +2)} = \frac{\opt}{O(\log m)},
\end{align}
where the second transition is by Claim \ref{cl_greedy_approx}.

To complete the construction, for every $i\in S_b$ place all items in $A_i$ in the same bundle, and add any remaining items (items that were allocated to other consumers or were not allocated at all) to an arbitrary bundle. Denote the resulting bundles by $B_i$. Now allocate $B_i$ to every consumer $i\in S_b$ for a uniform price of $p_{B_i}=b$. We claim we have constructed a \pcbe{} with respect to the consumers in $S_b$. To see this, note that $S_b\subseteq E$ and therefore $v_i(B_i) = b_i$, so $B_i$ maximizes the value of consumer $i\in S_b$. Since the price of all bundles is $b$, $B_i$ is contained in the demand set of consumer $i$. 

We now apply Lemma \ref{lem:fcwe-from-partial} to obtain a \fcwe{} with welfare at least
\begin{align*}\label{eq_rev}
\sum_{i \in S_b} p_{B_i} = \sum_{i \in S_b} b = |S_b|\cdot b \geq \frac{\opt}{O(\log \; m)},
\end{align*}
where the last inequality is by \eqref{eq_rev_Sb}. This proves the welfare guarantee for the first case. 

\item \emph{Most of the welfare in allocation $A$ comes from consumers in $\bar{E}$:}
$$
\sum_{i \in \bar{E}} v_i(A_i) \geq \frac{1}{2} \sum_{i\in N} v_{i}(A_{i}).
$$
For every $i\in \bar{E}$ place all items in $A_i$ in the same bundle. Let $T$ be the set of all remaining items (items that were allocated to consumers in $E$ or were not allocated at all). Add $T$ to the bundle of consumer $i^*$ where $i^* \in \arg\max_{i\in \bar{E}} \{ v_i(A_i \cup T)\}$. Denote the resulting bundles by $B_i$. Set the price of $B_i$ to be $p_{B_i} = v_i(B_i)$. We claim we have constructed a \pcbe{} with respect to the consumers in $\bar{E}$. To prove this we need to show that for every $i\in \bar{E}$, $B_i$ is in the demand set of $i$. 

Fix a consumer $i\in \bar{E}$. Claim \ref{cl_greedy_approx} guarantees that for any $i'\in \bar{E}\setminus\{i^*\}$, $v_i(B_{i'}) \le \sum_{j\in B_{i'}} v_i(j) \leq \sum_{j\in B_{i'}} v_{i'}(j) = v_{i'}(B_{i'})=p_{B_{i'}}$, so $i$ gains no profit from bundle $B_{i'}$. Furthermore, he gains no profit from $B_{i^*}$, since by definition $i^*$ is the value-maximizing consumer for this bundle and the price is set to be $i^*$'s value. 
Thus, we get a \pcbe{} and in particular:
\[ 
\sum_{i\in \bar{E}} p_{B_i} = \sum_{i \in \bar{E}} v_i(B_i) \geq \frac{1}{2} \sum_{i\in N} v_{i}(A_{i}) \geq \opt/8,  
\]
where the final inequality is by Claim \ref{cl_greedy_approx}. Using Lemma \ref{lem:fcwe-from-partial} we conclude that there is a \fcwe{} with welfare at least $\opt/8$.
\end{enumerate}

\end{proof}

\subsection{An Impossibility Result}

The next result shows that a constant-factor loss in welfare is inevidable for budget-additive valuations.

\begin{proposition}
\label{thm:ba-negative}
There exists a combinatorial market with budget-additive valuations such that every \fcwe{} has welfare that is at most a $\frac{4}{5}$-fraction of the optimal welfare.
\end{proposition}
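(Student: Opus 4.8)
The plan is to exhibit a small market in which the unique efficient allocation splits items finely among several budget-additive consumers, but any bundling is forced—by the profit-maximization and market-clearance constraints—to coarsen the allocation and thereby lose a constant fraction of welfare. The natural template is the two-consumer lower bound (Proposition \ref{prop_two_consumers_lb}): take items that an optimal allocation separates, design the budgets so the "grand-bundle" allocation (or any coarse bundling that clears the market) caps somebody's value, and then rule out all finer bundlings by showing the corresponding configuration LP has fractional value strictly exceeding the best integral (equilibrium) value. Concretely, I would look for an instance with a handful of items and consumers—say two or three of each—where the optimal welfare is $5$ (up to $\epsilon$) while every \fcwe{} achieves at most $4$.

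First I would fix a candidate instance: for instance, two unit-demand-like consumers with budget $2$ and value roughly $1$ per item on a small item set, arranged so that the efficient allocation gives each consumer two separate items for total value close to $5$, but any bundle that clears the market and is profitable for its recipient forces that recipient's budget-additive cap to bind below the efficient contribution, or forces two items onto a single consumer whose cap truncates the sum. Second, for each possible bundling $\mathcal{B}$ of the item set (there are only finitely many, and by symmetry only a few cases), I would argue no \we{} exists over the induced market except the trivial grand-bundle one, using the configuration-LP / second-welfare-theorem criterion stated before Proposition \ref{prop_two_consumers_lb}: produce an explicit fractional assignment (typically splitting the grand bundle or large bundles in halves among the consumers) whose LP objective beats the best integral allocation over those bundles. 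Third, I would compute the welfare of the surviving equilibria—essentially only the all-items-in-one-bundle outcome, or a single coarse two-bundle outcome—and verify it is at most $\tfrac{4}{5}$ of the optimum.

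The main obstacle is the case analysis over bundlings: unlike the two-consumer two-item setting, here there may be several non-isomorphic partitions to dispatch, and for each one I must certify \emph{non-existence} of a competitive equilibrium (not just sub-optimality of one equilibrium), which is where the configuration-LP argument does the heavy lifting. Getting the budgets and per-item values tuned so that (i) the fine allocation is genuinely optimal with value $5-O(\epsilon)$, (ii) every intermediate bundling has an LP integrality gap, and (iii) the forced coarse equilibrium is exactly $4$, is the delicate part; I expect it to come down to choosing, e.g., budgets equal to $2$, unit values equal to $1$, one "complementary" twist (a consumer who values the grand bundle slightly more than the sum of its caps) to kill the finest bundling, and then checking the $\tfrac{4}{5}$ ratio by direct computation. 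Letting $\epsilon \to 0$ yields the stated bound, and the construction extends to more consumers and items by padding with null valuations exactly as after Proposition \ref{prop_two_consumers_lb}.
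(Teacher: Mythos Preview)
Your high-level strategy---construct a small instance, use the configuration-LP criterion to certify that no competitive equilibrium exists for the fine (unbundled) market, and then check that every surviving bundling has welfare at most $4$ against an optimum near $5$---is exactly the paper's approach. So the template is right.

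However, your concrete suggestions for the instance would not work, and this is a genuine gap. You propose to ``kill the finest bundling'' via a ``complementary twist (a consumer who values the grand bundle slightly more than the sum of its caps).'' That device, borrowed from Proposition~\ref{prop_two_consumers_lb}, is incompatible with budget-additive valuations: every budget-additive valuation is submodular, hence subadditive, so no consumer can value a bundle strictly more than the sum of its parts. The integrality gap in the budget-additive case has to come from the \emph{opposite} direction---from budget caps making values strictly \emph{sub}additive---and your proposal does not identify this mechanism. Relatedly, two ``unit-demand-like'' consumers with uniform item values will not suffice; with two budget-additive consumers and two items the valuations are gross substitutes and a competitive equilibrium always exists.

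The paper's instance uses three consumers and three items $a,b,c$, with budgets $2,2,2-\epsilon$ and per-item values chosen so that consumer~1 has $v_1(a)=2$ and $v_1(\{b,c\})=2$ (capped), consumer~2 has $v_2(b)=v_2(c)=2$ but $v_2(\{b,c\})=2$ (capped), and consumer~3 breaks ties. The LP gap is witnessed by a half-half fractional assignment exploiting these caps, and the best bundled equilibrium gives $a$ to consumer~1 and $\{b,c\}$ to consumer~2 for welfare $4$, versus $\opt\approx 5$. The point is that the heart of this proof is the specific instance and its LP certificate; once you recognize that the obstruction must be subadditivity-from-budgets rather than complementarity, the construction is short, but your proposal as written is aimed at the wrong source of the gap.
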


\begin{proof}
Let $\epsilon,\delta$ be arbitrarily small positive constants and consider the budget-additive instance in Table \ref{tab:BA}. It is not hard to verify that the optimal integral allocation is obtained by giving items $a,b$ and $c$ to consumers $1,2$ and $3$ respectively, yielding $\opt = 5-\epsilon/2-\delta$. As in Section \ref{sec:two-buyers}, we will show there exists a better fractional solution to the configuration LP, implying there is no competitive equilibrium for this instance and hence bundling items is necessary to obtain a \fcwe{}. A straightforward calculation shows that in this case, the best solution is to give item $a$ to consumer $1$ (for a price of $2-\epsilon$), and the bundle $\{ b,c\}$ to consumer $2$ (for a price of $2$), yielding total welfare of $v_1(a)+v_2(\{b,c\}) = 2+2 = 4$. Thus, the approximation ratio of the \fcwe{} approaches $5/4$ as $\epsilon,\delta$ approach $0$.

To see that there exists a better fractional solution, consider the solution which assigns to consumer $1$ item $a$ with probability $1/2$ and bundle $\{b,c\}$ with probability $1/2$, to consumer $2$ item {b} with probability $1/2$ and item {c} with probability $1/2$, and to consumer 3 item $a$ with probability $1/2$. This is a feasible solution to the LP and gives an objective value of 
\[ 
\frac{1}{2} (v_1(a) + v_1(\{b,c\})) + \frac{1}{2}(v_2(b) + v_2(c)) + \frac{1}{2}v_3(a) = 
2+2+\frac{2-\epsilon}{2} = 5 - \frac{\epsilon}{2}, 
\]
which is greater than the optimal integral solution.
\begin{table}%[t]
\centering{
\caption{A lower bound for budget-additive consumers \label{tab:BA}}{%
\begin{tabular}{|l|c|c|c|c|c|}
\hline
                       	& a  				& b    		& c 		 		& Budget 		 \\
\hline
Consumer $1$        	&  $2$                      	&  $1$ 		& $1$	 		& 2			 \\
Consumer $2$  		&   $0$    			&   $2$        	& $2$  			& $2$   		 \\
Consumer $3$  		&   $2-\epsilon$     	& $0$		&   $1-\epsilon/2-\delta$      & $2-\epsilon$   \\

\hline
\end{tabular}}}
\end{table}
\end{proof}

\section{Revenue Guarantees for Matroid Rank Valuations}
\label{sec:revenue}

The first welfare theorem guarantees that whenever a competitive equilibrium exists it maximizes social welfare.
However, there is no guarantee on the \emph{revenue} in equilibrium -- indeed, there are markets in which every competitive equilibrium has $0$ revenue, even if all consumers have unit-demand valuations. This is the case, for example, in a market with $m$ identical items and $n<m$ unit-demand consumers: since supply exceeds demand, the equilibrium price of every item is $0$.

In this section we study revenue guarantees in two interesting subclasses of gross substitutes valuations (which, in particular, both include unit-demand valuations).
For both classes we show that a logarithmic fraction of the optimal welfare can be extracted in a competitive bundling equilibrium.
Note that this is the best fraction one can hope for, as is evident from the following example, which is a simplified version of the market that appears in the proof of Theorem \ref{thm:multi-tight}.
\begin{example}
Suppose there are $m=n$ items and consumer $i$ is unit-demand with value $1/i$ for any item. By applying similar arguments to the ones appearing in the proof of Theorem \ref{thm:multi-tight}, in every competitive bundling equilibrium, the total price for all the bundles is at most $1$, whereas the optimal welfare is $\ln n$ (obtained by assigning a single item to each consumer).
\end{example}
We note that this example was previously observed in \cite{FGL}.

Before turning to the statement of the revenue guarantees, let us recall
the formal definition of a matroid (see, e.g., \cite{Sch03}), and present the two subclasses considered in our theorem.
A set system $(M,\cI)$ (where $M$ is a set of items and $\cI$ is a family of independent subsets of $M$) is a matroid if (1) $\emptyset\in \cI$, (2) the family of independent sets $\cI$ is downward closed and (3) the \emph{exchange property} holds: if $S,T\in\cI$ and $|S|>|T|$, then there exists an element $j\in S\setminus T$ such that $T\cup\{j\}\in\cI$. One of the simplest matroids is a \emph{$k$-uniform} matroid, whose independent sets are simply all sets of size at most $k$.

The matroid \emph{rank function} $r:2^M \to\mathbb{Z^+}$ assigns every set $S\subseteq M$ the size of the largest independent set which is a subset of $S$.
The \emph{rank} of a matroid is the maximum rank over all sets. A valuation $v$ is a \emph{weighted matroid rank function} if there is a matroid $(M,\cI)$, and non-negative weights $\{w_{j}\}_{j\in M}$ such that for every item set $S\subseteq M$, $v(S)$ is the weight of the maximum-weight independent set which is a subset of $S$:
\[ v(S) = \max_{T \subseteq S,\; T\in \cI} \sum_{j\in T}{w_{j}}. \]

In this section we consider the following two settings:
\begin{itemize}
\item Every consumer has a weighted matroid rank function of a uniform matroid, possibly with different ranks and different weights.
\item Every consumer has a weighted matroid rank function of the same general matroid, possibly with different weights.
\end{itemize}
Note that the two settings are incomparable.
In the former, different consumers have different matroids, but all matroids are uniform, while in the latter, all consumers have the same matroid (possibly with different weights), but that matroid is not necessarily uniform.
With this we are ready to state our result.
\begin{theorem}
\label{thm:matroid_rev}
For every market with consumers whose valuations are weighted matroid rank functions, in either of the two settings above, there exists a \fcwe{} that extracts as revenue at least $\Omega(\frac {1} {\log m})$ of the optimal welfare.
\end{theorem}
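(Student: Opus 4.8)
The plan is a two-step reduction. First, pass via the bucketing lemma to an instance in which every participating consumer receives a part of roughly the same value; then, inside such an instance, bundle the goods coarsely enough to manufacture scarcity --- so that competition among consumers drives prices up --- while keeping the post-bundling market one that still admits a competitive equilibrium (hence clears). Concretely, start from a welfare-optimal allocation $(O_1,\dots,O_n)$, which exists and is efficient because weighted matroid rank functions are \gsub{}. Apply Lemma~\ref{lem:logarithmic} to obtain a value $v$ and a sub-allocation $(S_1,\dots,S_n)$ of some $M'\subseteq M$ with $v_i(S_i)\in[v,2v)$ on each of the $r$ nonempty parts and $\sum_i v_i(S_i)\ge\opt/O(\log m)$. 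Then $rv=\Omega(\opt/\log m)$, so it suffices to exhibit a \fcwe{} whose \emph{revenue} is $\Omega(rv)$.

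For the bundling step I would treat the two settings separately, since this is exactly where the (incomparable) matroid hypotheses are used. In setting~(1), each $v_i$ is a weighted rank function of a $k_i$-uniform matroid; I would partition $M$ into $t$ equal-size bundles for a parameter $t$ to be fixed later, and argue that relative to such a bundling every consumer's induced valuation over the $t$ bundles is again (a rescaling of) a weighted rank function of a uniform matroid on the bundles --- and in particular \gsub{} --- so that the bundled market has a competitive equilibrium and therefore clears with full efficiency over bundles. (If retaining exact uniform structure is delicate because of unequal item weights, I would first sub-partition the consumers so that within a class the relevant weight scale is pinned down, which costs at most the logarithmic factor already spent.) In setting~(2), with a common matroid $(M,\mathcal I)$, equal-size bundling need not preserve \gsub{}, so I would instead look for a matroid-respecting bundling --- for instance, extending the independent parts actually used in $(O_i)$ to bases and grouping coordinates ``in parallel'' --- chosen so that the induced valuations over bundles are weighted rank functions of a truncated or contracted matroid, hence still \gsub{}. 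Either way, the upshot of this step is a bundled market that clears for free.

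Finally, to turn clearing into \emph{revenue}: a competitive equilibrium can have revenue zero, so the parameter $t$ must be chosen by a supply-reduction argument in the spirit of \citet{RTY12}. Sort the per-bundle marginal values of all consumers in nonincreasing order $g_1\ge g_2\ge\cdots$; running the bundled market with $t$ bundles realizes revenue on the order of $t\cdot g_{t+1}$, and a dyadic bucketing of the $g_j$'s shows $\max_t t\, g_{t+1}\ge\frac{1}{O(\log m)}\sum_j g_j=\Omega(rv)$. Picking $t$ accordingly --- and, should any clearance patch still be required, passing the resulting priced bundling through Lemma~\ref{lem:fcwe-from-partial} or Lemma~\ref{lem:promising}, whose constructions only raise bundle prices (property~\ref{item-FGL-prices-up} of Theorem~\ref{lemma-FGL}) and so cannot lose revenue --- yields a \fcwe{} with revenue $\Omega(\opt/\log m)$. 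The main obstacle is the second step: \gsub{} (indeed weighted matroid rank) is not closed under bundling, yet revenue extraction forces us to bundle coarsely, so the whole argument rests on finding, in each of the two settings, a bundling that is at once coarse enough for the supply-reduction step and structured enough to keep the induced valuations \gsub{}; it is precisely this tension that blocks an extension to all \gsub{} valuations.
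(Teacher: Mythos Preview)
Your plan has a genuine gap at the bundling step. You assert that in setting~(1), partitioning $M$ into $t$ equal-size bundles makes each consumer's induced valuation again a weighted uniform matroid rank (hence \gsub{}). This is false. Take a $2$-uniform matroid on six items $a,\dots,f$ with weights $3,0,2,2,0,3$ and equal-size bundles $B_1=\{a,b\}$, $B_2=\{c,d\}$, $B_3=\{e,f\}$. Then $v(B_2)=4$ and $v(\{B_1,B_3\})=6$, but $v(\{B_1\})+v(\{B_2,B_3\})=v(\{B_3\})+v(\{B_1,B_2\})=3+5=8<10$, so the M$^\natural$-concave exchange property fails and the induced valuation is not \gsub{} (it is submodular but not GS). Thus even in the uniform case you cannot count on the bundled market admitting a competitive equilibrium, and the supply-reduction step has nothing to stand on. Your description for setting~(2) (``matroid-respecting bundling \dots grouping coordinates in parallel'') is too vague to evaluate, and the same closure obstacle applies. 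A secondary issue: your final revenue step $\max_t t\,g_{t+1}\ge \tfrac{1}{O(\log m)}\sum_j g_j$ is a \emph{second} bucketing, so even if the rest worked you would only get $\Omega(\opt/\log^2 m)$, not $\Omega(\opt/\log m)$.

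The paper avoids the closure problem entirely by never asking the post-bundling market to be \gsub{}. Instead it first works at the item level: add an auxiliary additive consumer valuing each item at some level $v$ (chosen via a bucketing on \emph{per-item} values), take a competitive equilibrium of this enlarged GS market, and read it as a competitive equilibrium with reserve $v$ in the original market whose revenue is already $\Omega(\opt/\log m)$ (Lemma~\ref{lem:WE-high-rev-extra}). The only remaining task is to absorb the unsold items into some real consumer's allocation without destroying profit maximization; this is where the two hypotheses are used, via setting-specific case analyses (for uniform matroids: split on whether rank-exhausted or non-exhausted consumers carry the revenue, and dump leftovers on a maximal-rank or value-maximizing consumer; for a common matroid: an iterative ``extra-consumer solution'' that moves items out of $S_{n+1}$ one at a time using the matroid exchange property), followed by Lemma~\ref{lem:fcwe-from-partial}. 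The bundling here is dictated by the equilibrium allocation itself, not by any attempt to keep the bundled market GS.
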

The proof of the theorem appears in Appendix \ref{apx:revenue}.

\section{Optimal Welfare and Relation to Non-Linear Pricing Equilibrium}
\label{sec:LP}

In this section we discuss the relation between the solution concept of \fcwe{}, and that of \bikh{} (a.k.a.~second-order pricing equilibrium) due to \citet{BO02}. The motivation for their concept is similar to our motivation -- they develop the \emph{package assignment model}, in which trade can be of bundles rather than just individual items, and aim to extend the scope of \we{} to this model.% 
\footnote{Their model also includes multiple sellers.} %
A \bikh{} is an allocation $(S_1,\dots,S_n)$ of items to consumers, and a vector of $2^m$ bundle prices $\vec p$, such that:
\begin{enumerate}
\item \textbf{Profit Maximization:} For every consumer $i$ and alternative set of items $T$, $v_i(S_i)-p_{S_i} \geq v_i(T)-p_T$.
\item \textbf{Revenue Maximization:} For every alternative allocation $(T_1,\dots,T_n)$, $\sum_i {p_{S_i}}\ge \sum_i {p_{T_i}}$.
\end{enumerate}
The main difference between this and our solution concept is the use of non-linear prices, whereas in a \fcwe{} the prices given the bundling are additive.

Bikhchandani and Ostroy design a linear program called $\CAPTWO$ and show that it characterizes \bika{}, i.e., such an equilibrium exists if and only if the optimal fractional solution to $\CAPTWO$ occurs at an integral solution, and if so the primal and dual variables correspond to the equilibrium allocation and prices. As a corollary they get that \bika{} -- when exist -- achieve optimal welfare.

We show here that existence of a \bikh{} implies the existence of a \fcwe{} with optimal welfare; this implication of \bika{} was not previously observed to our knowledge, and it can be viewed as a version of the \emph{second welfare theorem} for \fcwa{} -- if an integral optimal solution exists for $\CAPTWO$, then a \fcwe{} whose allocation corresponds to this solution also exists \cite[c.f.,][Theorem 11.15]{BN07}). One implication of the above is that for markets with superadditive%
\footnote{A valuation $v$ is superadditive if for every two sets of items $T,U$ it holds that $v(T)+v(U)\le(T\cup U)$.} %
consumers, there exists an efficient \fcwe{}, and moreover it can be reached via an ascending auction \citep{PU00,SY14}. We also identify market settings in which a \bikh{} does not exist, yet there exists a \fcwe{} with optimal welfare. We interpret this as indicating that there is a sense in which the solution concept of \fcwe{} dominates that of \bikh{}, in terms of the welfare it can achieve stably.  

We summarize the relation between the two solution concepts as follows:

\begin{proposition}
\label{pro:relation-to-second-order}
Consider a combinatorial market.
\begin{enumerate}[(a)]
\item The existence of a \bikh{} implies the existence of an efficient \fcwe{}. 
\item The existence of an efficient \fcwe{} does not always imply the existence of a \bikh{}. 
\end{enumerate}
\end{proposition}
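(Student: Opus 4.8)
For part (a), the plan is to invoke Bikhchandani and Ostroy's characterization: if a \bikh{} exists, then the linear program $\CAPTWO$ has an integral optimal solution, and the primal variables of this solution correspond to an integral allocation $(S_1,\dots,S_n)$ that maximizes social welfare. The natural bundling to take is $\mathcal{B} = (S_1,\dots,S_n)$ itself (discarding empty parts, and merging any unallocated items into one arbitrary bundle). I would then need to produce additive bundle prices supporting this allocation as a \fcwe{}. The idea is to use the dual variables of $\CAPTWO$: Bikhchandani--Ostroy's dual gives non-linear, anonymous bundle prices $p_{S_i}$ supporting the allocation, but in the post-bundling market with at most $n$ bundles, each consumer is interested in \emph{unions} of the bundles $S_j$, and since in the efficient allocation each consumer gets exactly one bundle $S_i$, one can set the price of bundle $S_i$ to be the equilibrium price $p_{S_i}$ from the \bikh{} and check profit maximization directly against the finite menu of attainable unions. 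The key point to verify is that profit maximization over arbitrary \emph{subsets of items} (the \bikh{} condition) downgrades cleanly to profit maximization over \emph{unions of the fixed bundles} (the \fcwe{} condition), which it does because the latter is a weaker requirement; and that market clearance holds by construction since $\mathcal{B}$ partitions $M$. Alternatively, and perhaps more cleanly, I would appeal to the configuration-LP / second-welfare-theorem machinery already set up in Section \ref{sec:two-buyers}: an integral optimum of $\CAPTWO$ yields an integral optimum of the configuration LP for the \emph{post-bundling} market, hence a competitive equilibrium (= \fcwe{}) with that allocation. I expect this to be the main obstacle — pinning down exactly which LP one applies the second welfare theorem to, and making sure the bundling is chosen so that the post-bundling configuration LP inherits integrality from $\CAPTWO$.

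For part (b), the plan is to exhibit an explicit small market in which some bundling admits a competitive equilibrium achieving $\opt$, yet no \bikh{} exists. By Bikhchandani--Ostroy, no \bikh{} exists precisely when $\CAPTWO$ has an integrality gap, so I want an instance where the efficient allocation is, say, the grand bundle to a single consumer (trivially supported as a \fcwe{} by charging that consumer's value), but where a fractional rearrangement over bundles beats the integral optimum in $\CAPTWO$. A candidate is a variant of the two-consumer, two-item instances already used in Proposition \ref{prop_two_consumers_lb} and Proposition \ref{thm:ba-negative}: take valuations where the welfare-optimal integral outcome happens to be achievable by the all-items-to-one-consumer bundling, but where splitting the grand bundle fractionally between consumers strictly raises the $\CAPTWO$ objective. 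One then computes the two numbers — the integral $\CAPTWO$ optimum and a strictly larger fractional value — exactly as in the earlier lower-bound proofs, and observes separately that the grand-bundle \fcwe{} is efficient for this instance. The hard part here is choosing the valuations so that \emph{both} requirements hold simultaneously: the grand bundle must genuinely realize $\opt$ (so the inefficiency of \fcwe{} seen in Section \ref{sec:two-buyers} is avoided), while $\CAPTWO$ must still be non-integral. I would search among instances where one consumer has a strongly superadditive valuation dominating all others on the grand bundle, yet some subadditive interaction among the remaining consumers creates the fractional gap in $\CAPTWO$.
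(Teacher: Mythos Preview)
Your plan for part (b) is correct and matches the paper: Example \ref{ex:first} is exactly the instance you describe --- consumer~1 values only the grand bundle (at $8$), consumer~2 is unit-demand with value $7$, so the grand-bundle \fcwe{} is efficient, while the fractional $\CAPTWO$ solution $x_{1,ab}=x_{2,a}=x_{2,b}=z_{ab}=z_{a,b}=\tfrac12$ yields $11>8$, certifying no \bikh{}.

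For part (a), your overall skeleton is right --- take the bundling $\mathcal{B}$ from the integral $\CAPTWO$ optimum and price each bundle $S_i$ at the dual price $p_{S_i}$ --- but the sentence ``the latter is a weaker requirement'' hides a genuine gap. The \bikh{} profit-maximization gives, for any item set $T$,
\[
v_i(S_i)-p_{S_i}\ \ge\ v_i(T)-p_T,
\]
whereas the \fcwe{} condition requires, for any union $T=\bigcup_{B\in\mathcal{T}}B$ of bundles $\mathcal{T}\subseteq\mathcal{B}$,
\[
v_i(S_i)-p_{S_i}\ \ge\ v_i(T)-\sum_{B\in\mathcal{T}}p_B.
\]
The second inequality follows from the first only if $p_T\le\sum_{B\in\mathcal{T}}p_B$, i.e., if the non-linear prices are \emph{subadditive along $\mathcal{B}$}. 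That is not automatic: the dual variables $p_S$ are unconstrained beyond $p_S\ge 0$ and Condition~(\ref{eq:cond1}). The paper's Lemma~\ref{lem:CAP2} derives exactly this subadditivity from complementary slackness on the seller side: since $z_\mathcal{B}>0$, Condition~(\ref{eq:cond2}) is tight at $\mathcal{B}$, so $\pi_0=\sum_{B\in\mathcal{B}}p_B$; applying Condition~(\ref{eq:cond2}) to the alternative bundling $(\mathcal{B}\setminus\mathcal{T})\cup\{T\}$ gives $\sum_{B\in\mathcal{B}\setminus\mathcal{T}}p_B + p_T \le \pi_0$, whence $p_T\le\sum_{B\in\mathcal{T}}p_B$. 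Equivalently, in \bikh{} terms, this is the revenue-maximization condition applied to the reallocation that merges the bundles in $\mathcal{T}$. You need to identify and invoke this step explicitly; without it the transfer of profit maximization does not go through. Your alternative route via the post-bundling configuration LP would face the same issue in disguise --- showing that LP is integral amounts to producing supporting dual prices, and those will again be the $p_{S_i}$'s together with the subadditivity argument.
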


\begin{proof}
By combining Lemma \ref{lem:CAP2} with Example \ref{ex:first} or Example \ref{ex:last}.
\end{proof}

The proof of part (a) of the proposition builds upon complementary slackness properties of the optimal integral solution to $\CAPTWO$ (see Lemma \ref{lem:CAP2} in Section \ref{apx:LP1}); the proof of part (b) is by showing an integrality gap for the same linear program (see examples in Section \ref{apx:LP2}). For completeness we state here the linear program $\CAPTWO$ and its dual. As usual $S$ denotes a set of items and $\cB$ denotes a bundling. The variables of the primal are $x_{i,S}$, which is the fraction of bundle $S$ allocated to consumer $i$, and $z_\cB$, which is the fraction of bundling $\cB$ determining the bundles that can be allocated. The dual variables are $\pi_i$, which is the payoff of consumer $i$, $\pi_0$, which is the revenue, and $p_S$, which is the (non-linear) price of bundle $S$. $\CAPTWO$ and its dual are:
\begin{eqnarray}
\max &\sum_{i,S} (x_{i,S}v_i(S))&\nonumber\\
\text{s.t.} & \sum_S x_{i,S} \le 1 & \forall i\in N \label{eq:one-bundle-each}\\
& \sum_i x_{i,S} \le \sum_{\cB:S\in\cB}z_\cB & \forall S\subseteq M \label{eq:bundle-in-bundling}\\
& \sum_\cB z_\cB \le 1 &\label{eq:single-bundling}\\
& x_{i,S},z_\cB\ge 0. \nonumber\\
& & \nonumber\\
\min & \pi_0 + \sum_i \pi_i &\nonumber\\
\text{s.t.} & \pi_i \ge v_i(S) - p_S & \forall i\in N,S\subseteq M \label{eq:cond1}\\
& \pi_0 \ge \sum_{S\in\cB} p_S & \forall \cB \label{eq:cond2}\\
& \pi_0,\pi_i,p_S \ge 0. &\nonumber
\end{eqnarray}

%%%

\subsection{\BIKH{} Implies Efficient \FCWE{}}
\label{apx:LP1} 

\begin{lemma}
\label{lem:CAP2}
If the optimal fractional solution to $\CAPTWO$ occurs at an integral solution with allocation $(S_1,\dots,S_n)$ over bundling $\cB$ and non-linear prices $\vec{p}$, then $(S_1,\dots,S_n)$ over $\cB$ with bundle prices $\vec{p'}$ according to $\vec p$ form an efficient \fcwe{}. 
\end{lemma}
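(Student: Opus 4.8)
The plan is to read the equilibrium allocation and linear bundle prices directly off an optimal primal--dual pair for $\CAPTWO$, and to verify the two \we{} requirements separately: \emph{profit maximization} from dual feasibility together with complementary slackness, and \emph{market clearance} from the shape of the primal integral solution.

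First I would unpack the primal integral optimum. Because $\sum_\cB z_\cB\le 1$ and the solution is $\{0,1\}$-valued, exactly one bundling $\cB^{*}$ is selected ($z_{\cB^{*}}=1$); then \eqref{eq:bundle-in-bundling} forces $x_{i,S}=0$ for $S\notin\cB^{*}$, so by \eqref{eq:one-bundle-each} each consumer $i$ receives a single bundle $S_i\in\cB^{*}$ and each bundle of $\cB^{*}$ goes to at most one consumer. I would then observe that we may assume every bundle of $\cB^{*}$ is in fact allocated: if $B\in\cB^{*}$ is unallocated, the bundling obtained by merging $B$ into some allocated $S_{i_0}$ (with $x_{i_0,S_{i_0}\cup B}$ set to $1$) is still $\CAPTWO$-feasible and, by monotonicity of $v_{i_0}$, has objective value no smaller, hence is again optimal; iterating clears all unallocated bundles. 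Since bundlings partition $M$, this gives market clearance $\bigcup_i S_i=M$. Now fix an optimal dual solution $(\vec\pi,\pi_0,\vec p)$ and define the linear bundle prices by $p'_B:=p_B$ for $B\in\cB^{*}$.

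The core of the argument is a ``price subadditivity'' estimate supplied by the revenue-maximization dual constraint. Let $i$ be any consumer, $T\subseteq\cB^{*}$ any sub-collection, and $U:=\bigcup_{B\in T}B$ (so $v_i(T)=v_i(U)$). Complementary slackness on $z_{\cB^{*}}>0$ gives $\pi_0=\sum_{B\in\cB^{*}}p_B$, whereas \eqref{eq:cond2} applied to the partition $\{U\}\cup(\cB^{*}\setminus T)$ of $M$ gives $\pi_0\ge p_U+\sum_{B\in\cB^{*}\setminus T}p_B$; subtracting yields $p_U\le\sum_{B\in T}p_B=\sum_{B\in T}p'_B$. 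On the other hand, complementary slackness shows that every consumer's equilibrium profit equals $\pi_i$: if $i$ gets $S_i$, CS on $x_{i,S_i}>0$ gives $\pi_i=v_i(S_i)-p_{S_i}=v_i(S_i)-p'_{S_i}$, and if $i$ gets nothing then CS gives $\pi_i=0=v_i(\emptyset)$. Combining these with dual feasibility \eqref{eq:cond1} at $(i,U)$, which reads $\pi_i\ge v_i(U)-p_U$, gives
$$
v_i(S_i)-p'_{S_i}\;=\;\pi_i\;\ge\;v_i(U)-p_U\;\ge\;v_i(U)-\sum_{B\in T}p'_B\;=\;v_i(T)-\sum_{B\in T}p'_B,
$$
which is exactly profit maximization; hence $(S_1,\dots,S_n)$ over $\cB^{*}$ with prices $\vec{p'}$ is a \fcwe{}. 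For efficiency, its welfare $\sum_i v_i(S_i)$ equals the optimal value of $\CAPTWO$, and that value is at least $\opt$ because the welfare-maximizing allocation $(O_1,\dots,O_n)$ is itself $\CAPTWO$-feasible (use the bundling $\{O_1,\dots,O_n\}$, $z=1$, $x_{i,O_i}=1$); since no allocation beats $\opt$, we get $\sum_i v_i(S_i)=\opt$.

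I expect the main obstacle to be the market-clearance bookkeeping rather than the inequalities: one must check that disposing of the unallocated bundles (which all carry price $p_B=0$ by complementary slackness on \eqref{eq:bundle-in-bundling}) does not disturb the price-subadditivity estimate. This is why, when folding a zero-price bundle into $S_{i_0}$, one keeps the linear price at $p_{S_{i_0}}$ and reruns the \eqref{eq:cond2} argument in terms of the original constituent bundles of $\cB^{*}$, whose prices sum to the same total. Everything else is a routine application of LP strong duality and complementary slackness.
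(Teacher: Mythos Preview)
Your proof is correct and follows essentially the same route as the paper: both use complementary slackness on $x_{i,S_i}>0$ to identify $\pi_i$ with consumer $i$'s profit, dual feasibility \eqref{eq:cond1} to bound that profit against $v_i(U)-p_U$, and complementary slackness on $z_{\cB^*}>0$ together with \eqref{eq:cond2} applied to $\{U\}\cup(\cB^*\setminus T)$ to get the price-subadditivity $p_U\le\sum_{B\in T}p_B$.

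The one place you diverge is in handling market clearance. The paper keeps the bundling $\cB^*$ intact, notes (exactly as you do) that any unallocated $B\in\cB^*$ has $p_B=0$ by complementary slackness on \eqref{eq:bundle-in-bundling}, and then simply hands such $B$ to an arbitrary consumer at price $0$; since this adds a zero-price bundle to that consumer's allocation, the profit-maximization inequality is undisturbed. You instead merge unallocated bundles into some $S_{i_0}$, which changes the bundling and forces you to rerun the subadditivity estimate against the \emph{original} blocks of $\cB^*$. This works (and, as you implicitly observe, $p_{S_{i_0}\cup B}=p_{S_{i_0}}$ anyway once both the old and new primal are optimal), but the paper's ``leave $\cB^*$ alone, give away zero-price bundles'' is a bit cleaner and sidesteps the bookkeeping you flag in your last paragraph.
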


\begin{proof}
We begin with the optimal integral solution above. Notice that by Conditions (\ref{eq:one-bundle-each}) and (\ref{eq:bundle-in-bundling}), $S_i\in \cB$ for every $i$. The lemma follows immediately from three claims:
\begin{enumerate}
\item If $S\in\cB$ is unallocated, $p_S=0$: By complementary slackness, for every $S\in\cB$ such that $p_S>0$, Condition (\ref{eq:bundle-in-bundling}) must hold with equality, so $S$ is allocated to some consumer $i$.
\item The allocation $(S_1,\dots,S_n)$ maximizes welfare: This follows from \cite{BO02}. 
\item For every consumer $i$ and bundle set $T$ over $\cB$, $v_i(S_i)-p'_{S_i} \ge v_i(T)-\sum_{B\in T} p'_B$.
\end{enumerate}

We now prove the third claim. The bundle prices $\vec{p'}$ are set according to $\vec p$, and so $p'_B=p_B$ for every $B\in\cB$. This means we need to show that $v_i(S_i)-p_{S_i} \ge v_i(T)-\sum_{B\in T} p_B$. By complementary slackness and Condition (\ref{eq:cond1}), 
$$
v_i(S_i)-p_{S_i} = \pi_i \ge v_i(T)-p_T,
$$ 
where slightly abusing notation, $p_T$ is the price of the set $\bigcup_{B\in T}B$. It is left to show that $p_T \le \sum_{B\in T} p_B$. By complementary slackness and Condition (\ref{eq:cond2}), this subadditivity of prices indeed holds:
$$
\sum_{B\in\cB}p_B = \sum_{B\in\cB\setminus T}p_B + \sum_{B\in T}p_B = \pi_0 \ge \sum_{B\in\cB\setminus T}p_B + p_T,
$$
and so $\sum_{B\in T} p_B \ge p_T$ as required.

Note that strictly speaking we require a \fcwe{} to have $\bigcup_i S_i = M$; by the first claim above we can simply allocate any unallocated $S\in\cB$ with price $p_S=0$ to an arbitrary consumer, and the third claim will still hold. This completes the proof of the lemma.
\end{proof}

\subsection{Efficient \FCWE{} Does Not Imply \BIKH{}}
\label{apx:LP2} 

The examples below show markets in which an efficient \fcwe{} exists but a \bikh{} does not. To prove the latter we provide for each example a certificate showing an integrality gap for $\CAPTWO$.

\begin{example}[Two items]
\label{ex:first}
Consider a combinatorial market with two consumers and two items. Consumer 1 has value of 8 for both items together and 0 otherwise. Consumer 2 is unit-demand and values each item at 7. An efficient \fcwe{} bundles both items together and allocates the bundle to consumer 1 for a price between 7 and 8. A certificate showing there is no \bikh{} is the following fractional solution to $\CAPTWO$ that achieves 11 instead of 8: $x_{1,ab} = x_{2,a} = x_{2,b} = 0.5$, and $z_{ab} = z_{a,b} = 0.5$. 
\end{example}

\begin{example}[Submodular valuations]
\label{ex:last}
Consider a combinatorial market with two consumers and four items $a,b,c,d$. Consumer 1's submodular valuation: All items valued at 1, the pairs $ab, cd$ are valued at 2, all other pairs $ac, ad, bc, bd$ are valued at 1.5, all sets of three or four items valued at 2.
Consumer 2's submodular valuation: All items valued at 1, the pairs $ad, bc$ are valued at 2, all other pairs $ab, ac, bd, cd$ are valued at 1.5, all sets of three or four items valued at 2.
An efficient \fcwe{} bundles $ab$ together and $cd$ together, allocates the bundle $ab$ to consumer 1 for price 1.5, and allocates $cd$ to consumer 2 for price 1.5.
A certificate showing there is no \bikh{} is the following fractional solution to $\CAPTWO$ that achieves 4 instead of 3.5: $x_{1,ab} = x_{1,cd} = x_{2,ad} = x_{2,bc} = 0.5$, and $z_{ab,cd}=z_{ad,bc}=0.5$.
\end{example}

\section{Welfare Maximization: Does Randomness Help?}
\label{apx:randomness}

Given the negative result in Theorem \ref{thm:multi-tight}, stating that \fcwa{} cannot guarantee more than a logarithmic fraction of the optimal welfare in the worst case, a natural question is whether introducing randomness in the form of lotteries can help circumvent this lower bound. 

Randomness in the context of \we{} dates back to \cite{PT84}. \citet{Gar95} studies \emph{lottery equilibria}, and while his paper does not involve bundling, we can build upon its notions. In particular, a market with lotteries is one in which the allocation to every consumer is a distribution over bundles -- that is, a lottery ticket -- and the allocation of lottery tickets to consumers must correspond to some joint probability distribution over item allocations. Consumers aim to maximize their expected profit.

We now define a \fcwe{} with lotteries: This is a \fcwe{} over bundles of lottery tickets, whose allocation must correspond to a distribution over item allocations, with an additional requirement of \emph{ex post} market clearance. I.e., we require that after the lotteries are run, the market clears from all the items. The reason for this is discussed in the Introduction -- to maintain the concept of an equilibrium, producers must not have the opportunity to lower prices and sell more. As further discussed there, such equilibria are not applicable to all markets, and require an assumption of trade restriction among the consumers after the lotteries are run.%
\footnote{Note that if this assumption is dropped, then regardless of the use of lotteries, the ex post allocation should form a \we{} or \fcwe{}, and so lotteries fail to enrich the solution concept -- c.f., \cite{BO02}.}

\begin{definition}
\label{def:CBE-lotteries}
A \fcwe{} \emph{with lotteries} is a \fcwe{} in a market of lottery tickets, whose allocation corresponds to a distribution over item allocations, each of which clears the market of items.
\end{definition}

We can now state our main result for this section -- an impossibility result with lotteries.

\begin{theorem}
\label{thm:multi-lotteries}
There exists a multi-unit setting where $n=m$ and valuations are subadditive, such that every \fcwe{} with lotteries has welfare which is a $(1/\Omega(\log m))$-factor of $\opt$.
\end{theorem}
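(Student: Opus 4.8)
The plan is to reuse the lower-bound instance from the proof of Theorem~\ref{thm:multi-tight} essentially verbatim and show that allowing lotteries does not help. Recall that instance: $n=m$, consumer $i$ for $i\in\{2,\dots,m\}$ is unit-demand with value $1/i$, and consumer $1$ has $v_1(S)=1+\epsilon$ for $|S|<m$ and $v_1(M)=2+2\epsilon$. All valuations are subadditive. The optimal welfare is $\sum_i 1/i \approx \ln m$, and we want to argue that in any \fcwe{} with lotteries, the realized welfare is $O(1)$ in expectation, hence a $1/\Omega(\log m)$ fraction of $\opt$.

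The key structural observation is the same one used in Theorem~\ref{thm:multi-tight}: by \emph{ex post} market clearance, every realized item-allocation in the support of the lottery is a full allocation of all $m$ units. First I would argue that if consumer~$1$ does \emph{not} receive all $m$ units with probability~$1$, we reach a contradiction with profit maximization, mirroring the deterministic argument. For this, fix the \fcwe{} with lotteries, with bundling $\mathcal{B}$ of lottery tickets and prices $\vec p$, and let $i'$ be the smallest index of a consumer allocated a (nonempty) ticket. The analogue of Claim~\ref{cla:equal-prices} should still hold: any two lottery tickets that both induce ``get exactly one unit with probability one'' type outcomes for unit-demand consumers must carry equal prices, and the price paid by $i'$ cannot exceed $i'$'s expected value, which is at most $1/i'$ since $i'$ is unit-demand (or, if $i'=1$, at most $1+\epsilon$). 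The subtlety introduced by lotteries is that tickets may be genuine distributions over bundles of different sizes; but because consumers $2,\dots,m$ are unit-demand with constant value, the expected value of any nonempty-with-positive-probability ticket to consumer~$i$ ($i\ge 2$) lies in $(0,1/i]$, and I would use this together with the pigeonhole/counting bound $|\mathcal{B}|\le i'$ (again from ex post clearance: at most $i'$ consumers get tickets, and all items must be covered) to conclude that the total price collected is at most $\sum_{B\in\mathcal B} p_B \le i' \cdot (1/i') \le 1$ when $i'>1$. Then consumer~$1$, who by assumption has expected value $\le 1+\epsilon$ from his own ticket, could deviate to buy \emph{all} tickets in $\mathcal B$; his value for the resulting ticket is at least the value of the realized grand bundle which is always $M$ (by ex post clearance of the union), giving value $2+2\epsilon$ minus total price $\le 1$, a strict improvement — contradiction. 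Hence $i'=1$, i.e. consumer~$1$ is allocated a ticket; and I'd then push the same deviation argument to force that consumer~$1$'s ticket must in fact yield all $m$ units with probability one, for otherwise his realized value is at most $1+\epsilon$ on an event of positive probability while buying everything yields $2+2\epsilon$, again beating the price bound.

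Once consumer~$1$ gets all units almost surely, the realized welfare is $v_1(M)=2+2\epsilon$ in every realization, so the expected welfare is $2+2\epsilon = O(1) = \opt/\Omega(\log m)$, completing the proof. The main obstacle I anticipate is handling the tickets carefully when they are nontrivial distributions — in particular making the ``equal prices'' and ``total price $\le 1$'' steps rigorous when a ticket might, say, give a large bundle with tiny probability and nothing otherwise. The fix is to work throughout with \emph{expected} values rather than deterministic ones, and to note that for the unit-demand consumers the expected value of a ticket is at most $1/i$ regardless of the ticket's shape, while the key leverage — consumer~$1$'s jump from $1+\epsilon$ to $2+2\epsilon$ upon acquiring the whole market — survives because ex post clearance guarantees the union of all tickets is always exactly $M$. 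I would also remark that the reduction to value queries / the preprocessing remark is irrelevant here; this is purely an impossibility result, so no algorithmic content is needed.
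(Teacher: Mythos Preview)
Your overall plan — reuse the instance of Theorem~\ref{thm:multi-tight} and show that consumer~$1$ must receive all units almost surely — is exactly the paper's plan. The gap is in the step where you bound the total payment of the unit-demand consumers. The two ingredients you import from the deterministic proof, namely the equal-prices claim and the counting bound $|\mathcal B|\le i'$, both break in the lottery setting. Tickets with different allocation probabilities are \emph{not} interchangeable to a unit-demand consumer, so there is no reason two tickets held by different unit-demand consumers must carry the same price; in fact, a menu of tickets $(x,p)$ with varying $x$ is precisely second-degree price discrimination among consumers $2,\dots,n$. Likewise, nothing bounds the number of ticket-bundles by the number of allocated consumers. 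Your acknowledged fix, ``work with expected values,'' only yields $p_i\le x_i/i$ from individual rationality, and $\sum_i x_i/i$ can be as large as $\sum_{i\ge 2}1/i\approx\ln n$, not $1$.

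There is a second, subtler gap: even if you had $\sum_{i\ge 2}p_i\le 1$, that is not enough for the deviation argument. Consumer~$1$'s expected value when $\alpha<1$ is $(1+\alpha)(1+\epsilon)$, and buying all remaining tickets raises it to $2(1+\epsilon)$ at extra cost $\sum_{i\ge 2}p_i$; profitability requires $\sum_{i\ge 2}p_i<(1-\alpha)(1+\epsilon)$, which the bound $\le 1$ does not give unless $\alpha$ is tiny. Your final sentence (``his realized value is at most $1+\epsilon$ on an event of positive probability'') is an ex-post statement, but profit maximization is ex-ante, so this does not yield a contradiction. The paper closes both gaps with a single idea: view the pairs $(x_i,p_i)$ as a truthful, IR menu for a single bidder whose value is uniform on $\{1/2,\dots,1/n\}$, observe that each $x_i\le 1-\alpha$, and invoke Myerson's theorem on this equal-revenue-type distribution to get the sharp bound $\sum_{i\ge 2}p_i\le 1-\alpha$. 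That reduction to single-parameter optimal mechanism design is the missing ingredient.
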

 
The proof is by an interesting connection between our market setting and between optimal mechanism design in a single-parameter auction setting.

\begin{proof}[of Theorem \ref{thm:multi-lotteries}]
Recall the setting introduced in the proof of Theorem \ref{thm:multi-tight}. We will refer to this setting here as the \emph{equal-price market}. Recall from the proof of Theorem \ref{thm:multi-tight} that the logarithmic lower bound follows from the fact that in any \fcwe{}, all units are allocated to consumer 1. We now prove that in any \fcwe{} with lotteries, ex post all the units are allocated to consumer 1.

Consider a \fcwe{} with lotteries. Let $\alpha$ denote the probability with which consumer 1 receives all units. Then due to ex post market clearance, $1-\alpha$ is the probability that at least one consumer $2,\dots,n$ receives at least one unit. We argue that it is sufficient to show that the total payment of the unit-demand consumers cannot exceed $1-\alpha$. Indeed, assume this holds and consider consumer 1. Assume for contradiction that consumer 1 is allocated a subset of lottery tickets that does not guarantee him all units ex post, i.e., $\alpha<1$. His expected value is then $(1-\alpha)(1+\epsilon) + \alpha(2+2\epsilon) = (1+\alpha)(1+\epsilon)$. If he were to buy all remaining lottery tickets instead of the unit-demand consumers, the increase in his expected profit would be at least $(1-\alpha)(1+\epsilon) - (1-\alpha) > 0$. This is by our assumption that the total payment of the unit-demand consumers is $\le 1-\alpha$. We thus have a contradiction -- if there is a chance that bidder 1 does not receive all units ex post, he can strictly improve his expected profit by buying all lottery tickets. 

In the remainder of the proof we show the following claim:
 
\begin{claim}
\label{cla:total-pay}
The total payment of the unit-demand consumers is upper-bounded by $1-\alpha$. 
\end{claim}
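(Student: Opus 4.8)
The plan is to recognize that, restricted to the unit-demand consumers $2,\dots,m$, a \fcwe{} with lotteries induces an incentive-compatible single-parameter mechanism, and then to bound its revenue by the standard Myerson payment accounting. Fix a \fcwe{} with lotteries, and for $i\in\{2,\dots,m\}$ let $x_i$ denote the probability that consumer $i$ receives at least one unit and $\pi_i$ his (deterministic) payment. Since the lottery tickets are fixed objects with fixed realization distributions, any collection $T$ of lottery-ticket bundles has a well-defined win-probability $w(T)$ and price $p_T$ that do not depend on who holds it; because consumer $i$ is unit-demand with value $1/i$, his payoff from holding $T$ at the equilibrium prices equals $\tfrac1i w(T)-p_T$, and profit maximization says his actual bundle maximizes this over all $T$. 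In particular $x_i=w(S_i)$, and comparing consumer $i$'s bundle with consumer $j$'s bundle and vice versa, for any $i<j$ we obtain $\tfrac1j(x_i-x_j)\le \pi_i-\pi_j\le \tfrac1i(x_i-x_j)$.

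From these two-consumer inequalities I would extract the usual two consequences. Since $\tfrac1j<\tfrac1i$, consistency forces $x_i\ge x_j$ for all $i<j$, i.e.\ $x_2\ge x_3\ge\cdots\ge x_m$ (monotonicity of the allocation rule). And telescoping $\pi_i-\pi_{i+1}\le\tfrac1i(x_i-x_{i+1})$ from $i$ up to $m-1$, together with the individual-rationality bound $\pi_m\le\tfrac1m x_m$ (from comparing with the empty bundle, which is free), gives the Myerson-type payment bound
\[
\pi_i\ \le\ \frac{x_i}{i}\ -\ \sum_{k=i+1}^{m}\Big(\frac1{k-1}-\frac1k\Big)x_k .
\]

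Next I would sum this over $i=2,\dots,m$, exchange the order of the double sum, and simplify using $\tfrac1{k-1}-\tfrac1k=\tfrac1{k(k-1)}$; the coefficient of each $x_k$ collapses, for $k\ge 3$, to $\tfrac1k-(k-2)\tfrac1{k(k-1)}=\tfrac1{k(k-1)}$, while $x_2$ keeps coefficient $\tfrac12$. This yields $\sum_{i=2}^m\pi_i\le \tfrac12 x_2+\sum_{k=3}^m\tfrac{x_k}{k(k-1)}$. Now invoke monotonicity ($x_k\le x_2$ for all $k$) and the telescoping identity $\sum_{k=3}^m\tfrac1{k(k-1)}=\tfrac12-\tfrac1m$ to get $\sum_{i=2}^m\pi_i\le x_2\big(1-\tfrac1m\big)\le x_2$. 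Finally, whenever consumer $2$ receives a unit, by ex post market clearance consumer $1$ fails to receive all $m$ units, so $x_2\le 1-\alpha$; chaining the inequalities gives $\sum_{i=2}^m\pi_i\le 1-\alpha$, which is exactly the claim.

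I expect the main obstacle to be conceptual rather than computational: one must carefully justify that the \fcwe{}-with-lotteries formalism really delivers the two-consumer inequalities above — specifically, that consumer $i$ is genuinely free to ``buy'' the ticket collection that consumer $j$ holds, that its win-probability is the same number $x_j$ regardless of the holder (this is where the assumption that the tickets correspond to a single underlying joint distribution over item allocations is used), and that ties in the demand set do not weaken any of the inequalities. Once this reduction to a single-parameter auction is in place, the remainder is the routine Myerson computation sketched above; no separate handling of consumer $1$ is needed beyond the final market-clearance observation.
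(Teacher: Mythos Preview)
Your proposal is correct and rests on the same key observation as the paper: restricted to the unit-demand consumers, a \fcwe{} with lotteries induces a truthful, individually rational single-parameter mechanism (the menu of pairs $(x_i,\pi_i)$), and the total payment can then be bounded by Myerson-style accounting. The paper, however, takes a slightly more indirect route: it explicitly sets up an auxiliary single-bidder ``equal-revenue'' Bayesian auction with value distribution uniform on $\{1/2,\dots,1/n\}$, argues that the menu $(x_i,\pi_i)$ defines a truthful IR mechanism there whose expected revenue equals $\tfrac{1}{n-1}\sum_i \pi_i$, and then invokes Myerson's theorem (regularity, virtual values, optimality of deterministic reserves) as a black box to cap that revenue at $(1-\alpha)/(n-1)$.

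Your approach is more elementary and self-contained: you derive the pairwise incentive inequalities directly, telescope to the standard payment upper bound, and sum explicitly to obtain $\sum_i \pi_i \le x_2(1-\tfrac{1}{m}) \le 1-\alpha$, never leaving the market model or appealing to optimal-auction theory. The paper's detour buys conceptual clarity (the connection to equal-revenue distributions is made explicit) and robustness to generalization, while your direct computation avoids introducing an auxiliary probabilistic model and yields the slightly sharper intermediate bound $x_2(1-\tfrac{1}{m})$. The conceptual caveat you flag---that each consumer can indeed deviate to another consumer's ticket collection with the same win-probability---is exactly the content of the paper's Claim~\ref{cla:reduction-to-mech}, so you have correctly identified the one nontrivial step.
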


To prove this claim, we reduce our setup to a Bayesian auction setting, with a single bidder and a single item for sale. 
We will then show that a \fcwe{} with lotteries in the equal-price market corresponds to a truthful and individually rational (IR) randomized mechanism in the auction setting, such that the expected revenue of the auction mechanism is at least the total payment of the unit-demand consumers in the equilibrium. 
We can then apply Myerson's theorem to show that the optimal expected revenue in the auction setting is at most $1-\alpha$, implying that the total payment of the unit-demand consumers cannot exceed $1-\alpha$. 

The \emph{equal-revenue} auction setting is as follows: There is a single item for sale and a single bidder, whose value is uniformly distributed over the $n-1$ values in the discrete range $\{1/2,\dots,1/n\}$. Denote the discrete value distribution by $F$. 

\begin{claim}
\label{cla:reduction-to-mech}
For every \fcwe{} with lotteries in the equal-price market, in which consumer 1 receives all units with probability $\alpha$, there is a truthful IR randomized mechanism in the equal-revenue auction, such that:
\begin{itemize}
\item For every bid, the probability that the mechanism allocates the item to the bidder is at most $1-\alpha$;
\item The expected revenue of the mechanism is at least a $(1/(n-1))$-fraction of the total payment of the unit-demand consumers in the market.
\end{itemize}
\end{claim}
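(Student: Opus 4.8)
The plan is to read the single-bidder mechanism off the given \fcwe{} with lotteries, using the $n-1$ unit-demand consumers as the ``menu'' of (randomized) outcomes offered to the single bidder. So I would fix a \fcwe{} with lotteries --- a bundling $\mathcal{B}$ of lottery tickets, an allocation $(S_1,\dots,S_n)$, and anonymous additive bundle prices $\vec p$ --- in which consumer $1$ gets all $m$ units with probability $\alpha$. For each unit-demand consumer $i\in\{2,\dots,n\}$, let $q_i$ be the ex post probability (under the equilibrium lottery) that $i$ receives at least one item, and let $P_i=\sum_{B\in S_i}p_B$ be $i$'s total payment. The mechanism I would define for the equal-revenue auction is: on reported value $1/i$, allocate the item to the bidder with probability $q_i$ and charge $P_i$ (this is a legitimate randomized mechanism since $q_i\in[0,1]$ and bundle prices, hence $P_i$, are non-negative).

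The first step is the structural identity $v_i(S_j)=q_j/i$ for all $i,j\in\{2,\dots,n\}$: since a bundle of lottery tickets induces a fixed distribution over item sets regardless of who holds it, and consumer $i$ is unit-demand with per-item value $1/i$, the expected value $i$ derives from holding $S_j$ is $1/i$ times the probability that $S_j$'s lottery yields a non-empty set, which is exactly $q_j$. Given this, truthfulness and IR are immediate from the profit-maximization property of the \fcwe{}: because prices are anonymous and additive, $S_j$ is a feasible purchase for consumer $i$, so $v_i(S_i)-P_i\ge v_i(S_j)-P_j$, i.e.\ $q_i/i-P_i\ge q_j/i-P_j$ --- precisely the incentive constraint that reporting $1/i$ weakly dominates reporting $1/j$ when the true value is $1/i$; comparing against the empty purchase gives $q_i/i-P_i\ge 0$, which is IR.

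It then remains to check the two bulleted properties. For the allocation bound, note that there are exactly $m$ items, so in the equilibrium's joint distribution the event ``consumer $1$ gets all $m$ units'' is disjoint from the event ``consumer $i$ gets at least one item''; hence $q_i\le 1-\alpha$ for every $i$, and on every report the mechanism allocates the item with probability at most $1-\alpha$. For the revenue bound, since the mechanism is truthful and $F$ is uniform on the $n-1$ values $\{1/2,\dots,1/n\}$, its expected revenue is $\frac{1}{n-1}\sum_{i=2}^{n}P_i$, i.e.\ exactly a $1/(n-1)$-fraction of the total payment of the unit-demand consumers, as required.

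I expect the only delicate point to be the identity $v_i(S_j)=q_j/i$ together with the use of price anonymity to let consumer $i$ ``buy consumer $j$'s set of lottery tickets'' --- this pair of observations is what converts the equilibrium's (static, supply-equals-demand) profit-maximization condition into dominant-strategy incentive compatibility of an auction. The disjointness argument behind $q_i\le 1-\alpha$ and the revenue computation are then routine bookkeeping.
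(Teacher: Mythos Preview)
Your proof is correct and follows essentially the same approach as the paper: build the menu $\{(q_i,P_i)\}_{i=2}^{n}$ from the unit-demand consumers' equilibrium outcomes, derive truthfulness and IR from profit maximization against anonymous prices, bound each $q_i$ by $1-\alpha$ via disjointness with the event that consumer~1 gets everything, and read off the revenue as the average of the $P_i$'s. If anything, you are slightly more explicit than the paper in isolating the identity $v_i(S_j)=q_j/i$ (which is what makes the cross-type comparisons go through), whereas the paper invokes this implicitly when asserting that ex ante profit maximization in the equilibrium yields truthfulness of the menu.
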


\begin{proof}[of Claim \ref{cla:reduction-to-mech}]
Consider a \fcwe{} with lotteries in the equal-price market. We introduce the following notation: for every $i\in\{2,\dots,n\}$, let $(x_i,p_i)$ denote the total probability with which consumer $i$ is allocated at least one unit in the \fcwe{} with lotteries, and let $p_i$ be the price that consumer $i$ is charged. Consumer $i$'s profit is $(x_i/i) - p_i$ (using that we're in a multi-unit setting with unit-demand consumers). Note that $x_i\le 1-\alpha$ by the assumption that consumer 1 receives all units with probability $\alpha$. 

By the well-known taxation principle of mechanism design, every truthful and IR mechanism in the equal-revenue auction can be described as a menu of $(x_i,p_i)$ pairs, where $x_i$ is the allocation probability of the item, and $p_i$ is the price. The bidder chooses among the menu options according to his value for the item, picking the most profitable one for him. Also, every such price menu corresponds to an IR mechanism, as long as for every value in the range of $F$, there is at least one menu option that results in non-negative expected profit for the bidder. 

We can thus take the $(x_i,p_i)$ pairs of the unit-demand consumers in the equal-price market, and view them as an IR mechanism in the equal-revenue auction. The mechanism is indeed IR because for every value $1/i$ in the support of $F$, the pair $(x_i,p_i)$ results in non-negative expected profit $(x_i/i) - p_i$ for the bidder, otherwise we would have a violation of profit maximization for consumer $i$ in the \fcwe{} with lotteries (Definition \ref{def:CBE-lotteries}). 

The resulting mechanism is also truthful, in the sense that when the bidder's value is $1/i$, a profit-maximizing menu option for him is $(x_i,p_i)$. This is again since we have ex ante profit maximization in the \fcwe{} with lotteries, i.e., consumer $i$ is allocated one of his demand sets. Now we see that the two properties in Claim \ref{cla:reduction-to-mech} hold -- for every menu option the allocation probability is $x_i\le 1-\alpha$, and the expected revenue is the average menu price, i.e., $1/(n-1)$ times the total payment of the unit-demand consumers.
\end{proof}

\begin{claim}
 \label{cla:revenue-bound}
Consider a truthful IR mechanism in the equal-revenue auction that allocates the item to the bidder with probability at most $1-\alpha$ for every bid. Then its expected revenue is at most $(1-\alpha)/(n-1)$.
\end{claim}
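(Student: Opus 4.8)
The plan is to reduce the claim to Myerson's optimal-auction theory for a single bidder and a single item. The mechanism in question is a truthful, IR randomized mechanism for a bidder whose value $v$ is uniform over the $n-1$ points $\{1/2,1/3,\ldots,1/n\}$; by Myerson's characterization (discrete version), its allocation rule $x(\cdot)$ is automatically monotone, and together with the payment identity and IR one gets that its expected revenue is at most the expected virtual surplus $\mathbb{E}_{v}[\varphi(v)\,x(v)]$, where $\varphi$ is the discrete virtual value (equality holds when the lowest type's IR constraint is tight, and otherwise the revenue is only smaller — so no regularity assumption is needed). Thus it suffices to (a) compute $\varphi$ on the support, (b) use the hypothesis $x(v)\le 1-\alpha$ for every bid, and (c) evaluate $\mathbb{E}[\varphi(v)]$.

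For step (a): order the support increasingly as $1/n<1/(n-1)<\cdots<1/2$. Since the distribution is uniform, for every value $1/\ell$ other than the top one (i.e.\ $\ell\in\{3,\ldots,n\}$) the discrete virtual value is
$$
\varphi(1/\ell)=\tfrac1\ell-\big(\tfrac{1}{\ell-1}-\tfrac1\ell\big)\cdot\frac{\Pr[v>1/\ell]}{\Pr[v=1/\ell]}=\tfrac1\ell-\tfrac{1}{\ell(\ell-1)}(\ell-2)=\tfrac{1}{\ell(\ell-1)},
$$
while $\varphi(1/2)=1/2$ for the top value. In particular every virtual value is strictly positive, so ironing is unnecessary and the mechanism never gains by withholding the item for incentive reasons.

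For steps (b) and (c): since $0\le x(v)\le 1-\alpha$ and $\varphi(v)>0$ pointwise, we have $\varphi(v)x(v)\le(1-\alpha)\varphi(v)$, hence the expected revenue is at most $(1-\alpha)\,\mathbb{E}[\varphi(v)]$. A telescoping computation gives
$$
\mathbb{E}[\varphi(v)]=\frac{1}{n-1}\Big(\tfrac12+\sum_{\ell=3}^{n}\big(\tfrac{1}{\ell-1}-\tfrac1\ell\big)\Big)=\frac{1}{n-1}\Big(1-\tfrac1n\Big)=\frac1n\le\frac{1}{n-1},
$$
so the expected revenue is at most $(1-\alpha)/(n-1)$, as claimed (in fact one even gets the slightly stronger bound $(1-\alpha)/n$, which is more than enough).

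The only place that requires any care is invoking the discrete form of Myerson's revenue-equivalence lemma correctly — in particular noting that truthfulness forces monotonicity of $x$, and that IR at the bottom type turns the payment identity into the inequality $\mathrm{Rev}\le\mathbb{E}[\varphi(v)x(v)]$ with no regularity needed. Everything after that is the elementary telescoping arithmetic above, so I do not anticipate a genuine obstacle.
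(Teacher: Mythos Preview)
Your proof is correct and follows essentially the same Myerson-based route as the paper: bound revenue by expected virtual surplus, note all virtual values are positive, and use the pointwise cap $x(v)\le 1-\alpha$. The only cosmetic difference is that you compute $\mathbb{E}[\varphi(v)]=1/n$ directly via telescoping (yielding the slightly sharper $(1-\alpha)/n$), whereas the paper bounds the optimal virtual surplus by $1/(n-1)$ via the observation that every reserve price in this equal-revenue setting extracts at most $1/(n-1)$.
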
   

\begin{proof}[of Claim \ref{cla:revenue-bound}]
Consider a truthful IR mechanism as defined. The equal-revenue auction is a single-parameter setting, and $F$ is a (discrete) regular distribution.%
\footnote{The virtual value corresponding to a bid $1/i$ is $1/i(i-1)$. For more on discrete regular distributions see, e.g., \cite{CHRR06}).
} %  
Thus by Myerson's theory of optimal mechanisms, the expected revenue of the mechanism is equal to its expected virtual surplus, and the optimal expected revenue can be achieved by a deterministic mechanism \cite{M81}. For the equal-revenue auction, any deterministic mechanism is simply a reserve price, and it is not hard to see that every reserve price extracts revenue of $\le 1/(n-1)$. The optimal expected virtual surplus is thus $\le 1/(n-1)$. Observe that the optimal expected virtual surplus is achieved by allocating the item to the bidder with probability 1 if and only if his bid yields a positive virtual value. Since we require the mechanism to allocate the item with probability $\le 1-\alpha$ for every bid, the maximum virtual surplus it can achieve is $\le (1-\alpha)/(n-1)$.
\end{proof}

This concludes the proof of Claim \ref{cla:total-pay} and Theorem \ref{thm:multi-lotteries}.
\end{proof}

\bibliographystyle{acmsmall}
\bibliography{bundling-bib}

% Appendix
\appendix
\section{Missing Proof: Two Consumers}
\label{apx:two-buyers}

\begin{proof}[of Proposition \ref{prop_two_consumers_ub}]
We first claim that we can assume without loss of generality that there are only two items in the market (i.e., $m=2$). This is true since we can always bundle the $m$ items into (at most) two bundles, according to the optimal solution.
That is, each of the two bundles would consist of the items allocated to one of the consumers in the optimal allocation, and the new values would be the values over these bundles. Clearly, this does not affect the value of $\opt$, and any \fcwe{} that is produced from these bundles is a valid \fcwe{} in the original market.

Therefore, it is sufficient to prove the theorem for an instance with two consumers, $1$ and $2$, and two items, $a$ and $b$. Let $\xa,\xb,\xab$ denote consumer $1$'s values for $a$, $b$, and the bundle $\{a,b\}$, respectively, and let $\ya,\yb,\yab$ denote the analogous values of consumer $2$.
Without loss of generality, we may assume that
\[\opt = \xa + \yb, \] as otherwise the optimal solution is supported by a \fcwe{}. Also assume (without loss of generality) that $\xa \geq \yb$. In addition, we may assume that one of the consumers has a subadditive valuation, and the other has a superadditive valuation. Indeed, for the case of two items, the class of subadditive valuations coincides with gross substitutes valuations, which always admits an item-price \we{}, and therefore obtains the optimal welfare.
Likewise, if both consumers are superadditive, then there always exists a \fcwe{} that obtains the optimal welfare.

Clearly, if either $\xab \geq \frac{2}{3}\cdot \opt$ or $\yab \geq \frac{2}{3}\cdot \opt$, then there exists a \fcwe{} that obtains at least $\frac{2}{3}$ of the welfare by bundling the two items together. Therefore, in the remainder of the proof we assume that both values are strictly smaller than $\frac{2}{3}\cdot \opt$. Substituting $\opt=\xa+\yb$, we get
\begin{align}
\xab , \yab < \frac{2}{3}(\xa + \yb). \label{eq_grand_bundle_small}
\end{align}

By monotonicity (i.e., since $\xa \leq \xab$), it also holds that
\begin{align} \label{eq_y_b_big}
\yb \geq \frac{1}{2}\xa.
\end{align}

We now distinguish between two cases, and establish the existence of an item-priced \we{} for both of them. 
Recall that, by the first welfare theorem, if a \we{} exists, then it obtains the optimal welfare.

\vspace{0.1in}\noindent \bf Case 1: \rm Consumer $1$ is superadditive, consumer $2$ is subadditive  ($\xab \geq \xa + \xb$, \; $\yab \leq \ya + \yb$).   \\Combining the superadditivity of consumer $1$ with Eq. \eqref{eq_grand_bundle_small} implies that either $\xa < \frac{1}{3}\xa + \frac{1}{3}\yb $ or $ \xb < \frac{1}{3}\xa + \frac{1}{3}\yb$. Since the first inequality contradicts the assumption that $\xa \geq \yb$, it follows that
\begin{align} \label{eq_x_b_small}
\xb < \frac{1}{3}\xa + \frac{1}{3}\yb.
\end{align}
Let $p_a=\xa$ and $p_b= \frac{1}{3}\xa + \frac{1}{3}\yb.$
We claim that allocating item $a$ to consumer $1$ for a price of $p_a$ and item $b$ to consumer $2$ for a price of $p_b$ forms a \we{}.

Consumer $1$ obtains $0$ utility from his allocation, so we need to show that no other allocation gives him a strictly positive utility. The utility consumer $1$ derives from item $b$ is $u_1(b) = \xb - p_b = \xb-(\frac{1}{3}\xa + \frac{1}{3}\yb)$, which is negative by  Eq. \eqref{eq_x_b_small}.
For the bundle $\{a,b\}$, it holds that
\begin{align} \label{eq_u_1_ab}
u_1(\{a,b\}) = \xab - (p_a+p_b) < \frac{2}{3}\xa + \frac{2}{3}\yb - \xa - \frac{1}{3}\xa - \frac{1}{3}\yb =
\frac{1}{3}\yb - \frac{2}{3}\xa \leq 0,
\end{align}
where the first inequality follows by Eq. \eqref{eq_grand_bundle_small}, and the last inequality follows by the assumption that $\xa \geq \yb$.

For consumer $2$, it holds that $u_2(b) = \yb-p_b = \yb - (\frac{1}{3}\xa + \frac{1}{3}\yb)$, which is non-negative by Eq.
\eqref{eq_y_b_big}.
We next show that $u_2(a)<u_2(b)$.
It holds that
\[  u_2(a) - u_2(b) = \ya - p_a - (\yb - p_b) \leq \yab - p_a - \yb + p_b <
\frac{2}{3}\xa + \frac{2}{3}\yb - \xa - \yb + \frac{1}{3}\xa + \frac{1}{3}\yb = 0,\]
where the first inequality follows by monotonicity (in particular, $\ya \leq \yab$), and the second inequality follows by substituting the payments.
Finally, it holds that $u_2(\{a,b\}) < 0$, by the same calculation as in \eqref{eq_u_1_ab} (replacing $\xab$ with $\yab$).
We conclude that the proposed allocation and payments constitutes a \we{}, as promised.

\vspace{0.1in}\noindent \bf Case 2: \rm Consumer $1$ is sub-additive, consumer $2$ is superadditive  ($\xab \leq \xa + \xb$, \; $\yab \geq \ya + \yb$).   \\

The existence of a \we{} in this case is established in a similar way to case 1.
Combining the superadditivity of consumer $2$ with assumption \eqref{eq_grand_bundle_small} implies that either $\yb < \frac{1}{3}\xa + \frac{1}{3}\yb$ or $\ya < \frac{1}{3}\xa + \frac{1}{3}\yb$.
Since the former case contradicts Eq. \eqref{eq_y_b_big}, it follows that
\begin{align} \label{eq_y_a_big}
\ya < \frac{1}{3}\xa + \frac{1}{3}\yb.
\end{align}

Let $p_a=\frac{1}{3}\xa + \frac{1}{3}\yb$ and $p_b= \yb.$
We claim that allocating item $a$ to consumer $1$ for a price of $p_a$ and item $b$ to consumer $2$ for a price of $p_b$ forms a \we{}.
For consumer $2$, $u_2(b)=\yb-p_b=0$. We show that consumer 2's valuation for any other allocation is non-positive.
For item $a$, $u_2(a) = \ya - p_a = \ya -  \left(\frac{1}{3}\xa + \frac{1}{3}\yb \right) < 0$, where the last inequality follows by Eq. \eqref{eq_y_a_big}.
For the bundle $\{a,b\}$, it holds that
\[ u_2(\{a,b\}) = \yab - (p_a+p_b) < \frac{2}{3}\xa + \frac{2}{3}\yb - \frac{1}{3}\xa - \frac{1}{3}\yb - \yb  = \frac{1}{3}\left(\xa-2\yb \right) \leq 0, \]
where the first inequality follows by Eq. \eqref{eq_grand_bundle_small} and the last one follows by Eq. \eqref{eq_y_b_big}.

It remains to show that the same holds with respect to consumer $1$.
$u_1(a)=\xa-p_a = \xa-(\frac{1}{3}\xa + \frac{1}{3}\yb)= \frac{2}{3}\xa - \frac{1}{3}\yb$, which is non-negative by the assumption that $\xa \geq \yb$.
For the bundle $\{a,b\}$, the same calculations as for consumer $2$ (replacing $\yab$ with $\xab$) shows that $u_1(\{a,b\}) < 0$.
Finally, we show that $u_1(b)<u_1(a)$.
Indeed,
\[ u_1(b) - u_1(a) = \xb - p_b - (\xa - p_a) \leq
\xab - p_b - \xa + p_a < \frac{2}{3}\xa + \frac{2}{3}\yb - \yb -\xa + \frac{1}{3}\xa + \frac{1}{3}\yb = 0, \]
where the first inequality follows by monotonicity and the last one follows by Eq. \eqref{eq_grand_bundle_small}.

\vspace{0.1in} We conclude that it is either the case that one of the consumer derives at least $\frac{2}{3}$ of the optimal welfare from the grand bundle, or there exists an item-price \we{}, in which case the optimal welfare is obtained. Thus, the assertion of the theorem is established.
\end{proof}

\section{Missing Proofs: Preperations}
\label{apx:prep}

\begin{proof}[of Lemma \ref{lem:fcwe-from-partial}]
Fix some $\epsilon>0$. For every consumer $i\in N'$ for which $S_i$ is not empty, define a new valuation $v^\epsilon_i$ that is identical to $v_i$ except for a shift of $\epsilon$ in the value of $S_i$, i.e., $v^\epsilon_i(S_i)=v_i(S_i)+\epsilon$ (the new valuation may no longer be monotone). For every other consumer $i$ simply set $v^{\epsilon}_i=v_i$. Observe that the \pcbe{} is still a \pcbe{} with respect to the $v^\epsilon_i$'s. Now apply Theorem \ref{lemma-FGL} to get a bundling $\mathcal B'$, allocation $(S'_1,\dots,S'_n)$ and prices $\vec{p'}$. We show that since we started with a \pcbe{}, these bundling, allocation and prices form a \fcwe{} with respect to the $v^\epsilon_i$'s, that is, all bundles in $\mathcal B$ are allocated. By the latter fact and by properties (\ref{item-FGL-prices-up}) and (\ref{item-FGL-profit-maximization}) of Theorem \ref{lemma-FGL}, $\sum_{i\in N} v_i(S'_i)\geq \sum_{B\in\mathcal{B}} p_B$.

To show that we get a \fcwe{}, since property (\ref{item-FGL-profit-maximization}) of Theorem \ref{lemma-FGL} is guaranteed, the only missing component is to show market clearance, i.e., that $\cup_i S_i=M$. Suppose towards a contradiction that there is a bundle $B\in \mathcal B$ that was not allocated. Let $i$ be the consumer that was allocated that bundle in the \pcbe{}. Observe that under the prices of the \pcbe{}, $B$ is the most profitable bundle of $i$. Now since $B$ is unallocated, its price remaines the same by property (\ref{item-FGL-unallocated}) of Theorem \ref{lemma-FGL}, while the prices of the other bundles can only increase by properties (\ref{item-FGL-prices-up}) and (\ref{item-FGL-unallocated}). Thus $B$ is \emph{the} most preferred bundle for $i$ (with valuation $v_i$ it is only a most preferred bundle), and by property (\ref{item-FGL-profit-maximization}) consumer $i$ must be allocated this bundle.

We would now like to show the existence of a \fcwe{} with respect to the $v_i$'s and not just with respect to the $v_i^{\epsilon}$'s. When taking $\epsilon$ to $0$, we get an infinite sequence of allocations and prices. Since the number of allocations is finite and since all prices are bounded between $0$ and $\max\{\max_iv_i(M), \max_B(p_B)\}$, there exists a subsequence in which one allocation $\tilde S$ repeats and the prices converge to a price vector $\tilde p$. Note that it still holds that $\sum_{i\in N} v_i(\tilde S_i)\geq \sum_{B\in\mathcal{B}} p_B$.

To finish the proof we now claim that this allocation $\tilde S$ and prices $\tilde p$ are a \fcwe{} with respect to the $v_i$'s. Observe that for every $\epsilon$ in the converging subsequence, if consumer $i$ receives $\tilde S_i$, %and pays $\tilde p_{\tilde S_i}$, 
then $\tilde S_i$ is the \emph{unique} bundle that maximizes his profit; otherwise, for smaller values of $\epsilon$, $\tilde S_i$ is no longer the most profitable bundle for $i$ in contradiction to the assumption that we have a \fcwe{} for the $v_i^{\epsilon}$'s. Since this is true for every $\epsilon>0$, for $\epsilon=0$ we get that $\tilde S_i$ is one of the most profitable bundles for $i$, which is enough to prove that $\tilde S$ and $\tilde p$ form a \fcwe{} with respect to the bundling $\mathcal B$.
\end{proof}

\begin{proof}[of Lemma \ref{lem:promising}]
We will show that after applying Theorem \ref{lemma-FGL} to the \promising priced bundling, all bundles in $\mathcal B$ are allocated. Assume towards contradiction there is a bundle $B\in \mathcal B$ that is not allocated. Since $B$ is unallocated its price remaines unchanged by property (\ref{item-FGL-unallocated}) of Theorem \ref{lemma-FGL}. But there are at least $|\mathcal{B}|$ consumers for which $B$ is profitable, hence for property (\ref{item-FGL-profit-maximization}) to hold, each of these consumers must be allocated an alternative bundle (otherwise their profit would be $0$). However, there are only $|\mathcal B|-1$ bundles except for $B$, a contradiction.
\end{proof}

\begin{proof}[of Lemma \ref{lem:logarithmic}]
Let $W = \sum_{i\in N}{v_i(S_i)}$ be the welfare of allocation $S$, and recall $\mu=\min\{n,m\}$. We define an auxiliary allocation $S''$ to be almost the same allocation as $S$, but without parts contributing little to the total welfare: for every $i\in N$, if $v_i(S_i) < W/2\mu$ then set $S''_i=\emptyset$, otherwise set $S''_i=S_i$. The welfare does not decrease by much:
$$
\sum_{i\in N}{v_i(S''_i)} \ge \frac{1}{2}W = \frac{1}{2} \sum_{i\in N}{v_i(S_i)}.
$$ 

We now partition consumers into bins as follows: put consumer $i$ into a bin corresponding to value $v$ if $v_i(S''_i) \in [v,2v)$. Notice that by construction of $S''$ it holds that $\forall i : v_i(S''_i) \in [W/2\mu,W]$, and so  there are at most $\log(2\mu)+1=\log(\mu)+2$ bins. Denote the set of consumers in the bin corresponding to the value $v$ by $N_v$. There must be some value $v^*$ such that the total value of consumers in $N_{v^*}$ is at least the overall total value $\sum_{i\in N}{v_i(S''_i)}$ divided by $\log(\mu)+2$. The proof is concluded by setting $S'_i=S''_i$ for every $i\in N_{v^*}$, and $S'_i=\emptyset$ otherwise. Observe that the construction requires $\poly(m,n)$ time and uses only value queries.
\end{proof}

\section{Missing Proofs: Revenue}
\label{apx:revenue}

\subsection{Proof of Theorem \ref{thm:matroid_rev}}

We divide the proof of Theorem \ref{thm:matroid_rev} into two parts, proving first the bound for uniform-rank functions and then for the case where all valuations have the same matroid. We begin with a lemma that will be useful in both parts.
For this lemma we use the following notion:
A competitive equilibrium with reserve $q$ is a competitive equilibrium in which the price of every item is at least $q$.
Specifically, while in a standard competitive equilibrium it is required that an item that is not sold has price $0$, here an item that is not sold has price $q$.

\begin{lemma}
\label{lem:WE-high-rev-extra}
In every combinatorial market with valuations that are weighted matroid rank functions,
there exists some price $q$ such that there exists a competitive equilibrium with reserve $q$ that extracts revenue of at least $\Omega(\frac {1 }{\log m})$ of the optimal welfare.
\end{lemma}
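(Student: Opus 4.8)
The plan is to recast the lemma as an ``area under the supply curve equals optimal welfare'' statement, exploiting the fact that weighted matroid rank functions (and additive valuations) are gross substitutes, so that every market assembled from them possesses a competitive equilibrium. I would model a reserve price $q\ge 0$ by adjoining a dummy consumer whose valuation is additive with value $q$ per item. All valuations in the resulting market are gross substitutes, so it admits a competitive equilibrium, which by the first welfare theorem is efficient. For any item $j$ the dummy holds, weak optimality of its bundle forces $p_j\le q$, and for any sold item $j$ it forces $p_j\ge q$; raising each dummy-held item's price up to exactly $q$ -- which cannot disturb a real consumer's demand, being merely a price increase on items they do not hold -- turns this into a competitive equilibrium with reserve $q$ of the original market, whose revenue is at least $q$ times the number of items sold to the real consumers. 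It therefore suffices to produce a single reserve price $q$ for which that product is $\Omega(\opt/\log m)$.

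Write $W(R):=\opt_R$ for the optimal welfare obtainable when only the items in $R$ may be allocated to the real consumers; $W$ is monotone. The efficient welfare of the $q$-augmented market equals $\max_{R\subseteq M}\big(q(m-|R|)+W(R)\big)=qm+\phi(q)$, where $\phi(q):=\max_{R\subseteq M}\big(W(R)-q|R|\big)$, and in an efficient augmented allocation the set $R$ of items sold to real consumers is a maximizer of $W(R)-q|R|$; choosing the largest such maximizer (still an efficient, hence price-supported, allocation) makes the number of sold items equal to $\sigma(q):=\max\{|R| : W(R)-q|R|=\phi(q)\}$. Now $\phi$ is a maximum of finitely many affine functions of $q$, hence convex and piecewise linear, with $\phi(0)=W(M)=\opt$ and $\phi(q)=0$ for $q>\opt$ (since $W(R)\le\opt$ makes $W(R)-q|R|<0$ for $R\ne\emptyset$), whence $\sigma(q)=0$ for $q>\opt$ as well. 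Moreover $\sigma$ is non-increasing: if $R$ maximizes $W(\cdot)-q|\cdot|$ and $R'$ maximizes $W(\cdot)-q'|\cdot|$ with $q'>q$, then adding the two optimality inequalities yields $(q'-q)(|R|-|R'|)\ge 0$. On the interior of each linearity interval of $\phi$ every maximizer has the same size, namely $\sigma(q)$, and the slope of $\phi$ there equals $-\sigma(q)$; hence $\int_0^\infty\sigma(q)\,dq=\phi(0)-\lim_{q\to\infty}\phi(q)=\opt$.

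The remainder is a dyadic averaging. From $\sigma\le m$ and the identity above, $\int_{\opt/(2m)}^{\opt}\sigma(q)\,dq\ge\opt-\tfrac{\opt}{2m}\cdot m=\tfrac{\opt}{2}$. Cover $[\opt/(2m),\opt]$ by the at most $\log_2 m+2$ dyadic intervals $\big[2^{k}\tfrac{\opt}{2m},\,2^{k+1}\tfrac{\opt}{2m}\big)$; by averaging, some such interval $[a,2a)$ carries $\int_a^{2a}\sigma(q)\,dq\ge\tfrac{\opt}{2(\log_2 m+2)}$, and monotonicity of $\sigma$ gives $a\cdot\sigma(a)\ge\int_a^{2a}\sigma(q)\,dq\ge\Omega(\opt/\log m)$. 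Taking $q:=a$, the competitive equilibrium with reserve $q$ constructed above sells $\sigma(a)$ items at price at least $a$ each, so it extracts revenue at least $a\,\sigma(a)=\Omega(\opt/\log m)$, proving the lemma.

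The step I expect to require the most care in a full write-up is the bookkeeping of the second paragraph -- verifying that the price-renormalized efficient allocation of the augmented market genuinely is a competitive equilibrium with reserve $q$, that every sold item costs at least $q$, and that $-\phi$ is the integral of $\sigma$ (handling the finitely many breakpoints of $\phi$, where the maximal maximizer can be strictly larger than the active one). Everything else is elementary. I would also flag that gross-substitutability is invoked only to guarantee existence of the augmented equilibrium; this is precisely why the statement is confined to weighted matroid rank valuations and does not obviously extend to broader classes.
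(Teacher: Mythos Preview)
Your proof is correct and takes a genuinely different route from the paper's. Both arguments open identically---adjoin an additive dummy consumer with value $q$ per item, use that gross-substitutes markets admit a competitive equilibrium, and observe that the real consumers' side of such an equilibrium (after normalizing dummy-held prices up to $q$) is a competitive equilibrium with reserve $q$. They diverge in how $q$ is chosen and how the number of sold items is controlled. The paper first applies a variant of Lemma~\ref{lem:logarithmic} to bucket the \emph{items} of a fixed optimal allocation by their singleton values, obtaining a set $M_v$ with $v\cdot|M_v|\ge \opt/O(\log m)$; it then uses the matroid exchange property to argue that at least $|M_v|$ items are sold in a suitably chosen reserve-$v$ equilibrium. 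You instead bucket the \emph{reserve prices}: defining $\sigma(q)$ as the size of the largest welfare-maximizing $R$ in the augmented market, you recover $\int_0^\infty\sigma=\opt$ from the convex-conjugate structure of $\phi$, and a dyadic pigeonhole over $[\opt/2m,\opt]$ yields $a\,\sigma(a)=\Omega(\opt/\log m)$; the second welfare theorem for gross substitutes lets you realize the maximizer of size $\sigma(a)$ as an actual equilibrium allocation. Your argument never touches the matroid exchange axiom---it uses only that every efficient allocation in a gross-substitutes market is price-supported---so it in fact establishes the lemma for \emph{all} gross-substitutes valuations, not just weighted matroid rank. The paper's route is more combinatorial and more constructive (the reserve is read off the optimal allocation directly), but is genuinely tied to the matroid structure.
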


\begin{proof}
Following a similar argument to the one developed in Lemma \ref{lem:logarithmic}, there exists a value $v$ and an allocation $O'=(O'_1,\ldots, O'_n)$ such that for every item $j \in O'_i$, $v_i(\{j\}) \in [v,2v)$, and the welfare of $O'$ is a logarithmic approximation to the welfare of the optimal allocation.
Let $M_v = \bigcup_{i}O'_i$ be the items allocated in $O'$.
Then, the social welfare obtained in $O'$ is at most $2vM_v$, therefore we get
\begin{equation}
\label{eq:lb-opt}
2vM_v \geq OPT/O(\log m).
\end{equation}

Consider a modified market with an additional consumer $n+1$ who values every item at exactly $v$ (i.e., additive valuation).
Since an additive valuation satisfies the gross substitutes condition, this new market admits a competitive equilibrium.
Note that every item that is allocated to one of the original consumers (i.e., any consumer other than consumer $n+1$) has a price of at least $v$ (otherwise consumer $n+1$ is better off purchasing this item). Similarly, every item that is allocated to consumer $n+1$ has a price of exactly $v$.
Therefore, a competitive equilibrium in this modified market corresponds to a competitive equilibrium with reserve $v$ in the original market.
Among all competitive equilibria with reserve $v$, consider one that minimizes the number of unallocated items,
and let $A=(A_1, \ldots, A_n)$ denote the allocation in this equilibrium.

Since the price of every allocated item is at least $v$, due to Equation (\ref{eq:lb-opt}) it is sufficient to show that at least $M_v$ items are allocated in $A$.
Let $i$ be a consumer for whom $|A_i| < |O'_i|$.
By the exchange property of a matroid function, there exist $|O'_i| - |A_i|$ items that can be added to $A_i$, and for every such item $j \in O'_i \setminus A_i$, it holds that $v_i(\{j\}) \geq v$.
We claim that these items must be allocated in $A$.
Indeed, if not, then the price of $j$ is $v$, and since $v_i(\{j\}) \geq v$ it contradicts the minimality of the number of unallocated items in $A$.
The assertion of the lemma follows.
\end{proof}

The last lemma essentially proves the existence of a competitive equilibrium with respect to a \emph{subset} of the items that extracts the desired revenue.
The challenge is, therefore, to reallocate the unallocated items among the original consumers in a way that preserves the profit maximization property.
Note that we need not extract any revenue from these items, since the revenue is sufficiently high even without selling these items.

\subsubsection{Proof for Uniform Matroids}

Lemma \ref{lem:WE-high-rev-extra} establishes the existence of a competitive equilibrium with reserve $v$ that guarantees
a logarithmic fraction of the optimal welfare in revenue. Let $W$ and $p$ denote the allocation and prices of such an equilibrium, and let $M_W \subseteq M$ denote the allocated items in $W$.
In addition, we denote by $E$ (respectively, $\bar{E}$) the consumers who exhaust (resp., do not exhaust) their ranks.
i.e., $E= \{ i \in N \; \mid  \; |W_i| = k_i \}$.
Finally, let $M_E$ and $M_{\bar{E}}$ denote the items allocated to consumers in $E$ and $\bar{E}$, respectively.

We distinguish between two cases.

\vspace{0.1in} \noindent Case (a): Most of the revenue of $W$ comes from consumers in $E$; i.e.,
$\sum_{j\in M_E }{p(j)} \geq \frac{1}{2}\cdot \sum_{j\in M_W}{p(j)}$.

\vspace{0.1in} \noindent Consider the market that contains consumers in $E$ and the items $M_E$.
Observe that this market is supported by a competitive equilibrium with reserve $v$ in which all consumers exhaust their ranks (simply allocate the bundle $W_i$ to every consumer $i \in E$ at a price of $p(j)$).
Consider next the market that contains the consumers in $E$, but all items $M$.
We claim that this extended market also admits a competitive equilibrium with reserve $v$ in which all consumers exhaust their ranks. 
This is established by the following claim.

\begin{claim}
Consider a market with consumers with valuations that are weighted rank functions of a uniform matroid, and a set of items $M'$. Suppose this market admits a competitive equilibrium with reserve $v$ in which all consumers exhaust their ranks. 
Then, the market that contains the same set of consumers, and a set of items $M$ such that $M' \subseteq M$, also admits a competitive equilibrium with reserve $v$, in which all consumers exhaust their ranks.
\end{claim}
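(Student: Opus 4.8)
The plan is to use the auxiliary-consumer trick of Lemma~\ref{lem:WE-high-rev-extra} to turn ``reserve-$v$ competitive equilibrium'' into an ordinary competitive equilibrium of a gross-substitutes market, pass to a canonically chosen such equilibrium on $M$, and then rule out under-filled consumers using the exchange property of uniform matroids. Concretely, introduce an auxiliary consumer $n+1$ whose valuation is additive with value exactly $v$ per item. Since uniform-matroid rank functions and additive valuations are gross substitutes, the market on $M$ with consumers $N\cup\{n+1\}$ has a competitive equilibrium; moreover, after raising any below-$v$ prices to $v$ (which preserves equilibrium), every item held by a real consumer is priced $\ge v$ and every item held by $n+1$ is priced exactly $v$, so such an equilibrium is precisely a reserve-$v$ competitive equilibrium of the original market on $M$. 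Among all of these I would fix one, with allocation $\hat A$ and prices $\hat p$, that minimizes the number of items held by $n+1$ (equivalently, the number of unsold items), exactly as in the proof of Lemma~\ref{lem:WE-high-rev-extra}.

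Next I would use the hypothesis. In the given reserve-$v$ equilibrium on $M'$ every consumer $i$ holds a bundle of size exactly $k_i$, and each of its items $j$ has $w_{ij}\ge v$ (otherwise dropping $j$ would strictly increase profit); hence $i$ owns at least $k_i$ items of weight $\ge v$ within $M'\subseteq M$. Now suppose for contradiction that in $(\hat A,\hat p)$ some consumer $i$ has $|\hat A_i|<k_i$. By the exchange property (uniformity), $i$ may add any item outside $\hat A_i$, so at equilibrium every such item $j$ satisfies $\hat p(j)\ge w_{ij}$; in particular every item held by $n+1$ has weight $\le v$ to $i$, and in fact weight strictly below $v$, since an item of weight exactly $v$ held by $n+1$ could be reassigned to $i$ at zero marginal profit, contradicting minimality of the number of items held by $n+1$. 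Combining with the previous sentence, at least one weight-$\ge v$ item $j_0$ of consumer $i$ is held in $(\hat A,\hat p)$ by some other real consumer $i'$, at a price $\hat p(j_0)\ge w_{ij_0}\ge v$.

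The final step, which I expect to be the main obstacle, is to derive a contradiction from this configuration. The natural approach is an augmenting / re-equilibration argument: transfer $j_0$ from $i'$ towards $i$ and re-equilibrate among the remaining real consumers and $n+1$ (still a gross-substitutes market), following the induced chain of transfers; one shows that this chain must either strictly decrease the number of items held by $n+1$, contradicting the choice of $(\hat A,\hat p)$, or --- after splitting each consumer $i$ into $k_i$ unit-demand agents with weights $w_{ij}$, which is legitimate because $v_i$ is a uniform-matroid rank function --- complete an augmenting path that strictly increases the number of matched agents without decreasing welfare, again a contradiction. The delicate point is to carry out this chain argument while maintaining the reserve-$v$ equilibrium structure (profit maximization for \emph{every} consumer, and prices $\ge v$) at each step and showing the process is finite; uniformity of the matroids, equivalently the unit-demand reduction, is exactly what keeps the exchanges clean. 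Once $|\hat A_i|<k_i$ has been ruled out for all $i$, the equilibrium $(\hat A,\hat p)$, with $n+1$ deleted, is the desired reserve-$v$ competitive equilibrium on $M$ in which every consumer exhausts their rank.
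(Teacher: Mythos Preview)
Your proposal contains all of the right ingredients---the auxiliary additive consumer, the unit-demand splitting of each uniform-matroid consumer into $k_i$ copies, and an augmenting-path argument---but you assemble them in a more complicated way than the paper does, and (as you yourself flag) the final ``chain of transfers'' step is left unfinished.

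The paper's route is more direct and avoids both the minimality trick and the contradiction framing. It works entirely on the \emph{allocation} side. After splitting each consumer $i$ into $k_i$ unit-demand copies, welfare maximization on any item set becomes a maximum-weight bipartite matching problem; by the first welfare theorem, the given reserve-$v$ equilibrium on $M'$ corresponds to a max-weight matching on $M'$ in which \emph{every} player-copy is matched. This matching is a feasible (not necessarily optimal) matching in the larger bipartite graph on $M$. One then passes from it to an optimal matching on $M$ by a sequence of augmenting paths; since augmenting paths never un-match a previously matched vertex, all player-copies remain matched in the resulting optimal $M$-matching. That optimal allocation is then supported by equilibrium prices (gross substitutes), giving the desired reserve-$v$ equilibrium on $M$ in which every consumer exhausts her rank.

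Two concrete lessons. First, your worry about ``maintaining the reserve-$v$ equilibrium structure at each step'' is a red herring: the paper's argument never tracks prices along the way---it works purely in the matching picture and invokes the welfare theorems only at the endpoints. Second, the key use of the hypothesis is not to extract the inequalities $w_{ij}\ge v$ that you derive, but to use the $M'$-equilibrium allocation \emph{itself} as the starting matching from which to augment. Your option (2) in the final paragraph is essentially this idea, but framed as a contradiction starting from the wrong matching~$\hat A$ (which is already max-weight, so cannot be augmented in the ordinary sense); starting instead from the hypothesized $M'$-matching makes the augmenting-path step straightforward.
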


\begin{proof}
Let $(O_1,\ldots, O_n)$ be the allocation in the competitive equilibrium in the original market and $(O'_1,\ldots, O'_n)$ be the competitive equilibrium after adding items. Recall that by the first welfare theorem both allocations maximize the welfare in the corresponding markets.

Now observe that the optimal allocation if all valuations are rank functions of a weighted uniform matroid, the optimal solution can be computed by finding the maximum weighted matching in a bipartite graph where in one side every vertex corresponds to a different item, and in the other side there are $k_i$ vertices for each consumer $i$ with valuation of rank $k_i$. The weight of an edge between a vertex that corresponds to some item $j$ and each of the vertices of consumer $i$ has weight $v_i(\{j\})$.

Now, since one market is a subset of the other, the matching that corresponds to $(O_1,\ldots, O_n)$ is a valid matching in the bipartite graph that corresponds to the extended market. In particular, since $(O'_1,\ldots, O'_n)$ is optimal for the extended market, there is a sequence of reallocation that corresponds to augmenting paths that leads from $(O_1,\ldots, O_n)$ to an optimal allocation with value equal to the value of $(O'_1,\ldots, O'_n)$. Notice that by definition of augmenting paths if all the vertices in the side that corresponds to the players where originally matched all will continue to be matched after each augmentation. This implies that in competitive equilibrium with a minimal price $v$ all consumers still exhaust their ranks, as needed.
\end{proof}

We conclude that the market with consumers in $E$ and all items $M$ admits a competitive equilibrium with reserve $v$ in which all consumers exhaust their ranks.
Let $(W',p')$ be such an equilibrium, let $T=M \setminus M_E$ be the set of items that are not allocated in this equilibrium, and let $i_{max} \in \arg\max_{i\in E} k_i$ be a consumer with maximum rank among the consumers in $E$.
We are now ready to construct a \Pcbe{} with sufficiently high revenue.
Bundle all items in $W'_{i_{max}}\cup T$ together, and allocate it to consumer $i_{max}$ for a price of $\Sigma_{j\in W'_{i_{max}}}p'(j)$.
For every other consumer $i \neq i_{max}$, bundle all items in $W'_{i}$ together and allocate it to consumer $i$ for a price of $\Sigma_{j\in W'_{i}}p'(j)$.

We next show that this is a \Pcbe{} with respect to consumers in $E$.
For $i_{max}$, it is clearly the case that $W'_{i_{max}}\cup T$ is the most profitable bundle, since its value only increased and its price remained unchanged. Consider next a consumer $i \in E$ other than $i_{max}$.
Clearly, $i$ prefers his bundle to every other bundle of consumers other than $i_{max}$.
It remains to show that $i$ prefers his own bundle to receiving $W'_{i_{max}}\cup T$ for a price of $\Sigma_{j\in W'_{i_{max}}}p'(j)$.
By the maximality of $i_{max}$, consumer $i_{max}$ originally consumed at least $k_i$ items, each at a price at least $v$. 
In contrast, the price of every unallocated item was at most $v$. 
Therefore, the price of every bundle form $W'_{i_{max}}\cup T$ of size $k_i$ or less has either increased or remained unchanged.
The price of his own allocation, however, clearly remained unchanged.
Therefore, $W'_i$ is still the most profitable bundle for consumer $i$, as desired.
It follows that the above allocation and prices form a \Pcbe{} with revenue at least $OPT/O(\log m)$.
The proof is then concluded by invoking Lemma \ref{lem:fcwe-from-partial} to reach a \fcwe with the aforementioned revenue guarantees.

\vspace{0.1in} \noindent Case (b): Most of the revenue of $W$ comes from consumers in $\bar{E}$; i.e.,
$\sum_{j\in M_{\bar{E}} }{p(j)} \geq \frac{1}{2}\cdot \sum_{j\in M_W}{p(j)}$.

\vspace{0.1in} \noindent
Let $T=M \setminus M_{\bar{E}}$ be the set of items that are not allocated to a consumer in $\bar{E}$,
and let $i_{max} = \arg\max_{i\in \bar{E}}v_i(W_i \cup T)$.
We next construct a \Pcbe{} with sufficiently high revenue.
For every consumer $i \in \bar{E}$ except for $i_{max}$, bundle the items in $W_i$ and allocate this bundle to consumer $i$ for a price of $\Sigma_{j\in W_{i}}p(j)$.
Now, let $W'_{i_{max}} = W_{i_{max}} \cup T$ be a bundle that contains the items in $W_{i_{max}}$ and in $T$.
Allocate the bundle $W'_{i_{max}}$ to consumer $i_{max}$ for a price of $v_{i_{max}}(W'_{i_{max}})$.
We claim that this is a \Pcbe{} with respect to consumers in $\bar{E}$.
consider first consumer $i_{max}$.
The utility he derives from this outcome is $0$.
We claim that this is the most profitable outcome for $i_{max}$.
Indeed, since $|W_{i_{max}}| < k_{i_{max}}$, the utility that $i_{max}$ derives from every item $j \not\in W_{i_{max}}$ under price $p(j)$ is at most $0$.
Consider next consumers other than $i_{max}$.
By the maximality of $i_{max}$ and the price assigned to $W'_{i_{max}}$, the utility every other consumer derives from this bundle is at most $0$.
It remains to show that no consumer prefers the bundle of any other consumer other than $i_{max}$.
But this follows from the fact that the allocation and prices of these bundles remained unchanged.
As before, it remains to invoke Lemma \ref{lem:fcwe-from-partial} on the \pcbe{} to obtain a \fcwe{} with respect to \emph{all} $n$ consumers, whose revenue is at least $\opt/O(\log m)$, as required.

\subsubsection{Proof for Valuations Based on the same Matroid}

We now prove Theorem \ref{thm:matroid_rev} for markets where all valuations are based on the same matroid rank function with possibly different weights.
The proof makes use of the following definition.

\begin{definition}[Extra-consumer solution]
\label{def:extra-sol}
Suppose that all the $n$ valuations are the rank function of the same matroid $(M,\cI)$ (with possibly different weights). An \emph{extra-consumer solution} is an allocation $(S_1,\dots,S_n,S_{n+1})$ of the items to $n+1$ bundles, an item price $p_j$ for every item $j$, a bundle price $p_{S_i}$ for every bundle $S_i$ of consumer $i<n+1$, and an additional price $q$, such that the following three properties hold:
\begin{enumerate}
\item For every consumer $i<n+1$, his profit from his bundle according to the bundle prices is at least as high as his profit from any set of items $U$ according to the item prices, i.e., $\forall U\subseteq M : v_i(S_i)-p_{S_i} \ge v_i(U)-\sum_{j\in U}{p_j}$.
\item For every $i<n+1$, the bundle price $p_{S_i}$ is at least as high as the price of any independent set $U\subseteq S_i$ according to the item prices, i.e., $\forall U\in\cI \text{ s.t. }U\subseteq S_i : p_{S_i} \ge \sum_{j\in U}{p_j}$.

\item Every item $j\in S_{n+1}$ has the same price $p_j=q$, and the price of every item $j\notin S_{n+1}$ is at least as high: $p_j\ge q$.
\end{enumerate}
\end{definition}

The following lemma shows how to obtain a \fcwe{} from an extra-consumer solution. Notice that the revenue of the \fcwe{} is the revenue from the consumers in $N$ according to the bundle prices.

\begin{lemma}
\label{lem:extra-sol}
Consider an extra-consumer solution with allocation $(S_1,\dots,S_n,S_{n+1})$, item prices $\{p_j\}_{j\in M}$, bundle prices $\{p_{S_i}\}_{i=1}^{n+1}$, and an additional price $q$. If either $S_{n+1}=\emptyset$ or $q=0$ then we get a \fcwe{} as follows: the partition of items to bundles is $(B_1,\dots,B_n,B_{n+1})$ where $B_i=S_i$, the allocation of bundles to consumers is $(S_1,\dots,S_n)$, and the bundle prices are $p_{B_i}=p_{S_i}$ for $i<n+1$, where $p_{S_i}= 0$ for $i=n+1$.
\end{lemma}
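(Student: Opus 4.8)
The plan is to check the two requirements of a \fcwe{} — market clearance and profit maximization — against the three properties of the extra-consumer solution, using the matroid structure of the valuations in an essential way for the profit-maximization step.

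For market clearance, I would first note that $(B_1,\dots,B_{n+1})$ is a partition of $M$ by construction, so allocating $B_1,\dots,B_n$ to consumers $1,\dots,n$ already clears the market when $S_{n+1}=\emptyset$. In the case $q=0$, the leftover bundle $B_{n+1}$ has price $p_{B_{n+1}}=0$, so (exactly as in the proof of Lemma~\ref{lem:CAP2}) I would simply hand it to an arbitrary consumer; by monotonicity and since its price is $0$ this only raises that consumer's payoff, so it does not disturb the profit-maximization check below.

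For profit maximization, I would fix a consumer $i\le n$ and an arbitrary set of bundles $\mathcal T\subseteq\{B_1,\dots,B_{n+1}\}$, write $U=\bigcup_{B\in\mathcal T}B$, and invoke the fact that a weighted matroid rank function attains its value on $U$ at some independent set $T^*\in\cI$ with $T^*\subseteq U$ and $v_i(T^*)=v_i(U)$ (a maximum-weight independent subset of $U$; non-negativity of the weights gives $v_i(T^*)=\sum_{j\in T^*}w^i_j=v_i(U)$). Property~1 of the extra-consumer solution, applied to the item set $T^*$, then gives $v_i(S_i)-p_{S_i}\ge v_i(T^*)-\sum_{j\in T^*}p_j=v_i(U)-\sum_{j\in T^*}p_j$, so it remains to prove $\sum_{j\in T^*}p_j\le\sum_{B\in\mathcal T}p_B$. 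Here the sets $T^*\cap B$ over $B\in\mathcal T$ partition $T^*$, each is independent (a subset of $T^*\in\cI$, which is downward closed) and contained in the corresponding bundle; so Property~2 bounds $\sum_{j\in T^*\cap B_\ell}p_j\le p_{S_\ell}=p_{B_\ell}$ for $\ell\le n$, while for $B_{n+1}$ the hypothesis — $S_{n+1}=\emptyset$, or $q=0$ together with Property~3 forcing $p_j=q=0$ on $S_{n+1}$ — yields $\sum_{j\in T^*\cap B_{n+1}}p_j=0=p_{B_{n+1}}$. Summing over $\mathcal T$ gives the inequality, hence profit maximization for consumer~$i$ (if $B_{n+1}$ was appended to consumer~$1$, replace $v_i(S_i)$ by $v_i(S_i\cup S_{n+1})\ge v_i(S_i)$ on the left, which preserves it).

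The step I expect to be the crux is precisely this reduction: Property~1 only rules out deviations priced \emph{by items}, whereas a \fcwe{} must rule out deviations to unions of whole \emph{bundles}, and a priori a consumer could gain by assembling a cheap collection of bundles whose union has high value but no comparably valuable independent subset. The matroid structure is exactly what forbids this — it lets me replace $U$ by an independent witness $T^*$ and then split the item-price bound bundle-by-bundle via Property~2 — and the assumption ``$S_{n+1}=\emptyset$ or $q=0$'' is what is needed to cover the single bundle $B_{n+1}$ whose price was reset to $0$ and on which Property~2 gives nothing.
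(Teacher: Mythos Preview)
Your proof is correct and follows essentially the same route as the paper: pick an independent witness $T^*$ for the value of the deviating bundle union, apply Property~1 at the item level to $T^*$, then use Property~2 bundle-by-bundle (with the hypothesis $S_{n+1}=\emptyset$ or $q=0$ handling $B_{n+1}$) to compare item prices to bundle prices. The paper phrases the $B_{n+1}$ case as a ``strengthened Property~2'' rather than a separate case split, and leaves the reallocation of $B_{n+1}$ implicit, but the logic is the same; your explicit handling of market clearance and of the consumer who receives $B_{n+1}$ is, if anything, slightly cleaner.
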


\begin{proof}
First note that $B_{n+1}=S_{n+1}$ is the only unallocated bundle in the \fcwe{}, and it is either empty or has price $0$. Also note that since $q=0$ and by property (3) of the extra-consumer solution, we have a \emph{strengthened} version of property (2), which holds not only for every $i<n+1$, but also for $i=n+1$ when $p_{S_{n+1}}$ is set to zero.

Now assume for contradiction that there is a set of bundles $T$ such that consumer $i<n+1$ would prefer $T$ to his allocation $S_i$, i.e.,  $v_i(S_i)-p_{S_i} < v_i(T)-\sum_{B_\ell\in T}{p_{B_\ell}} = v_i(T)-\sum_{S_\ell\in T}{p_{S_\ell}}$ (in particular, bundle $S_{n+1}$ may belong to $T$). Consider the value $v_i(T)$. There is an independent set $U\subseteq \cup_{S_\ell \in T} S_\ell$ such that $v_i(T)=v_i(U)$. We get the following contradiction:
\begin{eqnarray*}
v_i(T) - \sum_{S_\ell\in T}{p_{S_\ell}}
&\le&
v_i(U) - \sum_{S_\ell\in T} \sum_{j\in S_\ell \cap U}{p_j}\\
& = &
v_i(U) - \sum_{j\in U}{p_j}\\
&\le &
v_i(S_i)-p_{S_i},
\end{eqnarray*}
where the first inequality is by the strengthened version of property (2), and the second inequality is by property (1) of the extra-consumer solution.
\end{proof}

Thus, our goal is to show the existence of an extra-consumer solution, in which either $S_{n+1}=\emptyset$  or the price $q$ is zero. In addition, we would like the revenue from the consumers in $N$ according to the bundle prices to be a logarithmic approximation to $\opt$. By Lemma \ref{lem:extra-sol} this is sufficient to complete the proof of the theorem. To prove existence we proceed iteratively; the next lemma shows the existence of a good starting point.

\begin{lemma}
\label{lem:starting-point}
There exists an extra-consumer solution whose revenue from the consumers in $N$ according to the bundle prices is at least $\opt/\log(m)$.
\end{lemma}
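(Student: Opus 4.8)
The plan is to repackage the reserve-price equilibrium of Lemma~\ref{lem:WE-high-rev-extra} directly as an extra-consumer solution. That lemma hands us a price $q$ together with a competitive equilibrium with reserve $q$ in the $n$-consumer market: an allocation $(A_1,\dots,A_n)$ of a subset of the items to the real consumers, item prices $\{p_j\}_{j\in M}$ with $p_j \ge q$ for all $j$ and $p_j = q$ for every unsold item, and revenue (the total price of the sold items) at least $\opt/O(\log m)$. I would build the extra-consumer solution of Definition~\ref{def:extra-sol} by taking $S_i := A_i$ for $i \le n$, letting $S_{n+1}$ be the set of items left unsold, retaining the same item prices $p_j$ and the same additional price $q$, and setting each bundle price to the sum of its item prices, $p_{S_i} := \sum_{j \in S_i} p_j$ for $i \le n$.

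It then remains to check the three properties of an extra-consumer solution, all of which are essentially immediate. Property~1 is exactly the profit-maximization condition of the reserve equilibrium: for every $U \subseteq M$ we have $v_i(S_i) - p_{S_i} = v_i(A_i) - \sum_{j \in A_i} p_j \ge v_i(U) - \sum_{j \in U} p_j$. Property~2 holds because the item prices are nonnegative, so $p_{S_i} = \sum_{j \in S_i} p_j \ge \sum_{j \in U} p_j$ for every $U \subseteq S_i$, and in particular for every independent $U \subseteq S_i$; note that we do not even need the bundles $A_i$ to be independent here. Property~3 is precisely the ``reserve'' structure of the equilibrium: every item costs at least $q$, and the items of $S_{n+1}$ --- which by construction are exactly the unsold items --- all cost exactly $q$.

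Finally, the revenue of this extra-consumer solution from the consumers in $N$ is $\sum_{i \le n} p_{S_i} = \sum_{i \le n} \sum_{j \in A_i} p_j$, which is exactly the revenue of the reserve equilibrium and hence at least $\opt/O(\log m)$; absorbing the constant into the logarithm gives the bound $\opt/\log m$ in the statement. I do not expect a genuine obstacle here, since the lemma is a translation between two equivalent accounting devices. The only point where the precise form of Lemma~\ref{lem:WE-high-rev-extra} matters is Property~3: it is the reserve formulation (rather than an arbitrary competitive equilibrium of the market augmented by an additive dummy bidder) that guarantees all leftover items carry one common price $q$, which is what Definition~\ref{def:extra-sol} demands and which the later iterative steps will exploit to drive $q$ down to $0$ or $S_{n+1}$ down to $\emptyset$.
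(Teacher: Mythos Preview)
Your proposal is correct and is precisely the approach the paper takes: its proof is the single sentence ``The existence of such a solution follows directly from Lemma~\ref{lem:WE-high-rev-extra}, by augmenting the competitive equilibrium with additive bundle prices,'' which is exactly your construction $p_{S_i}:=\sum_{j\in S_i}p_j$. Your verification of the three properties and the revenue bound simply spells out what the paper leaves implicit.
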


\begin{proof}
The existence of such a solution follows directly from Lemma \ref{lem:WE-high-rev-extra}, by augmenting the competitive equilibrium with additive bundle prices.
\end{proof}

We are now ready to continue the construction. We will take an extra-consumer solution with $|S_{n+1}|>0$ and $q>0$ and produce another extra-consumer solution in which either $|S'_{n+1}|<|S_{n+1}|$ or $q'=0$, making sure that the revenue did not decrease. Several such iterations are sufficient to complete the proof of existence.

Denote by $\vec{r}=(r_1,\dots,r_n)$ the ranks according to the matroid $(M,\cI)$ of the allocation $(S_1,\dots,S_n)$ of the extra-consumer solution. We consider two cases.

\paragraph{Case 1}
There is an item $j\in S_{n+1}$ and a consumer $i^*$ such that $j$ can be added to $S_{i^*}$ without increasing its rank $r_{i^*}$. If this is the case, add $j$ to $S_{i^*}$ keeping the bundle and item prices fixed. So $S'_{i^*}=S_{i^*}\cup \{j\}$, $S'_{n+1}=S_{n+1}\setminus \{j\}$ and $q'=q$. We show that the three properties of an extra-consumer solution still hold. Property (3) holds since the item prices did not change. Property (1) holds for $i\neq i^*$ since $i$'s profit did not change, and holds for $i^*$ since his profit weakly increased.
Property (2) holds for $i\neq i^*$ since $S_i$ did not change.

It is left to argue that Property (2) holds for $i^*$. Let $U\subseteq S'_{i^*}$ be an independent set in $\cI$. Assume for contradiction that $p_{S_{i^*}}<\sum_{j\in U}{p_j}$. Since we started with an extra-consumer solution, $U$ must include $j$. Let $U'\ne U$ be another independent set contained in $S'_{i^*}$ such that $|U'|=|U|$ and $j\notin U'$. Such an independent set is guaranteed to exist since the rank of $S'_{i^*}$ did not increase despite adding $j$, and is equal to the rank of $S_{i^*}$. By a well-known property of matroid independent sets, there is an item $k\in U'\setminus U$ such that $j$ can be replaced with $k$ and the set will remain independent, i.e., $U''=U\setminus\{j\}\cup\{k\}\in\cI$ \cite{Sch03}. We know that $p_{S_{i^*}}\ge\sum_{j\in U''}{p_j}$ by property (2) of the original extra-consumer solution. But $\sum_{j\in U''}{p_j} = p_k - p_j + \sum_{j\in U}{p_j}$, and by property (3) of the original extra-consumer solution, $p_k\ge p_j=q$. Thus $\sum_{j\in U''}{p_j} \ge \sum_{j\in U}{p_j}$, in contradiction to the above assumption.

\paragraph{Case 2}
For every item $j\in S_{n+1}$ and for every consumer $i$, adding $j$ to $S_i$ increases the rank $r_i$ of $S_i$. Denote $S'_i=S_i\cup\{j\}$ and let $r'_i=r_i+1$ be its rank. The increase in the rank means that $i$'s value for his bundle goes up by his weight for $j$, i.e., $v_i(S'_i)=v_i(S_i)+w_{ij}$. This is by the following argument: Clearly $i$'s value goes up by at most his weight for $j$. We know that $v_i(S_i)=v_i(U)$ for some independent set $U\subseteq S_i$, where $|U|=r_i$. We show that $U\cup\{j\}$ remains an independent set, and so $v_i(S'_i)\ge v_i(U) + v_i(\{j\})=v_i(S_i)+w_{ij}$. Assume for contradiction that $U\cup\{j\}$ is not an independent set. Then its rank $r(U\cup\{j\})$ remains $r_i$. Let $U'\subseteq S'_i$ be an independent set with size (rank) $r'_i$. By the exchange property of matroids, there is an item $k\in U'\setminus U$ that can be added to $U$ increasing its rank. By our assumption, this item cannot be $j$. But this leads to a contradiction, since $U\cup\{k\}\subseteq S_i$, and $r(U\cup\{k\})=r_i+1$.

Now let item $j^*\in S_{n+1}$ and consumer $i^*$ be such that the contribution of $j^*$ to $i^*$ is maximum over all items in $S_{n+1}$ and all consumers, i.e., $w_{i^*j^*}=\max_{i\in N,j\in S_{n+1}}\{w_{ij}\}$. Notice that $w_{i^*j^*}\le q$, otherwise property (1) of the original extra-consumer solution does not hold. Add $j^*$ to $S_{i^*}$ and raise the bundle price by $w_{i^*j^*}$, i.e., set the bundle price $p_{S'_{i^*}}=p_{S_{i^*}}+w_{i^*j^*}$. Set the item price of $j^*$ to be $p_{j^*}=w_{i^*j^*}$ and set $q'=w_{i^*j^*}$ as well. If $q'<q$, lower the prices of all items remaining in $S'_{n+1}$ to $q'$.

We now show that the three properties of an extra-consumer solution hold. Properties (2) and (3) hold by the way we modified the bundle and item prices. We first argue that property (1) holds for consumer $i^*$: His profit did not change (his value increased by $w_{i^*j^*}$ but so did his bundle price). That means that if there's an item set $U$ which he strongly prefers to his bundle $S'_{i^*}$, it must contain items other than $j^*$ whose prices decreased. The only such items are the items in $S'_{n+1}$. But since we chose $j^*$ that maximizes $w_{i^*j}$ over all $j\in S_{n+1}$, then for every $j\in S'_{n+1}$, $p_{j} = w_{i^*j^*} \ge w_{i^*j}$. This contradicts the assumption that $i^*$ strongly prefers $U$. By a similar argument, property (1) holds for every consumer $i\ne i^*$, whose profit did not change and whose value for any item with decreased price is at most the reduced price. This concludes the analysis of the second case.

\end{document}